\definecolor{light-gray}{gray}{0.95}
\declaretheoremstyle[
    spaceabove=6pt, spacebelow=6pt,
    headfont=\normalfont\bfseries,
    bodyfont=\normalfont\itshape,
    notefont=\normalfont\bfseries, notebraces={(}{)},
    postheadspace=.5em,
    mdframed={
  		hidealllines=true,
  		skipabove=10pt,
  		backgroundcolor=light-gray,
		innerrightmargin=2pt,
		innerleftmargin=2pt,
		innertopmargin=0pt,
		innerbottommargin=2pt,
 		leftmargin=0pt,
 		rightmargin=0pt,
 	 }
]{mathstyle}
\declaretheorem[name={Theorem}, 
    style=mathstyle
]{thm}
\declaretheorem[name={Lemma}, 
    style=mathstyle
]{lem}
\declaretheorem[name={Definition},
    style=mathstyle
]{defi}
\title{Access Control Encryption: \\ Enforcing Information Flow with Cryptography
\thanks{This project was supported by: the Danish National Research Foundation and The National Science Foundation of China (grant 61361136003) for the Sino-Danish Center for the Theory of Interactive Computation.}
\thanks{Full version of TCC 2016-B paper~\cite{ACE-TCC2016B}}}
\author{Ivan Damg{\aa}rd \and Helene Haagh \and Claudio Orlandi}
\institute{\{ivan,haagh,orlandi\}@cs.au.dk, Aarhus University}
\begin{document}

\maketitle
\newcommand{\todo}[1]{\textcolor{red}{(TODO: #1)}}
\newcommand{\hfh}[1]{\textcolor{green}{(Helene: #1)}}
\newcommand{\cnote}[1]{\textcolor{blue}{(Claudio: #1)}}

\newcommand{\from}{\leftarrow}
\newcommand{\zo}{\{0,1\}}
\newcommand{\zon}{\{0,1\}^n}
\newcommand{\zok}{\{0,1\}^\kappa}

\newcommand{\RR}{\mathbb{R}}
\newcommand{\NN}{\mathbb{N}}
\newcommand{\San}{\mathsf{San}}
\newcommand{\negl}{\mathsf{negl}}

\newtheorem{construction}{Construction}

\newcommand{\Setup}{\mathsf{Setup}}
\newcommand{\Gen}{\mathsf{Gen}}
\newcommand{\Enc}{\mathsf{Enc}}
\newcommand{\Dec}{\mathsf{Dec}}
\newcommand{\randDec}{\mathsf{RDec}}
\newcommand{\Prove}{\mathsf{Prove}}
\newcommand{\Verify}{\mathsf{Verify}}
\newcommand{\M}{\mathcal{M}}
\newcommand{\sen}{\mathsf{sen}}
\newcommand{\rec}{\mathsf{rec}}
\newcommand{\san}{\mathsf{san}}
\newcommand{\C}{\mathcal{C}}
\newcommand{\Sim}{\mathsf{Sim}}

\newcommand{\Rand}{\San}
\newcommand{\PKE}{\mathsf{PKE}}
\newcommand{\secparam}{\kappa}
\newcommand{\ACE}{\mathsf{ACE}}

\newcommand{\desc}{\textup{desc}}

\newcommand{\oACE}{$1$-ACE\xspace}

\newcommand{\oneACE}{\mathsf{1ACE}}
\newcommand{\oneACEpow}{\mathsf{1ACE}}
\newcommand{\EGACE}{\mathsf{EGACE}}

\newcommand{\CCG}{\mathsf{Challenge}}

\newcommand{\sanFE}{\mathsf{sFE}}
\newcommand{\sanFEpow}{\mathsf{sFE}}
\newcommand{\sanPKE}{\mathsf{sPKE}}
\newcommand{\NIZK}{\mathsf{NIZK}}
\newcommand{\PRF}{\mathsf{PRF}}

\newcommand{\R}{\mathcal{R}}
\newcommand{\MasterDec}{\mathsf{MDec}}
\newcommand{\adv}[1]{\mathsf{adv}^{#1}} 

\newcommand{\sssprob}{p_{sss}}

\newcounter{claimcounter}
\renewenvironment{claim}{
	\refstepcounter{claimcounter}
	\medskip
	\noindent \textit{Claim~\arabic{claimcounter}}. \enskip
}{

}

\begin{abstract}
We initiate the study of \emph{Access Control Encryption (ACE)}, a novel cryptographic primitive that allows fine-grained access control, by giving different rights to different users not only in terms of which messages they are allowed to \emph{receive}, but also which messages they are allowed to \emph{send}.

Classical examples of security policies for information flow are the well known Bell-Lapadula~\cite{lapadula1996secure} or Biba~\cite{biba} model: in a nutshell, the Bell-Lapadula model assigns roles to every user in the system  (e.g., \emph{public}, \emph{secret} and \emph{top-secret}). A users' role specifies which messages the user is allowed to receive (i.e., the \emph{no read-up} rule, meaning that users with \emph{public} clearance should not be able to read messages marked as \emph{secret} or \emph{top-secret}) but also which messages the user is allowed to send (i.e., the \emph{no write-down} 
rule, meaning that a malicious user with \emph{top-secret} clearance should not be able to write messages marked as 
\emph{secret} or \emph{public}). To the best of our knowledge, no existing cryptographic primitive allows for even this 
simple form of access control, since no existing cryptographic primitive enforces any restriction on what 
kind of messages one should be able to encrypt. Our contributions are:
\begin{itemize}
\item Introducing and formally defining access control encryption (ACE);
\item A construction of ACE with complexity linear in the number of the roles based on classic number theoretic assumptions (DDH, Paillier);
\item A construction of ACE with complexity polylogarithmic in the number of roles based on recent results 
on cryptographic obfuscation;
\end{itemize}
\end{abstract}

\section{Introduction}

Traditionally, cryptography has been about providing secure communication over insecure channels. We want to 
protect honest parties from external adversaries: only the party who has the decryption key can access 
the message. More recently, more complicated situations have been considered, where we do not want to trust 
everybody with the same information: depending on who you are and which keys you have, you can access 
different parts of the information sent (this can be done using, e.g., functional 
encryption~\cite{DBLP:conf/tcc/BonehSW11}).

However, practitioners who build secure systems in real life are often interested in achieving different and 
stronger properties: one wants to control the information flow in the system, and this is not just about 
what you can receive, but also about what you can send. As an example, one may think of the first security 
policy model ever proposed, the one by Bell and Lapadula \cite{lapadula1996secure}. Slightly simplified,  
this model classifies users of a system in a number of levels, from ``public''  in the bottom to 
``top-secret'' on top. Then two rules are defined: 1) ``no read-up'' -- a user is not allowed to receive
 data from higher levels and 2) ``no write-down'' -- a user is not allowed to send data to lower levels. 
 The idea is of course to ensure confidentiality: data can flow from the bottom towards the top, but not in
 the other direction. Clearly, 
both rules are necessary, in particular we need no write-down, since a party on top-secret level may try to 
send information she should not, either by mistake or because her machine has been infected by a virus.

In this paper we study the question of whether cryptography can help in enforcing such security policies.
A first thing to realize is that this problem cannot be solved without some assumptions about physical, 
i.e.,  non-cryptographic security: if the communication lines cannot be controlled, we cannot prevent a 
malicious user 
from sending information to the wrong place. We therefore must introduce a party that controls
the communication, which we will call the 
\emph{sanitizer} $\San$. We assume that all outgoing communication must pass through this party. 
$\San$ can then be instructed to do some specific processing on the messages it gets.

Of course, with this assumption the problem can be solved: $\San$ is told what the security policy is and simply blocks all messages that should not be sent according to the policy. This is actually a (simplified) model of how existing systems work, where $\San$ is implemented by the operating system and various physical security measures.

However, such a solution is problematic for several reasons: users must securely identify themselves to
$\San$ so that he can take the correct decisions, this also means that when new users join the system
$\San$ must be informed about this, directly or indirectly.  A side effect  of this is that $\San$ necessarily 
knows who sends to whom, and must of course know the security policy. This means that a company cannot outsource the function of $\San$ to  another party without disclosing information on internal activities of the company.

Therefore, a better version of our basic question is the following: \emph{can we use cryptography to simplify
the job of $\San$ as much as is possible, and also ensure that he learns minimal information?} 

To make the goal more precise, note that it is clear that $\San$ must process each message that is sent
i.e., we cannot allow a message violating the policy to pass through unchanged. But we can hope that the processing to be done does not depend on the security policy, and also not on the identities of the sender and therefore of the allowed receivers. This way we get rid of the need for users to identify themselves to $\San$. It is also clear that $\San$ must at least learn when a message was sent and its length, but we can hope to ensure he learns nothing more. This way, one can outsource the function of running $\San$ to a party that is only trusted to execute correctly.

Our goal in this paper is therefore to come up with a cryptographic notion and a construction that  reduces the sanitizer's job to the minimum we just described.
To the best of our knowledge, this problem has not been studied before in the cryptographic literature, and it is easy to see that existing constructions only solve ``half the problem'': we can easily control which users you can \emph{receive from} by selecting the key material we give out (assuming that the sender is honest). This is exactly what attribute based \cite{DBLP:conf/ccs/GoyalPSW06} or functional encryption \cite{DBLP:conf/tcc/BonehSW11} can do. But any such scheme of course allows a malicious sender to encrypt what he wants for any receiver he wants.

\paragraph{Our Contribution.} In this paper we propose a solution based on
a new notion called Access Control Encryption (ACE).
In a nutshell ACE works as follows:  an ACE scheme has a key generation algorithm that produces a set of sender keys, a set of receiver keys and a sanitizer key. 
An honest  sender $S$  encrypts  message $m$ under his sender key and sends it for processing by $\San$ using the sanitizer key. $\San$ does not need to know the security policy, nor who sends a message or where it is going (so a sender does not have to identify himself): $\San$ simply executes a specific randomised algorithm on the incoming ciphertext and passes the result on to a broadcast medium, e.g., a disk from where all receivers can read. So, as desired, 
$\San$ only knows when a message was sent and its length.\footnote{Note that the sanitizer has to send the ciphertext to all receivers -- both those who are allowed to decrypt and those who are not. A sanitizer who could decide whether a particular receiver is allowed to receive a particular ciphertext would trivially be able to distinguish between different senders with different writing rights.} An honest receiver $R$ who is allowed to receive from $S$ is able to recover $m$ 
using his key and the output from $\San$. On the other hand, consider a corrupt sender $S$ who is not allowed to send to $R$. ACE ensures that no matter what $S$ sends, what $R$ receives (after being processed by $\San$) looks like a random encryption of a random message.
 In fact we achieve security against collusions: considering a subset $\cal S$ of senders and a subset $\cal R$ of receivers, if none of these senders are allowed to send to any of the receivers, then $\cal S$ cannot transfer any information to 
$\cal R$, even if players in each set work together. We propose two constructions of ACE: one based on standard number theoretic assumptions (DDH, Pailler) which achieves complexity linear in the number of roles,  and one based on recent results in cryptographic obfuscation, which achieves complexity polylogarithmic in the number of roles.

\paragraph{Example.} A company is working on a top-secret military project for the government. To protect the secrets the company sets up an access policy that determines which employees are allowed to communicate (e.g., a researcher with top-secret clearance should not be allowed to send classified information to the intern, who is making the coffee and only has public clearance). 
To implement the access policy, the company sets up a special server that sanitizes every message sent on the internal network before publishing it on a bulletin board or broadcasting it. Using ACE this can be done without requiring users to log into the sanitizer. Furthermore,  if corrupted parities (either inside or outside the company) want to intercept the communication they will get no information from the sanitizer server, since it does not know the senders identities and the messages sent over the network.

In the following sections, we describe ACE in more detail and take a closer look at our technical contributions.

\subsection{Access Control Encryption: The Problem it Solves} 

\paragraph{Senders and Receivers.} We have $n$ (types of) senders $S_1,\ldots,S_n$ and $n$ (types of) receivers $R_1,\ldots,R_n$.\footnote{The number of senders equals the number of receivers only for the sake of exposition.} There is some predicate $P : [n] \times [n] \to \zo$, where $P(i,j)=1$ means that $S_i$ is allowed to send to $R_j$, while $P(i,j)=0$ means that $S_i$ is not allowed to send to $R_j$. 

\paragraph{Network Model.} We assume that senders are connected to all receivers via a public channel i.e., a sender cannot send a message only to a specific receiver and any receiver can see all traffic from all senders (also from those senders they are not allowed to communicate with). 

\paragraph{Requirements.} Informally we want the following properties\footnote{The security model, formalized in Definitions~\ref{def:ACEnoread} and~\ref{def:ACEnowrite}, is more general than this.} 
\begin{enumerate}
\item \emph{Correctness:} When an honest sender $S_i$ sends a message $m$, all receivers $R_j$ with $P(i,j)=1$ learn $m$;
\item \emph{No-Read Rule:} At the same time all receivers $R_j$ with $P(i,j)=0$ learn no information about $m$;
\item \emph{No-Write Rule:} No (corrupt) sender $S_i$ should be able to communicate any information to any (possibly corrupt) receiver $R_j$ if $P(i,j)=0$
\end{enumerate}

Note that the \emph{no-read rule} on its own is a simple \emph{confidentiality} requirement, which can be enforced using standard encryption schemes. On the other hand standard cryptographic tools do not seem to help in satisfying the \emph{no-write rule}.
In particular the \emph{no-write} rule is very different from the standard \emph{authenticity} requirement and e.g., signature schemes cannot help here: had we asked for a different property such as \emph{``a corrupt sender $S_i$ should not be allowed to communicate with an honest receiver $R_j$ if $P(i,j)=0$''} then the problem could be solved by having $R_j$ verify the identity of the sender (using a signature scheme) and ignore messages from any sender $i$ with $P(i,j)=0$. Instead, we are trying to block communication even between corrupt senders and corrupt receivers. 

The problem as currently stated is impossible to solve, since a corrupt sender can broadcast $m$ in the clear to all receivers (the corrupt sender might not care that other receivers also see the message). As mentioned above, we therefore enhance the model by adding a special party, which we call the \emph{sanitizer} $\San$. The sanitizer receives messages from senders, performs some computation on them, and then forwards them to all receivers. 
In other words, we allow the public channel to perform some computation before delivering the messages to the receivers. Hence, the output of the sanitizer is visible to all receivers (i.e., the sanitizer cannot give different outputs to different receivers). 
We therefore add the following requirement to our \emph{no-read rule}:
\begin{enumerate}
\item[2b.] The sanitizer should not learn anything about the communication it routes. In particular, the sanitizer should not learn any information about the message $m$ which is being transmitted nor the identity of the sender $i$; 
\end{enumerate}

In Section~\ref{sec:definition} we formalize properties $2$ and $2b$ as a single one (i.e., no set of corrupt receivers, even colluding with the sanitizer, should be able to break the \emph{no-read} rule). When considering property $3$, we assume the sanitizer not to collude with the corrupt senders and receivers: after all, since the sanitizer controls the communication channel, there is no way of preventing a corrupt sanitizer from forwarding messages from corrupt senders to the corrupt receivers.\footnote{Note that it is possible to reduce the trust on the sanitizer in different ways: in a black-box way, one could imagine several parties emulating the work of the sanitizer using MPC. In a more concrete way, it is possible to have a \emph{chain} of sanitizers, where the senders send their encryptions to sanitizer $1$, the receivers receive ciphertexts from sanitizer $n$, and sanitizer $i+1$ further sanitizes the output of sanitizer $i$. We note that all definitions and constructions in this paper can be easily generalized to this scenario but, to keep the presentation as simple as possible, we do not discuss this solution further and stick to the case of a single sanitizer.}

We stress that previous work is not sufficient to achieve property $3$: Even encryption schemes with fine-grained decryption capabilities (such as predicate- and attribute based- encryption \cite{DBLP:conf/ccs/GoyalPSW06,DBLP:journals/joc/KatzSW13}) do not offer security guarantees against colluding senders and receivers.


\subsection{Technical Overview} 

\paragraph{Linear ACE.} The main idea behind our construction of ACE with linear complexity (described in Section~\ref{sec:linear}) 
is the following: we start with an ACE for a single identity i.e., where $n=1$ and $P(1,1)=1$. First we need 
to make sure that even a corrupt sender with encryption rights (i.e., $i=1$) cannot communicate with a 
corrupt receiver with no decrypting right (i.e., with a special identity $j=0$). To prevent this, since the receiver cannot 
decrypt the ciphertext, it is enough to use a randomizable public key encryption and let the sanitizer 
refresh the ciphertext. This ensures that the outgoing ciphertext is distributed exactly as a fresh 
encryption.

The more challenging task is to ensure that a corrupt sender with no rights (i.e., with a special identity $i=0$) cannot transfer any information to a corrupt receiver with decrypting rights (i.e., $j=1$), since in this case the receiver knows the decryption key. Thus, we cannot use the security of the underlying encryption scheme. We solve the problem using any encryption scheme which is homomorphic both in the message and in the randomness (such as ElGamal or Pailler). The main idea is to let the encryption key $ek$ as well as the randomizer key $rk$ be some secret value $\alpha$, and an encryption of a message $m$ being a tuple $(c_0,c_1)=(E(ek),E(m))$. On input such a tuple the sanitizer picks a random $s$ and outputs $c'$, a fresh encryption of $(ek-rk)\cdot s +m$ (which can be computed thanks to the homomorphic properties of $E$): note that sanitization does not interfere with honestly generated encryptions (since $ek=rk=\alpha$), while the sanitized version of a ciphertext produced by anyone who does not know $\alpha$ is indistinguishable from a random encryption of a random value. 

We then turn this into a scheme for any predicate $P : [n]\times[n] \to \zo$ by generating $n$ copies of the single identity ACE scheme. Each receiver $j$ is given the decryption key for one of the schemes, and each sender $i$ is given the encryption key for all instances $j$ such that $P(i,j)=1$. The resulting scheme has linear complexity in $n$, the number of the roles in the system, which makes our scheme impractical for large predicates.


\paragraph{Polylogarithmic ACE.} At first it might seem easy to construct an ACE scheme with compact ciphertexts using standard tools (such as non-interactive zero-knowledge proofs). In Section~\ref{sec:polylog} we discuss why this is not the case before presenting our construction of an ACE with complexity polylogarithmic in $n$. To construct the scheme we first introduce the notion of a \emph{sanitizable functional encryption} (sFE) scheme which is a functional encryption (FE) scheme enhanced with a sanitization algorithm. Informally we require that given any two ciphertexts $c_0,c_1$ that decrypt to the same message and a sanitized ciphertext $c'$, no one (even with access to the master secret key), should be able to tell whether $c'$ is a sanitized version of $c_0$ or $c_1$.\footnote{We note that this is a relaxation of re-randomizability for FE, in the sense that we do not require sanitized ciphertexts to be indistinguishable from fresh encryptions, but only independent of the randomness used in the original encryption. However, to the best of our knowledge, no re-randomizable FE scheme for all circuits exist.} We are able to construct such a scheme by modifying the FE based on indistinguishability obfuscation of Garg et al.~\cite{DBLP:conf/focs/GargGH0SW13}: in their scheme ciphertexts consist of two encryptions and a \emph{simulation statistically-sound} NIZK proof that they contain the same message. We instantiate their construction with a sanitizable encryption scheme\footnote{Similar to a re-randomizable encryption scheme, where we do not require sanitized ciphertexts to look indistinguishable from fresh encryptions.}, and we instruct the sanitizer to sanitize the two encryptions, drop the original proof and append a \emph{proof of a proof} instead (that is, a proof of the fact that the sanitizer saw a proof who would make the original verifier accept). This preserves the functionality of the original FE scheme while making the sanitized ciphertext independent of the randomness used by the sender. 
We formally define sFE in Section~\ref{sec:sfe} and present a construction in Section~\ref{sec:sfecons}.

Finally, armed with such a sFE scheme, we construct a polylog ACE scheme in Section~\ref{sec:polylogace} in the following way: ciphertexts are generated by encrypting tuples of the form $(m,i,y)$ with $y=F_{ek_i}(m)$ for a PRF $F$ (where $ek_i$ is the the encryption key of the sender $S_i$), using the sFE scheme. Decryption keys are sFE secret keys for the function that outputs $m$ only if $P(i,j)=1$ (and ignores $y$). The sanitizer key is a sFE secret key which outputs $1$ only if $y$ is a valid MAC on $m$ for the identity $i$ (note that this can be checked by a compact circuit by e.g., generating all the keys $ek_i$ pseudorandomly using another PRF). This key allows the sanitizer to check if an encryption contains a valid MAC or not, but without learning anything about the message nor the identity. Now the sanitizer drops invalid encryptions (or replaces them with random encryptions of random values for a special, undecryptable identity $i=0$) and forwards valid encryptions (after having refreshed them).

\paragraph{Open Questions.} We identify two major opens questions: the first one is to construct practically interesting ACE from noisy, post-quantum assumptions such as LWE -- the challenge here is that it always seems possible for a malicious sender to encrypt with just enough noise that any further manipulation by the sanitizer makes the decryption fail. This can be addressed using ``bootstrapping'' techniques, but this is not likely to lead to schemes with efficiency comparable to the ones based on DDH or Pailler described above. The second open question is to design sublinear ACE scheme with practical efficiency even for limited classes of interesting predicates such as e.g., $P(i,j)=1 \Leftrightarrow i \geq j$.

\subsection{Related Work} 
One of the main challenges in our setting is to prevent corrupt senders to communicate to corrupt receivers using subliminal channels (e.g., by producing the encryptions with maliciously generated randomness). In some sense we are trying to prevent steganography~\cite{DBLP:conf/crypto/HopperLA02}. 
Recent work on cryptographic firewalls~\cite{DBLP:conf/eurocrypt/MironovS15,DBLP:journals/iacr/DodisMS15} also deals with this problem, but in the context of preventing malicious software implementations to leak information via steganographic techniques. 
Raykova et al.~\cite{DBLP:conf/fc/RaykovaZB12} presented solutions to the problem of access control on outsourced data, with focus on hiding the access patterns from the cloud (this is not a concern in our application since all receivers receive all ciphertexts) and in preventing malicious writers from updating files they are not allowed to update. However they only guarantee that malicious writers are caught if they do so, while we want to prevent any communication between corrupt senders and receivers. 
Backes and Pfitzmann introduced the notion of probabilistic non-interference which allows to relate cryptography to the notion of information flow for both transitive~\cite{DBLP:conf/sp/BackesP03} and intransitive policies~\cite{DBLP:journals/ijisec/BackesP04}. Halevi et al.~\cite{DBLP:journals/iacr/HaleviKN05} address the problem of enforcing confinement in the T10 OSD protocol, in the presence of a fully trusted manager (which has a role similar to the sanitizer in our model). 
Fehr and Fischlin~\cite{FF15} study the case of sanitizable signatures in the context of an intermediate party that sanitizes messages and signatures send over the channel. The special party learns as little as possible about the messages and signatures. However, they do not prevent corrupt senders from sending information to corrupt receivers. 
Finally, the problem of hiding policies and credentials in the context of attribute based encryption has been studied by Frikken et al.~\cite{FAL06}, Kapadia et al.~\cite{KTS07}, M{\"{u}}ller and Katzenbeisser~\cite{DBLP:conf/stm/MullerK11}, and Ferrara et al.~\cite{DBLP:conf/csfw/FerraraFLW15}. However, they do not consider the case of preventing corrupt sender from communicating with corrupt receivers (e.g. by sending the message unencrypted over the channel).


\section{Defining ACE}\label{sec:definition}

\paragraph{ACE Notation.} An \emph{access control encryption} (ACE) scheme is defined by the following algorithms:
\begin{description}
\item[Setup:] The $\Setup$ algorithm on input the security parameter $\kappa$ and a policy $P : [n]\times [n]\to \zo$ outputs a master secret key $msk$ and public parameters $pp$, which include the message space $\M$ and ciphertext spaces $\C,\C'$.\footnote{We use the convention that all other algorithms take $pp$ as input even if not specified. Formally, one can think of the $pp$ as being part of $msk$ and all other keys $ek,dk,sk$.} 

\item[Key Generation:] The $\Gen$ algorithm on input the master secret key $msk$, an identity $i\in \{0,\dots,n+1\}$,\footnote{To make notation more compact we define two special identities: $i=0$ representing a sender or receiver with no rights such that $P(0,j) = 0 = P(i,0)$ for all $i,j \in [n]$; $i=n+1$ to be the sanitizer identity, which cannot receive from anyone but can send to all i.e., $P(n+1,j)=1 \ \forall j\in[n]$ and $P(i,n+1)=0 \ \forall i\in[n]$} and a type $t\in \{\sen,\rec,\san\}$ outputs a key $k$. We use the following notation for the three kind of keys in the system:
	\begin{itemize}
	\item $ek_i\from \Gen(msk,i,\sen)$ and call it an \emph{encryption key for $i\in[n]$} 
	\item $dk_j\from \Gen(msk,j,\rec)$ and call it a \emph{decryption key for $j\in[n]$} 
	\item $ek_0=dk_0=pp$;
	\item $rk \from \Gen(msk,n+1,\san)$ and call it the \emph{sanitizer key};
	\end{itemize}

\item[Encrypt:] The $\Enc$ algorithm on input an encryption key $ek_i$ and a message $m$ outputs a ciphertext $c$. 

\item[Sanitizer:] $\San$ transforms an incoming ciphertext $c\in \C$ into a sanitized ciphertext $c'\in\C'$ using the sanitizer key $rk$;

\item[Decryption:] $\Dec$ recovers a message $m' \in \M \cup \{\bot\}$ from a ciphertext $c'\in\C'$ using a decryption key $dk_j$. 
\end{description}

\paragraph{ACE Requirements.} We formalize Properties 1-3 from the introduction in the following way:

\begin{defi}[Correctness] \label{def:ACE-correct}\label{def:one-correct}
For all $m\in \M$, $i,j \in[n]$ such that $P(i,j)=1$:

$$
\Pr\left[ \Dec\left(dk_j, \San\left(rk, \Enc\left(ek_i ,m\right) \right)\right) \neq m  \right] \leq \negl\left(\kappa\right)
$$
with $(pp,msk)\from \Setup(1^\kappa,P)$, $ek_i\from \Gen(msk,i,\sen)$, $dk_j\from \Gen(msk,j,\rec)$, and $rk \from \Gen(msk,n+1,\san)$, and the probabilities are taken over the random coins of all algorithms.
\end{defi}




\begin{defi}[No-Read Rule] \label{def:oneprivacy}\label{def:ACEnoread} \label{def:privacy} Consider the following game between a challenger $C$ and a stateful adversary $A$:
\begin{center}
\begin{small}
    \begin{tabular}{| p{5.5cm} | p{5cm} |}
    \hline
	\multicolumn{2}{|c|}{\textbf{\emph{No-Read Rule}}} \\
	\hline
	\multicolumn{1}{|c|}{\textbf{\emph{Game Definition}}} & \multicolumn{1}{|c|}{\textbf{\emph{Oracle Definition}}}  \\
	\hline
	\

	1. $(pp,msk)\from \Setup(1^\kappa,P)$;

	2. $(m_0,m_1,i_0,i_1) \from A^{\mathcal{O}_G(\cdot),\mathcal{O}_E(\cdot)}(pp)$;

	3. $b\from \zo$;

	4. $c\from \Enc(\Gen(msk,i_b,\sen),m_b)$;

	5. $b' \from A^{\mathcal{O}_G(\cdot),\mathcal{O}_E(\cdot)}(c)$;

	&
	\

	$\mathcal{O}_G(j,t)$: \newline 1. Output $k\from \Gen(msk,j,t)$;

	\ 

	$\mathcal{O}_E(i,m)$: 

	1. $ek_i\from  \Gen(msk,i,\sen)$;

	2. Output $c\from \Enc( ek_i , m)$;

	\\
	\hline
	\end{tabular}
\end{small}
\end{center}
We say that $A$ \emph{wins the No-Read game} if $b=b'$, $|m_0|=|m_1|$, $i_0,i_1\in\{0,\ldots,n\}$ and one of the following holds:
\begin{description}
	\item[Payload Privacy:] For all queries $q$ to $\mathcal{O}_G$ with $q=(j,\rec)$ it holds that 
		$$P(i_0,j)=P(i_1,j)=0$$
	\item[Sender Anonymity:] For all queries $q$ to $\mathcal{O}_G$ with $q=(j,\rec)$ it holds that 
		$$P(i_0,j)=P(i_1,j) \mbox{ and } m_0=m_1$$
\end{description}
We say an ACE scheme satisfies the \emph{No-Read rule} if for all PPT $A$
$$
\adv{A} = 2\cdot\left|\Pr[A \mbox{ wins the No-Read game}]-\frac{1}{2}\right|\leq \negl(\kappa)
$$ 
\end{defi}

Definition~\ref{def:privacy} captures the requirement that only intended receivers should be able to learn anything about the message (payload privacy) and that no one (even intended receivers) should learn anything about the identity of the sender (sender anonymity). Note that the ciphertext $c$ sent by the challenger to the adversary has \emph{not} been sanitized and that the adversary is allowed to query for the sanitizer key $rk$. This implies that even the sanitizer (even with help of any number of senders and unintended receivers) should not learn anything. Note additionally that the adversary is allowed to query for the encryption keys $ek_{i_0},ek_{i_1}$ corresponding to the challenge identities $i_0,i_1$, which implies that the ability to \emph{encrypt} to a particular identity does not automatically grant the right to \emph{decrypt} ciphertexts created with that identity (e.g., a user might be able to \emph{write} top-secret documents but not to \emph{read} them). Note that if $i_b=0$ for some $b\in\zo$, then the definition implies that it is possible to create ``good looking'' ciphertexts even without having access to any of the senders' keys. This is  explicitly used in our solution with linear complexity. 
Furthermore note that if there exist multiple keys for a single identity (e.g., the output of $\Gen(msk,i,\sen)$ is randomized), then our definition does not guarantee that the adversary can ask the oracle $\mathcal{O}_G$ for the encryption key used to generate the challenge ciphertext. The definition can be easily amended to grant the adversary this power but (since in all our constructions $ek_i$ is a deterministic function of $msk$ and $i$) we prefer to present the simpler definition.
Finally, the encryption oracle $\mathcal{O}_E$ models the situation that the adversary is allowed to see encrypted messages under identities for which he does not have the encryption key.


\begin{defi}[No-Write Rule] \label{def:security}\label{def:ACEnowrite} \label{def:onesecurity}
Consider the following game between a challenger $C$ and a stateful adversary $A$:
\begin{center}
\begin{small}
    \begin{tabular}{| p{6.5cm} | p{5cm} |}
    \hline
	\multicolumn{2}{|c|}{\textbf{\emph{No-Write Rule}}} \\
	\hline
	\multicolumn{1}{|c|}{\textbf{\emph{Game Definition}}} & \multicolumn{1}{|c|}{\textbf{\emph{Oracle Definition}}}  \\
	\hline
	\

	1. $(pp,msk)\from \Setup(1^\kappa,P)$;

	2. $(c,i') \from A^{\mathcal{O}_{E}(\cdot),\mathcal{O}_{S}(\cdot)}(pp)$;

	3. $ek_{i'} \from \Gen(msk,i',\sen)$;

	4. $rk\from \Gen(msk,n+1,\san)$;

	5. $r\from \M$;

	6. $b\from \zo$, 
	\begin{itemize}
		\item If $b=0$, $c'\from \San(rk,\Enc(ek_{i'},r))$; 
		\item If $b=1$, $c'\from \San(rk,c)$;
	\end{itemize}
	7. $b' \from A^{\mathcal{O}_{E}(\cdot),\mathcal{O}_{R}(\cdot)}(c')$;

	&
	\
	
	$\mathcal{O}_S(j,t)$: \newline 1. Output $k\from \Gen(msk,j,t)$;

	\ 

	$\mathcal{O}_R(j,t)$: \newline 1. Output $k\from \Gen(msk,j,t)$;

	\

	$\mathcal{O}_E(i,m)$: 

	1. $ek_i\from  \Gen(msk,i,\sen)$;

	2. $c\from \Enc( ek_i , m)$;
	
	3. Output $c'\from \San(rk,c)$;

	\\
	\hline
	\end{tabular}
\end{small}
\end{center}
Let $Q_S$ (resp. $Q$) be the set of all queries $q=(j,t)$ that $A$ issues to $\mathcal{O}_S$ (resp. both $\mathcal{O}_S$ and $\mathcal{O}_R$). Let $I_S$ be the set of all $i\in[n]$ such that $(i,\sen)\in Q_S$ and let $J$ be the set of all $j\in [n]$ such that $(j,\rec)\in Q$. Then we say that $A$ \emph{wins the No-Write game} if $b'=b$ and all of the following hold:
\begin{enumerate}
	\item $(n+1,\san)\not\in Q$;
	\item $i'\in I_S \cup \{0\}$;
	\item $\forall i\in I_S,j\in J$,  $P(i,j)=0$;
\end{enumerate}
We say an ACE scheme satisfies the \emph{No-Write rule} if for all PPT $A$
$$
\adv{A} = 2\cdot \left|\Pr[A \mbox{ wins the No-Write game}]-\frac{1}{2}\right|\leq \negl(\kappa)
$$ 
\end{defi}

Definition~\ref{def:security} captures the property that any set of (corrupt) senders $\{S_i\}_{i\in I}$ cannot transfer any information to any set of (corrupt) receivers $\{R_j\}_{j\in J}$ unless at least one of the senders in $I$ is allowed communication to at least one of the receivers in $J$ (Condition 3)\footnote{Note that the adversary is allowed to ask for any senders' key in the post-challenge queries.}. This is modelled by saying that in the eyes of the receivers, the sanitized version of a ciphertext coming from this set of senders looks like the sanitized version of a fresh encryption of a random value produced by one of these senders (Condition 2). Note that if the adversary does not ask for any encryption key (i.e., $I_S=\emptyset$), then the only valid choice for $i'$ is $0$: this implies that (as described for the no-read rule) there must be a way of constructing ``good looking'' ciphertexts using the public parameters only\footnote{Recall that we defined $ek_0=pp$.} and this property is used crucially in the construction of our linear scheme. 
Furthermore, we require that the adversary does not corrupt the sanitizer (Condition 1) which is, as discussed in the introduction, an unavoidable condition. 
Finally, the encryption oracle $\mathcal{O}_E$ again models the situation that the adversary is allowed to see encrypted messages under identities for which he do not have the encryption key.


\section{Linear ACE from Standard Assumptions}\label{sec:linear}

The roadmap of this section is the following: we construct an ACE scheme for a single identity (i.e., $n=1$ and $P(1,1)=1$) from standard number theoretic assumptions, and then we construct an ACE scheme for any predicate $P:[n]\times [n]\to \zo$ using a \emph{repetition scheme}. The complexity of the final scheme (in terms of public-key and ciphertext size) is $n$ times the complexity of the single-identity scheme.


\subsection{ACE for a Single Identity}

We propose two constructions of ACE for a single identity (or \oACE for short). The first is based on the DDH assumption and is presented in this section, while the second is based on the security of Pailler's cryptosystem and is deferred to Appendix~\ref{app:onepailler}. 
Both schemes share the same basic idea: the encryption key $ek$ is some secret value $\alpha$, and an encryption of a message $m$ is a pair of encryptions $(c_0,c_1)=(E(\alpha),E(m))$. The sanitizer key is also the value $\alpha$, and a sanitized ciphertext is computed as $c'=c_1\cdot (c_0\cdot E(-\alpha))^s$ which (thanks to the homomorphic properties of both ElGamal and Pailler) is an encryption of a uniformly random value unless $c_0$ is an encryption of $\alpha$, in which case it is an encryption of the original message $m$. The decryption key is simply the decryption key for the original encryption scheme, which allows to retrieve $m$ from $c'$. Note that even knowing the decryption key is not enough to construct ciphertexts which ``resist'' the sanitization, since the receiver never learns the value $\alpha$.

\paragraph{\oACE from DDH:}
Our first instantiation is based on the ElGamal public-key encryption scheme \cite{DBLP:journals/tit/Elgamal85}. The construction looks similar to other \emph{double-strand} versions of ElGamal encryption which have been used before in the literature to achieve different goals (e.g., by Golle et al.~\cite{DBLP:conf/ctrsa/GolleJJS04} in the context of universal re-encryption and by Prabhakaran and Rosulek~\cite{DBLP:conf/crypto/PrabhakaranR07} in the context of rerandomizable CCA security).

\begin{construction} \label{con:oneElGamal}
Let $\EGACE = (\Setup, \Gen, \Enc, \San, \Dec)$ be a \oACE scheme defined by the following algorithms:
\end{construction}
\begin{description}
\item[Setup:] Let $(G,q,g)$ be the description of a cyclic group of prime order $q$ generated by $g$. Let $(\alpha, x) \from \mathbb{Z}_q\times \mathbb{Z}_q$ be uniform random elements, and compute $h=g^x$. Output the public parameter  $pp = (G,q,g,h)$ and the master secret key $msk = (\alpha,x)$. The message space is $\M=G$ and the ciphertext spaces are $\C = G^4$ and $\C' = G \times G$.

\item[Key Generation:] Given the master secret key $msk$, the encryption, decryption and sanitizer key are computed as follows:
	\begin{itemize}
	\item $ek = \alpha$;
	\item $dk = - x$;
	\item $rk = -\alpha$;
	\end{itemize}

\item[Encryption:] Given the message $m$ and an encryption key $ek$, sample random $r_1,r_2 \in \mathbb{Z}_q$ and output:
\begin{align*}
	(c_0, c_1, c_2, c_3) = (g^{r_1}, g^{ek} h^{r_1}, g^{r_2}, mh^{r_2})
\end{align*}
(and encryptions for the identity $0$ are random tuples from $G^4$).

\item[Sanitize:] Given a ciphertext $c = (c_0, c_1, c_2, c_3) \in \C$ and a sanitizer key $rk$, sample uniform random $s_1,s_2 \in \mathbb{Z}_q$ and output:
\begin{align*}
	(c'_0, c'_1 )= (c_2c_0^{s_1}g^{s_2},c_3 (g^{rk} c_1)^{s_1} h^{s_2})
\end{align*}

\item[Decryption:] Given a ciphertext $c' = (c'_0, c'_1) \in \C'$ and a decryption key $dk$, output:
\begin{align*}
	m' = c'_1(c'_0)^{dk}
\end{align*}
\end{description}


\begin{lem}\label{lem:elgamal}
Construction~\ref{con:oneElGamal} is a \emph{correct} \oACE scheme that satisfies the \emph{No-Read Rule} and the the \emph{No-Write Rule} assuming that the DDH assumption holds in $G$.
\end{lem}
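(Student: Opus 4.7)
The plan splits into three parts, one per requirement.

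Correctness is a direct calculation: plugging $c_0 = g^{r_1}$, $c_1 = g^\alpha h^{r_1}$, $c_2 = g^{r_2}$, $c_3 = mh^{r_2}$ into the sanitization formulas yields $c'_0 = g^{r_2+r_1 s_1+s_2}$ and $c'_1 = mh^{r_2+r_1 s_1+s_2}$, so $\Dec$ with $dk=-x$ recovers $m$. For the No-Read rule I would reduce to DDH via a two-step hybrid, after first splitting on whether the adversary queries $\mathcal{O}_G(1,\rec)$. If he does, under Payload Privacy this forces $i_0=i_1=0$, whence the challenge is a uniform tuple in $G^4$ independent of $b$; under Sender Anonymity we additionally have $m_0 = m_1$ and $P(i_0,1) = P(i_1,1)$, so the two candidate ciphertext distributions coincide. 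Otherwise the adversary cannot know $x$, and a DDH reduction plants $(g,h,U,V)$ as $(c_0,c_1) := (U, g^{i_b\alpha}V)$, with a second hybrid handling $(c_2,c_3)$ analogously. All oracle queries are simulated because the reduction picks $\alpha$ itself and knows $h$. After both hybrids the challenge is uniform in $G^4$, independent of $(m_b,i_b)$.

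For the No-Write rule I first determine the distribution of $\San(rk,c)$ for an arbitrary ciphertext $c = (c_0,c_1,c_2,c_3)$. Setting $M := c_3 c_2^{-x}$ and $\Gamma := c_1 c_0^{-x} g^{-\alpha}$, a short calculation (taking discrete logs in the sanitization formulas) gives $c'_0$ uniform in $G$ (via the $g^{s_2}$ factor), and conditioned on $c'_0$, $c'_1 = (c'_0)^x \cdot M \cdot \Gamma^{s_1}$. Thus $\San(rk,c)$ is statistically uniform in $G^2$ whenever $\Gamma \neq 1$ (because $s_1$ is uniform in $\mathbb{Z}_q$) and equals the DDH-like pair $(c'_0,(c'_0)^x M)$ whenever $\Gamma = 1$. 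The same calculation applied to the reference ciphertext $\Enc(ek_{i'},r)$ with uniform message $r$ yields: uniform in $G^2$ when $i' = 0$ (since $\Enc(pp,\cdot)$ outputs a uniform tuple, so $\Gamma$ is itself uniform and hits $1$ only with probability $1/q$); and $(c'_0,(c'_0)^x r)$ with $r$ uniform when $i' = 1$ (since an honest encryption of $r$ with $ek_1 = \alpha$ has $\Gamma = 1$ and $M = r$).

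A case analysis on $I_S$ finishes the argument. Forcing $\Gamma = 1$ essentially requires choosing $c_1 = g^\alpha c_0^x$, hence knowing $\alpha$, so it can only happen non-negligibly when $I_S = \{1\}$. In that case the No-Write conditions prevent the adversary from also knowing $dk_1 = -x$, so distinguishing the real sanitized pair $(c'_0,(c'_0)^x M)$ from the reference pair (either $(c'_0,(c'_0)^x r)$ when $i'=1$, or uniform when $i'=0$) reduces directly to DDH with the instance $(g,h)$. When $I_S = \emptyset$, the value $\alpha$ is information-theoretically hidden (every $\mathcal{O}_E$ output is uniform in $G^2$ by the above analysis, and no key-oracle response reveals $\alpha$), so from the adversary's viewpoint $\Gamma$ is uniform over $G$ and equals $1$ only with probability $1/q$; both the real and reference sanitized pairs are then statistically uniform in $G^2$. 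The main subtlety is precisely this last step: arguing rigorously that $\alpha$ remains hidden across all oracle responses and the adversary's pre-challenge view, which relies on the fact that sanitization destroys any $\alpha$-dependence in outgoing ciphertexts.
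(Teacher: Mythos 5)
Your proposal is correct and follows essentially the same route as the paper: correctness by direct computation, the No-Read rule by a case split on the decryption-key query plus a DDH argument that honest ciphertexts are pseudorandom in $G^4$, and the No-Write rule by a case split on whether the sender key was corrupted, using DDH (via the randomness $s_2$) when $dk$ is unknown and an unconditional argument when it is known. Your $(M,\Gamma)$ decomposition is just a multiplicative restatement of the paper's exponent analysis, where $\Gamma\neq 1$ corresponds to the linear-independence condition $\alpha\neq\delta_1-x\delta_0$ and the hiding of $\alpha$ is the paper's ``deferred decisions'' step.
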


\begin{proof}
\emph{Correctness:} 
Let $c = (c_0, c_1, c_2, c_3)$  be an honestly generated ciphertext, and let $c' = (c'_0, c'_1)$ be a sanitized version of $c$. We check that $(c'_0,c'_1)$ is still an encryption of the original message $m$:
\begin{align*}
	c'_1(c'_0)^{dk} &= c_3 (g^{rk} c_1)^{s_1} h^{s_2}(c_2c_0^{s_1}g^{s_2})^{dk} \\
&= 	mh^{r_2} (g^{-\alpha} g^\alpha h^{r_1})^{s_1}h^{s_2}(g^{r_2} g^{r_1s_1+s_2})^{-x} \\
&= m h^{r_2+r_1s_1+s_2} g^{-x(r_2+r_1s_1+s_2)} = m
\end{align*}
Thus, the sanitization of a valid ciphertext produces a new valid ciphertext under the same identity and of the same message.
\\

\noindent \emph{No-Read Rule:}
There are three possible cases, depending on which identities the adversary queries during the game: the case $(i_0,i_1)=(0,0)$ is trivial as both $\Enc(ek_0,m_b)$ for $b\in\zo$ are random ciphertexts; the case $(i_0,i_1)=(1,1)$ is trivial if the adversary asks for the decryption key $dk$, since in this case it must be that $m_0=m_1$. The case where the adversary does not ask for the decryption key and $i_0\neq i_1$ implies the case where the adversary does not ask for the decryption key and $(i_0,i_1)=(1,1)$ using standard hybrid arguments (i.e., if $\Enc(ek,m)$ is indistinguishable from a random ciphertext $c\from \C$ for all $m$, then $\Enc(ek,m_0)$ is indistinguishable from $\Enc(ek,m_1)$ for all $m_0,m_1$). So we are only left to prove that honest encryptions are indistinguishable from a random element in $\C=G^4$, which follows in a straightforward way from the DDH assumption. In particular, since $(g,h,g^{r_2},h^{r_2})$ is indistinguishable from $(g,h,g^{r_2},h^{r_3})$ for random $r_2,r_3$ we can replace $c_3$ with a uniformly random element (independent of $m$). Notice that neither $\alpha$ (the encryption and sanitizer key) nor encryptions from oracle $\mathcal{O}_E$ will help the adversary distinguish. 
Thus, we can conclude that the adversary's advantage is negligible, since he cannot distinguish in all three cases.
\\

\noindent \emph{No-Write Rule:}
We only need to consider two cases, depending on which keys the adversary asks for \emph{before} producing the challenge ciphertext $c$ and identity $i'$: 1) the adversary asks for $ek$ before issuing his challenge $(c,i')$ with $i'\in\zo$ (and receives no more keys during the distinguishing phase) and 2) the adversary asks for $dk$ before issuing his challenge $(c,0)$ and then asks for $ek$ during the distinguishing phase. Case 1) follows directly from the DDH assumption: without access to the decryption key the output of the sanitizer is indistinguishable from a random ciphertext thanks to the choice of the random $s_2$, in particular since $(g,h,g^{s_2},h^{s_2})$ is indistinguishable from $(g,h,g^{s_2},h^{s_3})$ for random $s_2,s_3$ we can replace $(c'_0,c'_1)$ with uniformly random elements in $G$. Case 2) instead has to hold unconditionally, since the adversary has the decryption key. We argue that the distribution of $\San(rk,(c_0,c_1,c_2,c_3))$ is independent of its input. In particular, given any (adversarially chosen) $(c_0,c_1,c_2,c_3)\in  G^4 $ we can write:
$$
(c_0,c_1,c_2,c_3)=(g^{\delta_0},g^{\delta_1},g^{\delta_2} ,g^{\delta_3}  )
$$
Then the output $c'\from \San(rk,c)$ is
\begin{align*}
(c'_0,c'_1) &=  (c_2c_0^{s_1}g^{s_2},c_3 (g^{rk} c_1)^{s_1} h^{s_2}) \\
&= (g^{\delta_2+s_1\delta_0+s_2},g^{\delta_3+s_1(\delta_1-\alpha)+s_2x} )
\end{align*}
Which is distributed exactly as a uniformly random ciphertext $(g^{\gamma_0},g^{\gamma_1})$ with $(\gamma_0,\gamma_1)\in  \mathbb{Z}_{q}\times\mathbb{Z}_{q}$ since for all $(\gamma_0,\gamma_1)$ there exists $(s_0,s_1)$ such that:
$$
\gamma_0 = \delta_2+s_1\delta_0+s_2 \mbox{ and } \gamma_1=\delta_3+s_1(\delta_1-\alpha)+s_2x
$$
This is guaranteed unless the two equations are linearly dependent i.e., unless $\alpha=(\delta_1-x\delta_0)$ which happens only with negligible probability thanks to the principle of deferred decisions.
The adversary is allowed to see sanitized ciphertext from the encryption oracle. However, this does not help him distinguish, since the output of the $\San$ algorithm is distributed exactly as a uniform ciphertext.
Thus, we can conclude that the adversary's advantage is negligible, since he cannot distinguish in both cases.
\end{proof}

\subsection{Construction of an ACE Scheme for Multiple Identities}

In this section we  present a construction of an ACE scheme for multiple identities, which is based on the \oACE scheme in a black-box manner. In a nutshell, the idea is the following: we run $n$ copies of the \oACE scheme and we give to each receiver $j$ the decryption key $dk_j$ for the $j$-th copy of the scheme. An encryption key for identity $i$ is given by the set of encryption keys $ek_j$ of the \oACE scheme such that $P(i,j)=1$. To encrypt, a sender encrypts the same message $m$ under all its encryption keys $ek_j$ and puts random ciphertexts in the positions for which he does not know an encryption key. The sanitizer key contains all the sanitizer keys for the \oACE scheme: this allows the sanitizer to sanitize each component independently, in such a way that for all the positions for which the sender knows the encryption key, the message ``survives'' the sanitization, whereas in the other positions the output is uniformly random.

\paragraph{Example.} We conclude this informal introduction of our repetition scheme by giving a concrete example of an ACE scheme for the Bell-LaPadula access control policy with three levels of access: \textit{Level 1: top-secret, level 2: secret, level 3: public}.
The predicate of this access control is defined as $P(i,j) = 1 \Leftrightarrow i \geq j$. This predicate ensures the property of \textit{no write down} and \textit{no read up} as discussed in the introduction.
Table~\ref{table:BL_ACE} shows the structure of the keys and the ciphertext for the different levels of access.

\renewcommand{\oneACE}{}
\renewcommand{\oneACEpow}{{}}
\begin{table}
\centering
\begin{tabular}{l|l|ccc|ccccc}
$i$		&	\multicolumn{1}{|c|}{$ek_i$}	
		&	\multicolumn{3}{|c}{$dk_i$}	
		&	\multicolumn{5}{|c}{$c$}		\\ \hline
		
1		&	$\{ek_1^\oneACEpow \}$						
		&	$dk_1^\oneACEpow$	
		&&&	$($&$ \Enc(ek_1^\oneACEpow, m),$&$c'_2,$&$ c'_3 $&$)$ \\
		
2		&	$\{ek_1^\oneACEpow, ek_2^\oneACEpow \}$				
		&&	$dk_2^\oneACEpow$	
		&&	$($&$ \Enc(ek_1^\oneACEpow, m),$&$ \Enc(ek_2^\oneACEpow, m),$&$ c''_3 $&$)$ \\
		
3		&	$\{ek_1^\oneACEpow, ek_2^\oneACEpow, ek_3^\oneACEpow \}$			
		&&&	$dk_3^\oneACEpow$	
		&	$($&$ \Enc(ek_1^\oneACEpow, m),$&$ \Enc(ek_2^\oneACEpow, m),$&$ \Enc(ek_3^\oneACEpow, m) $&$)$ \\
\end{tabular}
\newline
\caption{Access Control Encryption Scheme for Bell-LaPadula access control policy. $c'_2,c'_3,c''_3$ are random ciphertexts from $\C$.}
\label{table:BL_ACE}
\end{table}
\renewcommand{\oneACE}{\mathsf{1ACE}}
\renewcommand{\oneACEpow}{\mathsf{1ACE}}

\begin{construction} \label{con:ACE}
Let $\oneACE = (\Setup, \Gen, \Enc, \San, \Dec)$ be a \oACE scheme. Then we can construct an ACE scheme $\ACE = (\Setup, \Gen, \Enc, \San, \Dec)$ defined by the following algorithms:
\end{construction}
\begin{description}
\item[Setup:] Let $n$ be the number of senders/receivers specified by the policy $P$. Then run $n$ copies of the $\oneACE$ setup algorithm
\begin{align*}
	(pp_i^\oneACEpow,msk_i^\oneACEpow) \gets \oneACE.\Setup(1^\secparam) \quad \text{for } i = 1,\dots,n
\end{align*}
For each of the $\oneACE$ master secret keys run the $\oneACE$ key generation algorithm on each of the three modes. For $i\in [n]$ do the following
\begin{align*}
	ek_i^\oneACEpow &\gets \oneACE.\Gen(msk_i^\oneACEpow, \sen) \\
	dk_i^\oneACEpow &\gets \oneACE.\Gen(msk_i^\oneACEpow, \rec) \\
	rk_i^\oneACEpow &\gets \oneACE.\Gen(msk_i^\oneACEpow, \san) 
\end{align*}

Output the public parameter and the master secret key\footnote{There exists some encoding function that takes a message $m$ from the message space of the $\ACE$ scheme and encodes it into a message of each of the \oACE message spaces. The ciphertext spaces of the $\ACE$ scheme are the crossproduct of all the \oACE ciphertext spaces, thus $\C = \C_1^\oneACEpow \times \cdots \times \C_n^\oneACEpow$ and $\C' = {\C'_1}^\oneACEpow \times \cdots \times {\C'_n}^\oneACEpow$.}
$$
	pp = \{ pp_i^\oneACEpow \}_{i \in [n]}, \qquad msk := \{ ek_i^\oneACEpow,dk_i^\oneACEpow, rk_i^\oneACEpow \}_{i \in [n]}
$$

\item[Key Generation:] On input an identity $i\in \{0,\dots,n+1\}$, a mode $\{\sen,\rec,\san\}$ and the master secret key $msk$, output a key depending on the mode
	\begin{itemize}
	\item $ek_i := \{ek_j^\oneACEpow \}_{j \in S}$, where $S \subseteq [n]$ is the subset s.t. $j \in S$ iff $P(i,j) = 1$;
	
	\item $dk_i := dk_i^\oneACEpow$;
	
	\item $rk := \{ rk_j^\oneACEpow \}_{j \in [n]}$;
	\end{itemize}

\item[Encrypt:] On input an encryption key $ek_i$ and a message $m$ encrypt the message under each of the $\oneACE$ encryption keys in $ek_i$ and sample uniform random ciphertext for each public key not in the encryption key. 
Thus, for $j = 1,\dots,n$ do the following
\begin{itemize}
\item If $ek_j^\oneACEpow \in ek_i$ then compute $c_j^\oneACEpow \gets \oneACE.\Enc(ek_j^\oneACEpow, m)$.

\item If $ek_j^\oneACEpow \notin ek_i$ then sample $c_j^\oneACEpow \gets_\$ \C_j^\oneACEpow$.\footnote{Here $c_j^\oneACEpow \gets_\$ \C_j^\oneACEpow$ is a shorthand for $c_j^\oneACEpow \gets \oneACE.\Enc(pp_j^\oneACEpow, \bot)$.}
\end{itemize}
Output the ciphertext $c := \left( c_1^\oneACEpow, \dots, c_n^\oneACEpow  \right)$.

\item[Sanitizer:] On input a ciphertext $c$ and a sanitizer key $rk$, sanitize each of the $n$ $\oneACE$ ciphertexts as follows
\begin{align*}
	{c'_i}^\oneACEpow \gets \oneACE.\San(rk_i^\oneACEpow,c_i^\oneACEpow) \quad \text{for } i = 1,\dots,n
\end{align*} 
Output the sanitized ciphertext $c' := ({c'_1}^\oneACEpow,\dots,{c'_n}^\oneACEpow)$.

\item[Decryption:] On input a ciphertext $c$ and a decryption key $dk_i$ decrypt the $i$'th $\oneACE$ ciphertext
\begin{align*}
	m' \gets \oneACE.\Dec(dk_i^\oneACEpow,c_i^\oneACEpow)
\end{align*}
\end{description}

We can prove that the scheme presented above satisfies \emph{correctness} as well as the \emph{no-read} and the \emph{no-write} rule, by reducing the properties of the repetition scheme to properties of the scheme with a single identity using hybrid arguments. 
The formal proofs are deferred to Appendix~\ref{app:repetitionproofs}. 

\section{Polylogarithmic ACE from iO}\label{sec:polylog}

In this section, we present our construction of ACE with polylogarithmic complexity in the number of roles $n$.

At first it might seem that it is easy to construct an ACE scheme with short ciphertexts by using NIZK and re-randomizable encryption: the sender would send to the sanitizer a ciphertext and a NIZK proving that the ciphertext is a well-formed encryption of some message using a public key that the sender is allowed to send to (for instance, each sender could have a signature on their identity to be able to prove this statement). Now the sanitizer drops the NIZK and passes on the re-randomized ciphertext. However, the problem is that the sanitizer would need to know the public key of the intended receiver to be able to re-randomize (and we do not want to reveal who the receiver is). 

As described in the introduction, we build our ACE scheme on top of a FE scheme which is \emph{sanitizable}, which roughly means that given a ciphertext it is possible to produce a new encryption of the same message which is independent of the randomness used in the original encryption (this is a relaxation of the well-known \emph{re-randomizability} property, in the sense that we do not require sanitized ciphertexts to look indistinguishable from fresh encryptions e.g., they can be syntactically different). We construct such an FE scheme by modifying the FE scheme of Garg et al.~\cite{DBLP:conf/focs/GargGH0SW13}, and therefore our construction relies on the assumption that indistinguishability obfuscation exists. We define and construct sFE in Section~\ref{sec:sfe} and then construct ACE based on sFE (and a regular PRF) in Section~\ref{sec:polylogace}.

\subsection{Sanitizable Functional Encryption Scheme -- Definition}\label{sec:sfe}
A \emph{sanitizable functional encryption} (sFE) scheme is defined by the following algorithms:
\begin{description}
\item[Setup:] The $\Setup$ algorithm on input the security parameter $\kappa$ outputs a master secret key $msk$ and public parameters $pp$, which include the message space $\M$ and ciphertext spaces $\C,\C'$.

\item[Key Generation:] The $\Gen$ algorithm on input the master secret key $msk$ and a function $f$, outputs a corresponding secret key $SK_f$.

\item[Encrypt:] The $\Enc$ algorithm on input the public parameters $pp$ and a message $m$, outputs a ciphertext $c \in \C$

\item[Sanitizer:] The $\San$ algorithm on input the public parameters $pp$ and a ciphertext $c \in \C$, transforms the incoming ciphertext into a sanitized ciphertext $c'\in\C'$ 

\item[Decryption:] The $\Dec$ algorithm on input a secret key $SK_f$ and a sanitized ciphertext $c'\in\C'$ that encrypts message $m$, outputs $f(m)$.



\end{description}

For the sake of exposition we also define a \emph{master decryption algorithm} that on input $c\from\Enc(pp,m)$, returns $m \from \MasterDec(msk,c)$.\footnote{Formally $\MasterDec$ is a shortcut for $\Dec(\Gen(msk,f_{id}),\San(pp,c))$, where $f_{id}$ is the identity function.} We formally define \emph{correctness} and \emph{IND-CPA} security for an sFE scheme (which are essentially the same as for regular FE), and then we define the new \emph{sanitizable} property which, as described above, is a relaxed notion of the re-randomization property.

\begin{defi}[Correctness for sFE]
Given a function family $\mathcal{F}$. For all $f \in \mathcal{F}$ and all messages $m \in \M$:
	\begin{align*}
	\Pr \left[ \Dec( \Gen(msk,f), \San(pp,\Enc(pp,m))) \neq f(m) \right] \leq \negl(\kappa)
	\end{align*}
where $(pp,msk) \from \Setup(1^\kappa)$ and the probabilities are taken over the random coins of all algorithms.
\end{defi}

\begin{defi}[IND-CPA Security for sFE] \label{def:FEnoread}
Consider the following game between a challenger $C$ and a stateful adversary $A$:
\begin{center}
\begin{small}
    \begin{tabular}{| p{5.5cm} | p{5cm} |}
    \hline
	\multicolumn{2}{|c|}{\textbf{\emph{ IND-CPA Security}}} \\
	\hline
	\multicolumn{1}{|c|}{\textbf{\emph{Game Definition}}} & \multicolumn{1}{|c|}{\textbf{\emph{Oracle Definition}}}  \\
	\hline
	\
	
	1. $(pp,msk)\from \Setup(1^\kappa)$;

	3. $(m_0,m_1) \from A^{\mathcal{O}(\cdot)}(pp)$;
	
	4. $b\from \zo$;

	5. $c^* \from \Enc(pp,m_b)$
	
	6. $b' \from A^{\mathcal{O}(\cdot)}(c^*)$;

	&
	\

	$\mathcal{O}(f_i)$: \newline 1. Output $SK_{f_i} \from \Gen(msk,f_i)$;

	\\
	\hline
	\end{tabular}
\end{small}
\end{center}
We say that $A$ \emph{wins the IND-CPA game} if $b=b'$, $|m_0|=|m_1|$, and that $f_i(m_0) = f_i(m_1)$ for all oracle queries $f_i$.
We say a sFE scheme satisfies the IND-CPA security property if for all PPT $A$
$$
\adv{A} = 2\cdot \left|\Pr[A \mbox{ wins the IND-CPA game}]-\frac{1}{2}\right|\leq \negl(\kappa)
$$ 
\end{defi}

\begin{defi}[Sanitization for sFE] \label{def:FEnowrite}
Consider the following game between a challenger $C$ and a stateful adversary $A$:
\begin{center}
\begin{small}
    \begin{tabular}{| p{8cm} |}
    \hline
	\multicolumn{1}{|c|}{\textbf{\emph{Sanitization}}} \\
	\hline
	\multicolumn{1}{|c|}{\textbf{\emph{Game Definition}}}  \\
	\hline
	\

	1. $(pp,msk)\from \Setup(1^\kappa)$;

	2. $c \from A(pp,msk)$;

	3. $b\from \zo$, 
	\begin{itemize}
		\item If $b=0$, $c^*\from \San(pp,c)$;
		\item If $b=1$, $c^*\from \San(pp,\Enc(pp,\MasterDec(msk,c)))$; 
	\end{itemize}
	4. $b' \from A(c^*)$;

	\\
	\hline
	\end{tabular}
\end{small}
\end{center}
We say that $A$ \emph{wins the sanitizer game} if $b=b'$.
We say a sFE scheme is sanitizable if for all PPT $A$
$$
\adv{A} = 2\cdot \left|\Pr[A \mbox{ wins the sanitizer game}]-\frac{1}{2}\right|\leq \negl(\kappa)
$$ 
\end{defi}

Note that in Definition~\ref{def:FEnowrite} the adversary has access to the master secret key.






\subsection{Sanitizable Functional Encryption Scheme -- Construction}\label{sec:sfecons}
We now present a construction of a sFE scheme based on iO. The construction is based on the functional encryption construction by Garg et. al~\cite{DBLP:conf/focs/GargGH0SW13}. In their scheme a ciphertext contains two encryptions of the same message and a NIZK of this statement, thus an adversary can leak information via the randomness in the encryptions or the randomness in the NIZK. In a nutshell we make their construct sanitizable by:
\begin{enumerate}
\item Replacing the PKE scheme with a \emph{sanitizable PKE} (as formalized in Definition~\ref{def:PKE_sanitation}).
\item Letting the sanitizer drop the original NIZK, and append a \emph{proof of a proof} instead (i.e., a proof that the sanitizer knows a proof that would make the original verifier accept). Thanks to the ZK property the new NIZK does not contain any information about the randomness used to generate the original NIZK.
\item Changing the decryption keys (obfuscated programs) to check the new proof instead.

\end{enumerate}

\paragraph{Building Blocks.} We formalize here the definition of sanitization for a PKE scheme. Any re-randomizable scheme (such as Paillier and ElGamal) satisfies perfect PKE sanitization, but it might be possible that more schemes fit the definition as well.

\begin{defi}[Perfect PKE Sanitization] \label{def:PKE_sanitation}
Let $\M$ be the message space and $\R$ be the space from which the randomness for the encryption and sanitization is taken. 
Then for every message $m \in \M$ and for all $r,s,r' \in \R$ there exists $s' \in \R$ such that
\begin{align*}
	\San(pk,\Enc(pk,m;r);s) = \San(pk,\Enc(pk,m;r');s')
\end{align*}
\end{defi}

Our constructions also uses (by now standard) tools such as \emph{pseudo-random functions (PRF)}, \emph{indistinguishability obfuscation (iO)} and \emph{statistical simulation-sound non-interactive zero-knowledge (SSS-NIZK)}, which are defined for completeness in Appendix~\ref{app:bb}.

\paragraph{Constructing sFE.} We are now ready to present our construction of sFE.

\begin{construction} \label{con:sanFE}
Let $\sanPKE = (\Setup, \Enc, \San, \Dec)$ be a perfect sanitizable public key encryption scheme.
Let $\NIZK = (\Setup, \Prove, \Verify)$ be a statistical simulation-sound NIZK.
Let $iO$ be an indistinguishability obfuscator.
We construct a sanitizable functional encryption scheme $\sanFE = (\Setup, \Gen, \Enc,\San, \Dec)$ as follows:
\end{construction}
\begin{description}
\item[Setup:] On input the security parameter $\kappa$ the setup algorithm compute the following
	\begin{enumerate}
	\item $(pk_1,sk_1) \gets \sanPKE.\Setup(1^\kappa)$;
	\item $(pk_2,sk_2) \gets \sanPKE.\Setup(1^\kappa)$;
	\item $crs_E \gets \NIZK.\Setup(1^\kappa,R_E)$;
	\item $crs_S \gets \NIZK.\Setup(1^\kappa,R_S)$;
	\item Output  $pp = (crs_E, crs_S, pk_1, pk_2)$ and  $msk = sk_1$;
	\end{enumerate}

	The relations $R_E$ and $R_S$ are defined as follows: Let $x_E=(c_1,c_2)$ be a statement and $w_E = (m,r_1,r_2)$ a witness, then $R_E$ is defined as
	\begin{align*}
		R_E = \left\lbrace (x_E,w_E) \mid c_1 = \sanPKE.\Enc(pk_1,m;r_1) \wedge c_2 = \sanPKE.\Enc(pk_2,m;r_2) \right\rbrace
	\end{align*}
	
	Let $x_S=(c'_1,c'_2)$ be a statement and $w_S = (c_1,c_2,s_1,s_2,\pi_E)$ a witness, then $R_S$ is defined as
	\begin{align*}
		R_S = \left\lbrace (x_S,w_S) \middle|	
			\begin{array}{l}
				c'_1 = \sanPKE.\San(pk_1,c_1;s_1) \wedge c'_2 = \sanPKE.\Enc(pk_2,c_2;s_2)  \\
				\wedge \NIZK.\Verify(crs_E,(c_1,c_2),\pi_E) = 1
			\end{array} 
		\right\rbrace
	\end{align*}

\item[Key Generation:] On input the master secret key $msk$ and a function $f$ output the secret key $SK_f = iO(P)$ as the obfuscation of the following program

\begin{center}
\begin{small}
    \begin{tabular}{| p{6cm} |}
	\hline
	\multicolumn{1}{|c|}{\textbf{\emph{Program} $P$}} \\
	\hline 
	\
	
	Input: $c'_1,c'_2,\pi_S$; 
	
	Const: $crs_S,f,sk_1$;
	
	\

	1. If $\NIZK.\Verify(crs_S,(c'_1,c'_2),\pi_S) = 1$;
	\begin{itemize}
		\item[] output $f(\sanPKE.\Dec(sk_1,c'_1))$;
	\end{itemize}

	2. else output fail;

	\\
	\hline
	\end{tabular}
\end{small}
\end{center}

\item[Encrypt:] On input the public parameters $pp$ and a message $m$ compute two PKE encryptions of the message
	\begin{align*}
		c_1 &\gets \sanPKE.\Enc(pk_1,m;r_1) \\
		c_2 &\gets \sanPKE.\Enc(pk_2,m;r_2)
	\end{align*}
	with randomness $(r_1,r_2)$.	Then create a proof $\pi_E$ that $(x_E,w_E) \in R_E$ with $x_E = (c_1,c_2)$ and witness $w_E = (m,r_1,r_2)$ 
	\begin{align*}
		\pi_E \gets \NIZK.\Prove(crs_E,x_E,w_E;t_E)
	\end{align*}
	using randomness $t_E$. 
	Output the triple $c = (c_1,c_2,\pi_E)$ as the ciphertext.

\item[Sanitizer:] On input the public parameter $pp$ and a ciphertext $c = (c_1,c_2,\pi_E) \in \C$ compute the following
	\begin{enumerate}
	\item If $\NIZK.\Verify(crs_E,x_E,\pi_E) = 1$ then
		\begin{itemize}
		\item[] $c'_1 \gets \sanPKE.\San(pk_1,c_1;s_1)$
		\item[] $c'_2 \gets \sanPKE.\San(pk_2,c_2;s_2)$
		\item[] $\pi_2 \gets \NIZK.\Prove(crs_S,x_S,w_S;t_S)$
		\item[] Output $c' = (c'_1,c'_2,\pi_2)$
		\end{itemize}
	\item Else
		\begin{itemize}
		\item[] Output $c' \from \sanFE.\San(pp,\sanFE.\Enc(pp,\bot))$
		\end{itemize}
	\end{enumerate}
	with randomness $(s_1,s_2)$ and $t_S$ in the PKE and NIZK respectively. The  generated NIZK is a proof that $(x_S,w_S) \in R_S$ with $x_S = (c'_1,c'_2)$  and $w_S = (c_1,c_2,s_1,s_2,\pi_E)$.

\item[Decryption:] On input a secret key $SK_f$ and a ciphertext $c' = (c'_1,c'_2,\pi_S) \in \C'$, run the obfuscated program $SK_f(c'_1,c'_2,\pi_S)$ and output the answer.

\end{description}

\begin{lem}
Construction~\ref{con:sanFE} is a correct functional encryption scheme.
\end{lem}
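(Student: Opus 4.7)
The plan is to trace an honestly generated ciphertext through the full pipeline---encryption, sanitization, and evaluation of the obfuscated decryption key---and verify at each junction that the non-failure branch is taken, so that the end result equals $f(m)$. Concretely, fix $f$ and $m \in \M$, and run $(pp,msk) \from \sanFE.\Setup(1^\kappa)$, $SK_f \from \sanFE.\Gen(msk,f)$, $c \from \sanFE.\Enc(pp,m)$, $c' \from \sanFE.\San(pp,c)$; the goal is to show $\sanFE.\Dec(SK_f,c') = f(m)$ with overwhelming probability.

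First I would argue that the sanitizer enters its first (non-fail) case. By construction $c = (c_1,c_2,\pi_E)$ with $c_i = \sanPKE.\Enc(pk_i,m;r_i)$ and $\pi_E \from \NIZK.\Prove(crs_E,(c_1,c_2),(m,r_1,r_2))$. The tuple $(m,r_1,r_2)$ is by definition a valid witness for $R_E$, so completeness of the NIZK gives $\NIZK.\Verify(crs_E,(c_1,c_2),\pi_E) = 1$ with overwhelming probability. The sanitizer therefore produces $c' = (c'_1,c'_2,\pi_S)$ where $c'_i = \sanPKE.\San(pk_i,c_i;s_i)$ and $\pi_S$ is generated from the witness $(c_1,c_2,s_1,s_2,\pi_E)$, which again satisfies $R_S$ by construction, so $\pi_S$ also verifies with overwhelming probability.

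Next I would trace the obfuscated decryption. Since $iO$ preserves the functionality of the program it obfuscates, $SK_f(c'_1,c'_2,\pi_S)$ behaves as $P(c'_1,c'_2,\pi_S)$, and because $\NIZK.\Verify(crs_S,(c'_1,c'_2),\pi_S) = 1$, the program takes its first branch and returns $f(\sanPKE.\Dec(sk_1,c'_1))$. It therefore suffices to show $\sanPKE.\Dec(sk_1,c'_1) = m$ with overwhelming probability.

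This last step is the only one that is not pure bookkeeping, and is where I expect the main (though minor) obstacle to sit. Definition~\ref{def:PKE_sanitation} only captures that two sanitized encryptions of the same plaintext are identically distributed; it does not explicitly say that the honest $\sanPKE.\Dec$ recovers $m$ from the \emph{sanitized} ciphertext $\sanPKE.\San(pk_1,\sanPKE.\Enc(pk_1,m;r_1);s_1)$ rather than merely from a fresh encryption. For re-randomizable candidates such as ElGamal or Paillier this holds trivially (sanitization is re-randomization and preserves the plaintext), and I would formalize it either as an explicit hypothesis on $\sanPKE$ or as an added clause in Definition~\ref{def:PKE_sanitation}. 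Granted this, the failure probabilities of NIZK completeness, $iO$ functionality preservation, and $\sanPKE$ correctness under sanitization compose by a union bound to give the desired negligible error, concluding correctness.
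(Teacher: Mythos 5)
Your proposal is correct and follows essentially the same route as the paper, whose proof is a one-line remark that correctness follows from the correctness of $iO$, the PKE, and the SSS-NIZK together with inspection of the algorithms---exactly the pipeline you spell out in detail. The caveat you flag about Definition~\ref{def:PKE_sanitation} is fair, but the paper implicitly folds decryption-after-sanitization into ``correctness of the PKE'' (as indeed holds for the re-randomizable instantiations such as ElGamal and Paillier), so your added hypothesis is just making that assumption explicit rather than diverging from the paper's argument.
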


\begin{proof}
Correctness follows from the correctness of the $iO$, PKE, and SSS-NIZK schemes, and from inspection of the algorithms.
\end{proof}

\begin{lem}\label{lem:IND-CPA_sFE}
For any adversary $A$ that breaks the IND-CPA security property of Construction~\ref{con:sanFE}, there exists an adversary $B$ for the computational zero-knowledge property of the NIZK scheme, an adversary $C$ for the IND-CPA security of the PKE scheme, and an adversary $D$ for iO such that the advantage of adversary $A$ is
\begin{align*}
	\adv{\sanFE,A} \leq 4|\M| \left( \adv{\NIZK,B} + \adv{\sanPKE,C} + q\cdot\adv{iO,C} (1 - 2\sssprob) \right)
\end{align*}
where $q$ is the number of secret key queries adversary $A$ makes during the game, and $\sssprob$ is the negligible soundness error of the SSS-NIZK scheme.
\end{lem}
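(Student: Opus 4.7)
The plan is to prove IND-CPA security by a hybrid argument modeled on the Garg et al.\ functional encryption proof, adapted to thread the extra sanitization layer. Starting from the game with $b=0$, I would (i) switch $crs_E$ to simulation mode and replace the challenge proof $\pi_E^*$ by a simulated one, paying $\adv{\NIZK,B}$; (ii) use IND-CPA of $\sanPKE$ under $pk_2$ to replace $c_2^*$ by an encryption of $m_1$ (the reduction still holds $sk_1 = msk$ and can answer every key query), paying $\adv{\sanPKE,C}$; (iii) swap each returned key $iO(P_{sk_1})$ to $iO(P_{sk_2})$ one at a time, accumulating $q\cdot \adv{iO,C}$; (iv) similarly switch $c_1^*$ to an encryption of $m_1$ by IND-CPA of $\sanPKE$ under $pk_1$; (v) reverse the key swap back to $sk_1$; (vi) revert $crs_E$ to real mode with a legitimate proof for the now-honest challenge. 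The final hybrid is the game with $b=1$, so summing the transitions gives IND-CPA.

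The core technical step is (iii): for iO to apply I need the two obfuscated programs to be functionally equivalent on every input $(c_1',c_2',\pi_S)$. If $\pi_S$ verifies, statistical simulation-soundness of $crs_S$ places $(c_1',c_2')$ in $L_S$ (up to error $\sssprob$), so there is a witness $(c_1,c_2,s_1,s_2,\pi_E)$ whose perfect-sanitization guarantee forces $\sanPKE.\Dec(sk_1,c_1')$ and $\sanPKE.\Dec(sk_2,c_2')$ to equal the plaintexts of $c_1$ and $c_2$ respectively. When $(c_1,c_2)\neq (c_1^*,c_2^*)$, SSS of $crs_E$ ensures the accepting inner $\pi_E$ is for a true statement, so the two plaintexts coincide and the programs agree. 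The genuinely dangerous case is $(c_1,c_2)=(c_1^*,c_2^*)$ together with the simulated $\pi_E^*$: after step~(ii) this pair encrypts $(m_0,m_1)$, so naively the two programs would output $f(m_0)$ and $f(m_1)$ — but those values coincide by the IND-CPA game's restriction that $f(m_0)=f(m_1)$ for every queried $f$, so the programs still agree pointwise and iO applies.

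The main obstacle I expect is reconciling the adaptivity of the adversary's choice of $(m_0,m_1)$ with the non-adaptive flavour of simulation-soundness, which fixes the simulated statement before the challenge is seen. The standard resolution is to guess the underlying plaintext upfront by complexity leveraging over $\M$, which is where the $|\M|$ factor in the bound originates; the $(1-2\sssprob)$ correction reflects the statistical soundness error absorbed at each iO hop, and the factor $4$ accounts for the four distinct ciphertext/CRS switches performed symmetrically in the forward and reverse directions. Routine book-keeping on the union bound over the $q$ iO hops and the two sanitizable-PKE reductions then yields the claimed bound $4|\M|\bigl(\adv{\NIZK,B}+\adv{\sanPKE,C}+q\cdot \adv{iO,C}(1-2\sssprob)\bigr)$.
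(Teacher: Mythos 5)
Your proposal is correct and follows essentially the same route as the paper's proof in Appendix~\ref{app:IND-CPA_sFE}: the same hybrid chain (simulate $crs_E$/$\pi_E^*$ via computational ZK, switch $c_2^*$ under $pk_2$, swap the $q$ secret keys from the $sk_1$-program to the $sk_2$-program one at a time via iO, switch $c_1^*$, swap the keys back, un-simulate the proof), the same functional-equivalence case analysis for the iO step (statistical soundness of $crs_S$, simulation-soundness of $crs_E$ for statements other than the challenge, and the constraint $f(m_0)=f(m_1)$ handling the simulated challenge statement), and the same complexity-leveraging guess of the challenge plaintexts to handle adaptivity, which is exactly where the paper's $|\M|$ factor, the doubled ZK/PKE/iO terms, and the $(1-2\sssprob)$ correction come from.
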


\begin{proof}
This proof follows closely the selective IND-CPA security proof of the FE construction presented by Garg et. al.~\cite{DBLP:conf/focs/GargGH0SW13}.
See Appendix~\ref{app:IND-CPA_sFE} for the full proof.
\end{proof}

\begin{lem}\label{lem:Sanitize_sFE}
For any adversary $A$ that breaks the sanitizer property of Construction~\ref{con:sanFE}, there exists an adversary $B$ for the computational zero-knowledge property of the NIZK scheme such that the advantage of adversary $A$ is
\begin{align*}
	\adv{\sanFE,A} \leq 2|\M| \adv{\NIZK,B} 
\end{align*}
\end{lem}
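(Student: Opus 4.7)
\textbf{Proof proposal for Lemma~\ref{lem:Sanitize_sFE}.}
The plan is to interpolate between the $b=0$ and $b=1$ branches of the sanitizer game by a short chain of hybrids, isolating where each case could possibly differ. In both branches the output is of the form $(c'_1,c'_2,\pi_2)$, where $c'_1,c'_2$ are sanitizations of PKE ciphertexts carrying a common underlying message and $\pi_2$ is a NIZK over the sanitized tuple. The only places information about the input $c$ can leak are the randomness of the PKE sanitizations (which is perfectly hidden by Definition~\ref{def:PKE_sanitation}) and the witness used inside $\pi_2$ (which is hidden by zero-knowledge of the NIZK).

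First I would split on whether $\NIZK.\Verify(crs_E,(c_1,c_2),\pi_E)=1$ on the adversary's ciphertext $c=(c_1,c_2,\pi_E)$. If it does not verify, the sanitizer outputs $\San(pp,\Enc(pp,\bot))$ in the $b=0$ branch, and in the $b=1$ branch $\MasterDec(msk,c)$ is defined via exactly that same $\San$ call, so by correctness both branches are identical. If $\pi_E$ does verify, statistical simulation-soundness of the NIZK guarantees that $(c_1,c_2)\in L_E$ except with negligible probability, hence $c_1,c_2$ both encrypt a common message $m$ and $\MasterDec(msk,c)=m$.

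On the "valid $\pi_E$" case I would define three hybrids between the $b=0$ output $c^*_{0}=\San(pp,c)$ and the $b=1$ output $c^*_{1}=\San(pp,\Enc(pp,m))$:
\begin{enumerate}
\item $H_1$: same as $b=0$, but $\pi_2$ is replaced by a simulated NIZK proof on statement $x_S=(c'_1,c'_2)$.
\item $H_2$: same as $H_1$, but the $c_1,c_2$ fed into $\sanPKE.\San$ are replaced by fresh encryptions $\tilde c_1,\tilde c_2$ of $m$ under $pk_1,pk_2$.
\item $H_3$: same as $H_2$, but $\pi_2$ is now honestly generated using the witness $(\tilde c_1,\tilde c_2,s_1,s_2,\tilde\pi_E)$, where $\tilde\pi_E$ is a fresh encryption proof.
\end{enumerate}
Hybrid $H_3$ coincides exactly with the $b=1$ distribution. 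The transitions $(b{=}0)\to H_1$ and $H_2\to H_3$ are each bounded by the computational zero-knowledge advantage of the NIZK scheme: the reduction $B$ embeds $crs_S$ into $pp$, runs the sanitizer adversary with the genuine $msk$, and forwards the statement/witness pair to its own ZK challenger, returning whatever it receives as $\pi_2$. The transition $H_1\to H_2$ is information-theoretically zero by perfect PKE sanitization, since for any two PKE ciphertexts encrypting the same message $m$ the sanitized output has the same distribution under a uniform sanitization seed.

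The main obstacle I foresee is extracting the factor $2|\M|$ in the stated bound, which likely arises from a selective-to-adaptive guessing argument over the underlying message $m=\MasterDec(msk,c)$ so that the NIZK challenger can be invoked on a statement fixed before seeing the adversary's ciphertext (each of the two ZK transitions contributes one such factor). Care is also needed to handle the corner case where the adversary, being given $msk$, crafts a $c$ with a verifying $\pi_E$ but with $c_1,c_2$ encrypting different messages: the statistical simulation-soundness bound absorbs this into the negligible term and lets the chain of hybrids go through unchanged.
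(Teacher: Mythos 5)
Your overall plan — guess the message to fix things in advance, use perfect PKE sanitization to trade the adversary's ciphertexts for fresh encryptions of $m=\MasterDec(msk,c)$, and pay two NIZK zero-knowledge hops for the proof component, yielding the $2|\M|$ loss — is the same as the paper's. However, the \emph{order} of your hybrids breaks the first zero-knowledge reduction. In your $H_1$ the simulated proof $\pi_2$ is for the statement $x_S=(c'_1,c'_2)$ obtained by sanitizing the adversary's own ciphertexts $c_1,c_2$. But in the SSS-NIZK used here the simulator is statement-dependent: $(crs,\tau)\from\Sim_1(1^\kappa,x)$, i.e.\ the simulated $crs_S$ is programmed for one specific statement, and $crs_S$ is part of $pp$, which must be handed to the adversary \emph{before} it outputs $c$. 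Likewise, in the ZK game the reduction submits $(x,w)$ and only then receives $(crs,\pi)$, so your reduction $B$ cannot ``embed $crs_S$ into $pp$'' and later forward the statement derived from the adversary's ciphertext. Guessing $m$ does not repair this: $m$ does not determine $(c'_1,c'_2)$, and guessing the full statement would cost far more than the factor $|\M|$. So the transition $(b{=}0)\to H_1$, as described, has no valid reduction to the computational zero-knowledge property as defined for this NIZK.

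The paper's proof avoids this by doing the steps in the opposite order: first (after guessing $m$ and checking $\MasterDec(msk,c)=m$) it uses perfect PKE sanitization to replace the sanitized adversarial ciphertexts by sanitizations of \emph{fresh} encryptions of $m$ generated by the challenger, so that the statement $x_S=(c^*_1,c^*_2)$ is known before $pp$ is issued; only then does it simulate $(crs_S,\pi_S^*)$ via $\Sim_1(1^\kappa,x_S),\Sim_2$, and finally switch back to an honestly generated $crs_S$ and proof over the fresh witness, which is exactly the $b=1$ distribution. Your ordering (simulate, switch ciphertext/witness under simulation, de-simulate) is the textbook way to handle a witness switch with an adaptive multi-theorem NIZK, but it is incompatible with the single-statement, statement-programmed simulation that the statistical simulation-soundness of this NIZK relies on; with the tools of Construction~\ref{con:sanFE} you must perform the perfect-sanitization switch first, as the paper does. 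Your second ZK hop ($H_2\to H_3$) and the treatment of the invalid-$\pi_E$ and soundness corner cases are fine.
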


\begin{proof}
This lemma is proven via a series of indistinguishable hybrid games between the challenger and the adversary. For the proof to go through we notice that the challenger needs to simulate the NIZK proof. At a first look it might seem that the reduction needs to guess the entire ciphertext before setting up the system parameter, but in fact we show that it is enough to guess the \emph{message} beforehand! Thus, we can use a complexity leveraging technique to get the above advantage.
See Appendix~\ref{app:Sanitize_sFE} for the full proof.

\end{proof}

\subsection{Polylog ACE scheme}\label{sec:polylogace}

In this section, we present a construction of an ACE scheme for multiple identities based on sanitizable functional encryption. The idea of the construction is the following: an encryption of a message $m$ is a sFE encryption of the message together with the senders identity $i$ and a MAC of the message based on the identity. Crucially, the encryption keys for all identities are generated in a pseudorandom way from a master key, thus it is possible to check MACs for all identities using a compact circuit.
The sanitizer key is a sFE secret key for a special function that checks that the MAC is correct for the claimed identity. 
Then the sanitization consists of sanitizing the sFE ciphertext, and then using the sanitizer key to check the MAC.
The decryption key for identity $j$ is a sFE secret key for a function that checks that identity $i$ in the ciphertext and identity $j$ are allowed to communicate (and ignores the MAC). The function then outputs the message iff the check goes through.

\begin{construction} \label{con:ACEfromFE}
Let $\sanFE = (\Setup, \Gen, \Enc, \San, \Dec)$ be a sanitizable functional encryption scheme. Let $F_1,F_2$ be pseudorandom functions.
Then we can construct an ACE scheme $\ACE = (\Setup, \Gen, \Enc, \San, \Dec)$ defined by the following algorithms:
\end{construction}
\begin{description}
\item[Setup:]
Let $K \from \zo^\kappa$ be a key for the pseudorandom function $F_1$. Run $(pp^\sanFEpow,msk^\sanFEpow) \from \sanFE.\Setup(1^\kappa)$. 
Output the public parameter $pp = pp^\sanFEpow$ and the master secret key $msk = (msk^\sanFEpow,K)$

\item[Key Generation:]
Given the master secret key $msk$ and an identity $i$, the encryption, decryption and sanitizer key are computed as follows:
\begin{itemize}
\item $ek_i \from F_1(K,i)$
\item $dk_i \from \sanFE.\Gen(msk^\sanFEpow, f_i)$
\item $rk \from \sanFE.\Gen(msk^\sanFEpow, f_{rk})$
\end{itemize}
where the functions $f_i$ and $f_{rk}$ are defined as follows

\begin{center}
\begin{small}
    \begin{tabular}{| p{5.5cm} | p{5cm} |}
	\hline
	\multicolumn{1}{|c|}{\textbf{\emph{Decryption function}}} & \multicolumn{1}{|c|}{\textbf{\emph{Sanitizer function}}}  \\
	\hline
	\
	
	$f_i(m,j,t)$:

	1. If $P(j,i) = 1$: output $m$;

	2. Else output $\bot$;

	&
	\

	$f_{rk}(m,j,t)$: 
	
	1. $ek_j = F_{1}(K,j)$;
	
	2. If $t = F_{2}(ek_j,m)$: output $1$;
	
	3. Else output $0$;

	\\
	\hline
	\end{tabular}
\end{small}
\end{center}

\item[Encryption:]
On input a message $m$ and an encryption key $ek_i$, compute $t = F_2(ek_i,m)$ and output $$c = \sanFE.\Enc(pp^\sanFEpow,(m,i,t))$$

\item[Sanitizer:]
Given a ciphertext $c$ and the sanitizer key $rk = SK_{rk}$ check the MAC and output a sanitized FE ciphertext
\begin{enumerate}
\item $c' = \sanFE.\San(pp^\sanFEpow,c)$
\item If $\sanFE.\Dec(SK_{rk},c') = 1$: output $c'$
\item Else output $\San(rk,\Enc(ek_0,\bot))$ 
\end{enumerate}

\item[Decryption:] 
Given a ciphertext $c'$ and a decryption key $dk_j = SK_j$ output $$m' = \sanFE.\Dec(SK_j,c')$$
\end{description}

\begin{lem} \label{lem:FEcorrect}
Construction~\ref{con:ACEfromFE} is a \emph{correct} ACE scheme 
\end{lem}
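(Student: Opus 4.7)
The plan is to trace through the three phases (encryption, sanitization, decryption) and invoke correctness of the underlying sFE scheme at each step where a decryption occurs.

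First, observe that an honest encryption produces $c = \sanFE.\Enc(pp^\sanFEpow,(m,i,t))$, where $t = F_2(ek_i,m)$ and $ek_i = F_1(K,i)$. By construction the plaintext that is sFE-encrypted is the tuple $(m,i,t)$. Next I would examine the sanitizer's behaviour: it sets $c' = \sanFE.\San(pp^\sanFEpow,c)$ and then runs $\sanFE.\Dec(SK_{rk},c')$. Since $SK_{rk} = \sanFE.\Gen(msk^\sanFEpow,f_{rk})$, correctness of the sFE scheme implies that $\sanFE.\Dec(SK_{rk},c') = f_{rk}(m,i,t)$ except with negligible probability. Now $f_{rk}(m,i,t)$ recomputes $ek_i' = F_1(K,i) = ek_i$ (the function $F_1$ is deterministic) and checks whether $t = F_2(ek_i',m)$. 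By the definition of $t$, equality holds, so $f_{rk}$ outputs $1$ and the sanitizer returns $c'$ rather than the dummy ciphertext.

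Next, the receiver computes $\sanFE.\Dec(SK_j,c')$ where $SK_j = \sanFE.\Gen(msk^\sanFEpow,f_j)$. Applying correctness of sFE a second time, this equals $f_j(m,i,t)$ except with negligible probability. Since $P(i,j)=1$ by hypothesis, the definition of $f_j$ yields $f_j(m,i,t)=m$, so the receiver recovers $m$.

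Combining the two applications of sFE correctness by a union bound, the overall failure probability is negligible in $\kappa$. There is essentially no hard step here: the argument is a straightforward syntactic check together with two invocations of the sFE correctness definition, and one must only notice that the pseudorandom function $F_1$ being deterministic is what guarantees that the $ek_j$ recomputed inside $f_{rk}$ matches the one used by the honest sender. (Note also that the randomness used to sample keys and the randomness used inside $\sanFE.\Enc$ and $\sanFE.\San$ are independent, so the two correctness invocations can be combined with no subtle conditioning issue.)
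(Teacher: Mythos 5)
Your proof is correct and follows essentially the same route as the paper: a syntactic trace of the honest encryption and sanitization followed by invoking sFE correctness to conclude that decryption under $SK_j$ yields $f_j(m,i,t)=m$ when $P(i,j)=1$. In fact you are slightly more thorough than the paper's proof, which glosses over the step where the sanitizer's MAC check $\sanFE.\Dec(SK_{rk},c')=1$ succeeds, whereas you justify it explicitly via a second application of sFE correctness and the determinism of $F_1$.
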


\begin{proof}
Let $P(i,j) = 1$ for some $i,j$. Let $c'$ be a honest sanitization of a honest generated encryption of message $m$ under identity $i$: 
\begin{align*}
c' = \San(rk,\Enc(ek_i,m)) 
	= \sanFE.\San(pp^\sanFEpow,\sanFE.\Enc(pp^\sanFEpow,(m,i,F_{2}(ek_i,m))))
\end{align*}

Given the decryption key $dk_j = SK_j \from \sanFE.\Gen(msk,f_j)$. Then the correctness property of the $\sanFE$ scheme gives
\begin{align*}
	\Pr \left[ 
		\Dec(dk_j, c') = m
	\right] 
		= \Pr \left[ 
			\sanFE.\Dec(SK_j,c') = m
		\right] 
		\leq \negl(\kappa)
\end{align*}
\end{proof}

\begin{thm} \label{thm:ACE_sFE_no-read}
For any adversary $A$ that breaks the No-Read Rule of Construction~\ref{con:ACEfromFE}, there exists an adversary $B$ for the IND-CPA security of the sanitizable functional encryption scheme, such that the advantage of $A$ is
	\begin{align*}
	\adv{\ACE,A} \leq \adv{\sanFE,B}
	\end{align*}
\end{thm}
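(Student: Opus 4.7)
The plan is a direct reduction: any adversary $A$ winning the No-Read game against $\ACE$ yields an adversary $B$ winning the IND-CPA game against the underlying $\sanFE$ scheme with essentially the same advantage. Upon receiving $pp^\sanFEpow$ from its challenger, $B$ samples a PRF key $K$ for $F_1$ itself and passes $pp = pp^\sanFEpow$ to $A$. To answer $A$'s $\mathcal{O}_G(j,t)$ queries, $B$ returns $ek_j = F_1(K, j)$ directly when $t = \sen$, forwards a query for $f_j$ to its own sFE key oracle when $t = \rec$, and forwards a query for $f_{rk}$ when $t = \san$ (in which case $j = n+1$). Encryption queries $\mathcal{O}_E(i,m)$ are answered by computing $t = F_2(F_1(K,i), m)$ and returning $\sanFE.\Enc(pp^\sanFEpow, (m, i, t))$. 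When $A$ outputs $(m_0, m_1, i_0, i_1)$, $B$ forms the two sFE messages $M_b = (m_b, i_b, F_2(F_1(K, i_b), m_b))$ for $b \in \zo$, submits them as its own challenge, forwards the returned ciphertext $c^*$ to $A$, continues to simulate the oracles in the same way, and finally outputs $A$'s guess.

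Since $B$ uses the genuine PRF key $K$ and the real $\sanFE$ key oracle, $A$'s view up to the challenge is simulated perfectly; and since $c^* = \sanFE.\Enc(pp^\sanFEpow, M_b)$ is by construction distributed exactly as $\Enc(ek_{i_b}, m_b)$ in $\ACE$, the post-challenge view is identical to a real run of the No-Read game. Therefore $B$ wins whenever $A$ does, matching the claimed advantage exactly.

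The one step that requires care is verifying that $B$ plays a legal IND-CPA challenge, i.e.\ that $f(M_0) = f(M_1)$ for every $f$ that $B$ queried to its key oracle. These queries are of two shapes: $f_j$ for identities $j$ that $A$ queried as receivers, and possibly $f_{rk}$. Under Payload Privacy we have $P(i_0,j) = P(i_1,j) = 0$ for every such $j$, so $f_j(M_0) = f_j(M_1) = \bot$; under Sender Anonymity we have $P(i_0,j) = P(i_1,j)$ together with $m_0 = m_1$, so $f_j$ either outputs $\bot$ on both or $m_0$ on both. For $f_{rk}$, each $M_b$ is constructed with the MAC value $F_2(F_1(K, i_b), m_b)$ that exactly matches what $f_{rk}$ recomputes, so $f_{rk}(M_0) = f_{rk}(M_1) = 1$. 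The legality check is thus the only real content of the proof, and it follows entirely from the syntactic shape of the winning conditions; the rest of the argument is a perfect simulation. Hence $B$ is a legitimate IND-CPA adversary and $\adv{\ACE,A} \leq \adv{\sanFE,B}$.
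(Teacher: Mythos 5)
Your reduction is exactly the paper's own proof: simulate $\mathcal{O}_G$ and $\mathcal{O}_E$ by sampling the PRF key $K$ yourself and forwarding $f_j$/$f_{rk}$ key queries to the sFE oracle, submit $(m_b,i_b,F_2(F_1(K,i_b),m_b))$ as the IND-CPA challenge, and verify legality via the payload-privacy/sender-anonymity conditions plus $f_{rk}(M_0)=f_{rk}(M_1)=1$. The argument is correct and matches the paper's proof in both structure and detail.
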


\begin{proof}
Assume that any adversary wins the IND-CPA security game of the sanitizable functional encryption (sFE) scheme with advantage at most $\epsilon$.
Assume for contradiction that there is an adversary $A$ that wins the ACE no-read game with advantage greater than $\epsilon$, then we can construct an adversary $B$ that wins the IND-CPA security game for the sFE scheme with advantage greater than $\epsilon$.

$B$ starts by generating $K \from \zo^\kappa$ for some pseudorandom function $F_1$. Then $B$ receives $pp^\sanFEpow$ from the challenger and forwards it as the ACE public parameter to the adversary $A$. 
Adversary $A$ then performs some oracle queries to $\mathcal{O}_G$ and $\mathcal{O}_E$ to which $B$ replies as follows: 
\begin{itemize}
\item $B$ receives $(j,\sen)$, then he sends $ek_j \from F_1(K,j)$ to $A$.
\item $B$ receives $(j,\rec)$, then he makes an oracle query $\mathcal{O}(f_j)$ to the challenger and gets back $SK_j$. $B$ sends $dk_j = SK_j$ to $A$.
\item $B$ receives $(j,\san)$, then he makes an oracle query $\mathcal{O}(f_{rk})$ to the challenger and gets back $SK_{rk}$. $B$ sends $rk = SK_{rk}$ to $A$.
\item $B$ receives $(i,m)$, then he computes $ek_i \from F_1(K,i)$ and sends to $A$ $$c \from \sanFE.\Enc(pp^\sanFEpow,(m,i,F_{2}(ek_i,m)))$$
\end{itemize}
After the oracle queries $B$ receives messages $m_0,m_1$ and identities $i_0,i_1$ from adversary $A$. Then $B$ computes $ek_{i_l} \from F_1(K,i_l)$ for $l \in \zo$ and sends $m_0^\sanFEpow$ and $m_1^\sanFEpow$ to the challenger, where $m_l^\sanFEpow = (m_l,i_l,F_{2}(ek_{i_l},m_l))$ for $l \in \zo$.
Then the sFE challenger sends a ciphertext $c'$, which $B$ forwards to $A$ as the ACE ciphertext. 
This is followed by a new round of oracle queries.

If the sFE challenger is in case $b=0$, then $c'$ is generated as an sFE encryption of message $m_0^\sanFEpow$, and we are in the case $b=0$ in the no-read game. Similar, if the sFE challenger is in case $b=1$, then we are in the case $b=1$ in the no-read game. Note that our adversary respects the rules of the IND-CPA game, since $f_{rk}(m_0^\sanFEpow)=f_{rk}(m_1^\sanFEpow)=1$ and $f_{j}(m_0^\sanFEpow)=f_{j}(m_1^\sanFEpow)$ for all $j$ such that $SK_j$ was queried. This follows directly from the payload privacy (the function outputs $\bot$) and sender anonymity ($m_0^\sanFEpow=m_1^\sanFEpow$) properties of the no-read rule.
Thus, we can conclude that if $A$ wins the no-read game with non-negligible probability, then $B$ wins the IND-CPA security game for the sFE scheme.
\end{proof}


\begin{thm} \label{thm:ACE_sFE_no_write}
For any adversary $A$ that breaks the No-Write Rule of Construction~\ref{con:ACEfromFE}, there exists an adversary $B$ for the PRF security, an adversary $C$ for the sanitizer property of the sFE scheme, and an adversary $D$ for the IND-CPA security of the sFE scheme, such that the advantage of $A$ is
	\begin{align*}
	\adv{\ACE,A} \leq 3 \cdot \adv{\PRF, B} + \adv{\sanFE,C} + \adv{\sanFE,D} + 2^{-\kappa}
	\end{align*}
\end{thm}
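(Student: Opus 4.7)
The plan is to transform the No-Write game with $b=1$ into the game with $b=0$ through a sequence of hybrid games, bounding each transition by one of the assumed security properties. First I would apply the $\sanFE$ sanitizer property (Definition~\ref{def:FEnowrite}) to rewrite $\sanFE.\San(pp^\sanFEpow,c)$ as $\sanFE.\San(pp^\sanFEpow,\sanFE.\Enc(pp^\sanFEpow,\MasterDec(msk^\sanFEpow,c)))$, i.e., a sanitized fresh encryption of the plaintext triple $(m,j,t) := \MasterDec(msk^\sanFEpow,c)$. By correctness of $\sanFE$, the MAC check $\sanFE.\Dec(SK_{rk},\cdot)$ also becomes a function of $(m,j,t)$ only, so after this step the $b=1$ output is either $\sanFE.\San(pp^\sanFEpow,\sanFE.\Enc(pp^\sanFEpow,(m,j,t)))$ if the MAC is valid, or $\sanFE.\San(pp^\sanFEpow,\sanFE.\Enc(pp^\sanFEpow,(\bot,0,F_2(ek_0,\bot))))$ otherwise. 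This step costs one $\sanFE$ sanitizer reduction $C$.

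Next I would argue that, except with negligible probability, the decrypted identity $j$ lies in $I_S\cup\{0\}$. For any $j\notin I_S\cup\{0\}$ the key $ek_j=F_1(K,j)$ is never exposed to the adversary, so I would invoke PRF security of $F_1$ to replace $F_1(K,\cdot)$ with a truly random function, making $ek_j$ a fresh uniform secret; then PRF security of $F_2$ with this unknown key bounds the adversary's chance of producing a valid MAC for such a $j$ by $2^{-\kappa}$. Together with a further PRF application needed to harmonize the $\mathcal{O}_E$ answers (which internally evaluate $F_2(ek_i,\cdot)$ on various identities) with the PRF replacements, this accounts for the $3\cdot\adv{\PRF,B}+2^{-\kappa}$ contribution to the advantage.

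Now the encrypted triple inside the sanitized output is either $(m,j,t)$ with $j\in I_S\cup\{0\}$ and valid MAC, or the default $(\bot,0,F_2(ek_0,\bot))$. Applying IND-CPA security of $\sanFE$ (reduction $D$), I would swap it for the ideal triple $(r,i',F_2(ek_{i'},r))$ with $r\from\M$ produced by the $b=0$ branch. The IND-CPA admissibility condition requires that every secret key queried by the reduction yields the same output on both the current triple and the ideal one: for $SK_{f_j}$ with $j\in J$, both triples have a sender identity in $I_S\cup\{0\}$, whence Condition~3 of the No-Write game forces $f_j$ to output $\bot$ on each; for $SK_{rk}$, which the reduction may query internally since the adversary is forbidden by Condition~1, both triples carry a valid MAC and $f_{rk}$ returns $1$ on each. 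After this swap the experiment matches the $b=0$ branch.

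The main obstacle is the middle step: the PRF key $K$ is baked into the functional key $SK_{rk}$, so a priori one cannot cleanly perform a PRF reduction on $F_1$ without touching the sanitizer. The way around is to notice that after the first hybrid the sanitizer no longer needs $SK_{rk}$ at all: the MAC check on the challenge ciphertext can be done directly from $K$ and $msk^\sanFEpow$, and every honest $\mathcal{O}_E$-generated ciphertext already has a valid MAC so that the sanitizer's conditional branch is deterministic. Once $SK_{rk}$ is out of the way, the $F_1$ and $F_2$ replacements are routine hybrid arguments and the forgery-probability bound reduces to guessing a $\kappa$-bit PRF output.
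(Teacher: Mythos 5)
Your proposal is correct and follows essentially the same route as the paper's proof: the same hybrid decomposition into a direct MAC check that eliminates the need for $SK_{rk}$ (the paper's Hybrid~1, which you rightly identify as the key subtlety), a PRF step for $F_1$, a $2^{-\kappa}$ forgery bound via $F_2$, one sanitizer-property reduction, one IND-CPA reduction, and a final $F_1$ switch-back accounting for the third PRF term. The only difference is that you apply the sanitizer-property step before rather than after the PRF replacements, which is immaterial (and your remark that $f_{rk}$ returns $1$ on both challenge triples is slightly off once the keys are random, but harmless since the reduction never needs to query $SK_{rk}$, as you yourself observe).
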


\begin{proof}
This theorem is proven by presenting a series of hybrid games.

\paragraph{Hybrid 0.} The no-write game for $b=1$

\paragraph{Hybrid 1.} As Hybrid 0, except that when the challenger receives a oracle request $(i,\sen)$ he saves the identity: $I_S = I_S \cup i$, and the encryption key $ek_i \from F_1(K,i)$. When the challenger receives the challenge $(c,i')$ he uses the sFE master decryption to get 
$$ (m^*,i^*,t^*) \from \sanFE.\MasterDec(msk^\sanFEpow,c)$$
If $i^* \notin I_S$, then the challenger generates $ek_{i^*}$ honestly.
Next, he checks that $t^* = F_{2}(ek_{i^*},m^*)$. If the check goes through he computes the challenge response as $c^* \from \sanFE.\San(pp^\sanFEpow,c)$, otherwise $c^* \from \San(rk,\Enc(ek_0,\bot))$.

\paragraph{Hybrid 2.} As Hybrid 1, except that the encryption keys are chosen uniformly at random: $ek_i \from_\$ \zo^\kappa$ for all $i$, (note that $ek_{i^*}$ is also chosen at random).

\paragraph{Hybrid 3.} As Hybrid 2, except that after receiving and master decrypting the challenge, the challenger check whether $i^* \in I_S$. If this is the case the challenger checks the MAC $t^*$ as above, otherwise he compute the response as  $c^* \from \San(rk,\Enc(ek_0,\bot))$.

\paragraph{Hybrid 4.} As Hybrid 3, except that if the checks $i^* \in I_S$ and $t^* = F_{2}(ek_{i^*},m^*)$ go through, then the challenger computes the response as
$$ c^* \from \sanFE.\San(pp^\sanFEpow,\sanFE.\Enc(pp^\sanFEpow, (m^*,i^*,t^*)))$$

\paragraph{Hybrid 5.} As Hybrid 4, except that the challenge response is computed as
$$ c^* = \San(rk,\Enc(ek_{i'},r))$$
where $r \from_\$ \M$ and $rk \from \Gen(msk,n+1,\san)$.

\paragraph{Hybrid 6.} As Hybrid 5, except that the encryption keys are generated honestly: $ek_i \from F_1(K,i)$ for all $i$. Observe, this is the no-write game for $b=0$. \\


Now we show that each sequential pair of the hybrids are indistinguishable.

\begin{claim}
Hybrid 0 and Hybrid 1 are identical.
\end{claim}

\begin{proof}
This follows directly from the definition of the sanitization and sanitizer key $rk$.
\end{proof}

\begin{claim} \label{claim:ACE_nowrite_PRF_1}
For any adversary $A$ that can distinguish Hybrid 1 and Hybrid 2, there exists an adversary $B$ for the security of PRF $F_1$ such that the advantage of $A$ is 
$ \adv{A} \leq \adv{\PRF,B}$.
\end{claim}

\begin{proof}
Assume that any adversary can break the PRF security with advantage $\epsilon$, and assume for contradiction that we can distinguish the hybrids with advantage greater than $\epsilon$. Then we can construct an adversary $B$ that breaks the PRF security with advantage greater than $\epsilon$.

$B$ starts by creating the public parameters honestly and sends it to the adversary. 
All the adversary oracle queries are answered as follows: whenever $B$ receives $(i,\sen)$ from the adversary, he sends $i$ to the PRF challenger, receives back $y_i$, set $ek_i := y_i$, and sends $ek_i$ to the adversary. 
When $B$ receives the challenge $(i,m)$ he ask the challenger for the encryption key (as before), and encrypts $m$. The rest of adversary's queries are answered honestly by using the algorithms of the construction.
When $B$ receives $(c,i')$ from the adversary, he master decrypts the ciphertext to get $(m^*,i^*,t^*)$. If $i^* \notin I_S$, then $B$ creates $ek_{i^*}$ by sending $i^*$ to the challenger.
$B$ concludes the game by forwarding the adversary's guess $b'$ to the challenger.

Observe that the if $y_i \from F_1(K,i)$ then we are in Hybrid 1, and if $y_i$ is uniform random, then we are in Hybrid 2. 
Thus, if adversary $A$ can distinguish between the hybrids, then $B$ can break the constraint PRF property. 
\end{proof}

\begin{claim} 
For any adversary $A$ that can distinguish Hybrid 2 and Hybrid 3,
there exists an adversary $B'$ for the security of PRF $F_2$ such that the advantage of $A$ is 
$ \adv{A} \leq \adv{\PRF,B'} + 2^{-\kappa}$.
\end{claim}

\begin{proof}
Assume that any adversary can break the PRF security with advantage $\epsilon - 2^{-\kappa}$, and assume for contradiction that we can distinguish the hybrids with advantage greater than $\epsilon$. Then we can construct an adversary $B'$ that breaks the PRF security with advantage greater than $\epsilon - 2^{-\kappa}$.

$B'$ starts by creating the public parameters and sending them to the adversary. The adversary's oracle queries are answered honestly by using the algorithms of the construction.
When $C$ receives the challenge $(c,i')$ he master decrypts the ciphertext to get $(m^*,i^*,t^*)$. Then he sends $m^*$ to the challenger and receives back $t'$. If $t' = t^*$ then $B'$ guess that the challenger is using the pseudorandom function $F_2$, otherwise $B'$ guess that the challenger is using a random function.

We evaluate now the advantage of $B'$ in the PRF game: Observe, if $t'$ is generated using $F_2$, then $B'$ outputs ``PRF'' with probability exactly $\epsilon$. In the case when $t'$ is generated using a random function, then it does not matter how $t^*$ was created, and the probability that $t' = t^*$ is $2^{-\kappa}$. Thus, the advantage of adversary $B'$ is greater than $\epsilon - 2^\kappa$.
\end{proof}

\begin{claim} \label{claim:ACE_nowrite_sFE_san}
For any adversary $A$ that can distinguish Hybrid 3 and Hybrid 4, there exists an adversary $C$ for the sanitizer property of the sFE scheme such that the advantage of $A$ is 
$ \adv{A} \leq \adv{\sanFE,C} $.
\end{claim}

\begin{proof}
Assume that any adversary wins the sanitizer game for the sFE scheme with advantage $\epsilon$, and assume for contradiction that we can distinguish the hybrids with advantage greater than $\epsilon$. Then we can construct an adversary $C$ that wins the sanitizer game with advantage greater than $\epsilon$.

$C$ starts by receiving the sFE system parameters from the challenger, and he forwards the public parameters as the ACE public parameters to the adversary. 
The adversary's oracle queries are answered honestly by using the algorithms of the construction, since $C$ receives the sFE master secret key from the challenger.
When $C$ receives the challenge $(c,i')$ he master decrypts the ciphertext to get $(m^*,i^*,t^*)$. Then he checks that $i^* \in I_S$ and $t^* = F_{2}(ek_{i^*},m^*)$. If the check goes through he sends $c$ to the challenger and receives back a sFE sanitized ciphertext $c'$. Thus, the challenge response is $c^* = c'$. 
$C$ concludes the game by forwarding the adversary's guess $b'$ to the challenger.

Observe, if $c' = \sanFE.\San(pp^\sanFEpow,c)$, then we are in Hybrid 3. On the other hand, we are in Hybrid 4 if $$c' = \sanFE.\San(pp^\sanFEpow,\sanFE.\Enc(pp^\sanFEpow,\sanFE.\MasterDec(msk^\sanFEpow,c)))$$

Thus, if adversary $A$ can distinguish between the hybrids, then $C$ can break the sFE sanitizer property. 
\end{proof}

\begin{claim} \label{claim:ACE_nowrite_sFE_IND-CPA}
For any adversary $A$ that can distinguish Hybrid 4 and Hybrid 5, there exists an adversary $D$ for the IND-CPA security of the sFE scheme such that the advantage of $A$ is 
$ \adv{A} \leq \adv{\sanFE,D} $.
\end{claim}

\begin{proof}
Assume that any adversary wins the IND-CPA game for the sFE scheme with advantage $\epsilon$, and assume for contradiction that we can distinguish the hybrids with advantage greater than $\epsilon$. Then we can construct an adversary $D$ that wins the IND-CPA game with advantage greater than $\epsilon$.

$D$ start by receiving the sFE public parameters from the challenger and forwards it to the challenger. The adversary's oracle queries are answered by sending secret key queries to the challenger, and otherwise using the algorithms of the construction (see the proof of Theorem~\ref{thm:ACE_sFE_no-read} for more details).
When $D$ receives the challenge $(c,i')$ he master decrypts the ciphertext to get $(m^*,i^*,t^*)$. Then he checks that $i^* \in I_S$ and $t^* = F_{2}(ek_{i^*},m^*)$. If the check goes through he set $m_0 = (m^*,i^*,t^*)$, otherwise he sets $m_0 = (\bot,0,\bot)$.
Then he creates $m_1 = (r,i',F_{2}(ek_{i'},r))$, sends $m_0$ and $m_1$ to the challenger, and receives back an sFE encryption $c'$. 
Next, $D$ creates the response $c^* = \sanFE.\San(pp^\sanFEpow,c')$.
$D$ concludes the game by forwarding the adversary's guess $b'$ to the challenger.

If $c'$ is an encryption of the message $m_0$, then we are in Hybrid 4, and if it is an encryption of $m_1$, then we are in Hybrid 5.
Thus, if adversary $A$ can distinguish between the hybrids, then $D$ can break the sFE IND-CPA security. 
\end{proof}

\begin{claim} \label{claim:ACE_nowrite_PRF_2}
For any adversary $A$ that can distinguish Hybrid 5 and Hybrid 6, there exists an adversary $B$ for the security of PRF $F_1$ such that the advantage of $A$ is 
$ \adv{A} \leq \adv{\PRF,B}$. \\
\end{claim}

The proof follow the same structure as the proof for Claim~\ref{claim:ACE_nowrite_PRF_1}. \\

From these claims we can conclude that for any adversary $A$ that can distinguish Hybrid 0 and Hybrid 6, there exists an adversary $B$ for the PRF security, an adversary $C$ for the sanitizer property of the sFE scheme, and an adversary $D$ for the IND-CPA security of the sFE scheme, such that the advantage of $A$ is
\begin{align*}
	\adv{\ACE,A} \leq 3 \cdot \adv{\PRF, B} + \adv{\sanFE,C} + \adv{\sanFE,D} + 2^{-\kappa}
\end{align*}
\end{proof}



\newpage
\bibliographystyle{alpha}
\bibliography{ACE}
\appendix

\clearpage

\section{Standard Building Blocks}\label{app:bb}

\subsection{Pseudorandom Function}
\begin{defi}[PRF]
We say $F : \zo^\kappa \times \zo^* \to \zok$ is a pseudorandom function if for all PPT $A$ 
$$\adv{A} = 2\cdot|\Pr[A^{\mathcal{O}_b(\cdot)}(1^\kappa)=b]-1/2|<\negl(\kappa)$$ with $\mathcal{O}_0$ a uniform random function and $\mathcal{O}_1=F_K$.  
\end{defi}

\subsection{Statistical Simulation-Sound Non-Interactive Zero-Knowledge Proofs} \label{app:nizk_sss}
The content of this subsection is taken almost verbatim from~\cite{DBLP:conf/focs/GargGH0SW13}.
Let $L$ be a language and $R$ a relation such that $x \in L$ if and only if there exists a witness $w$ such that $(x,w) \in R$.
A non-interactive proof system \cite{DBLP:conf/stoc/BlumFM88} for a relation $R$ is defined by the following PPT algorithms
\begin{description}
\item[Setup:] The $\Setup$ algorithm takes as input the security parameter $\kappa$ and outputs common reference string $crs$.

\item[Prove:] The $\Prove$ algorithm takes as input the common reference string $crs$, a statement $x$, and a witness $w$, and outputs a proof $\pi$.

\item[Verify:] The $\Verify$ algorithm takes as input the common reference string $crs$, a statement $x$, and a proof $\pi$. It outputs $1$ if it accepts the proof, and $0$ otherwise.
\end{description}

The non-interactive proof system must be complete, meaning that if $R(x,w)=1$  and $crs \from \Setup(1^\kappa)$ then $$\Verify(crs,x,\Prove(crs,x,w)) = 1$$  
Furthermore, the proof system must be statistical sound, meaning that no (unbounded) adversary can convince a honest verifier of a false statement.
Moreover, we define the following additional properties of a non-interactive proof system.

\begin{defi}[Computational Zero-Knowledge]
A non-interactive proof $\NIZK = (\Setup, \Prove, \Verify)$ is computational zero-knowledge if there exists a polynomial time simulator $\Sim = (\Sim_1,\Sim_2)$ such that for all non-uniform polynomial time adversaries $A$ we have for all $x \in L$ that
	\begin{center}
		$\Pr \left[ crs \from \Setup(1^\kappa); \pi \from \Prove(crs,x,w): A(crs,x,\pi) = 1 \right]$\\
		$\approx$ \\
		$\Pr \left[ (crs,\tau) \from \Sim_1(1^\kappa,x); \pi \from \Sim_2(crs,\tau,x): A(crs,x,\pi) = 1 \right]$		
	\end{center}
where $crs$ is the common reference string, $x$ is the statement, $w$ is the witness, $\pi$ is the proof, and $\tau$ is the trapdoor.
\end{defi}

Thus, the definition states that the proof do not reveal any information about the witness to any bounded adversary.
In the definition this is formalized by the existence of two simulators, where $\Sim_1$ returns a simulated common reference string together with a trapdoor that enables $\Sim_2$ to simulate proofs without access to the witness.

\begin{defi}[Statistical Simulation-Soundness]
A non-interactive proof $\NIZK = (\Setup, \Prove, \Verify)$ is statistical simulation-sound (SSS) if for all statements $x$ and all (unbounded) adversaries $A$ we have that
	\begin{align*}
	\Pr \left[ \begin{array}{c}
		(crs,\tau) \from \Sim_1(1^\kappa,x); \pi \from \Sim_2(crs,\tau,x):  \\
			\exists (x',\pi'): x' \neq x: \Verify(crs,x',\pi') = 1: x' \notin L
	\end{array} \right] \leq \sssprob 
	\end{align*}
	where $\sssprob = \negl(\kappa)$ is negligible in the security parameter.
\end{defi}

Thus, the definition states that it is not possible to convince a honest verifier of a false statement even if the adversary is given a simulated proof.

\begin{remark}
If a proof system is statistical simulation-sound then it is also statistical sound. Thus, we can upper bound the negligible probability of statistical soundness by the negligible probability of the statistical simulation-soundness. 
\end{remark}
\subsection{Indistinguishability Obfuscation}
\newcommand{\Obf}{iO}
\newcommand{\Circs}{\mathcal{C}}
\newcommand{\obf}[1]{\bar{#1}}
\newcommand{\poly}{\mathsf{poly}}
\newcommand{\Adv}{\mathcal{A}}

We use an \emph{indistinguishability obfuscator} like the one proposed in~\cite{DBLP:conf/focs/GargGH0SW13} such that $\obf{C}\from \Obf(C)$ which takes any polynomial size circuit $C$ and outputs an obfuscated version $\obf{C}$ that satisfies the following property.
\begin{defi}[Indistinguishability Obfuscation] \label{def:io} We say $\Obf$ is an \emph{indistinguishability obfuscator} for a circuit class $\Circs$ if for all $C_0,C_1\in\Circs$ such that $\forall x:C_0(x)=C_1(x)$ and $|C_0|=|C_1|$ it holds that:
\begin{enumerate}
\item $\forall C\in \Circs, \forall x\in\zon,  \Obf(C)(x)=C(x)$;
\item $|\Obf(C)|=\poly(\lambda|C|)$  
\item  for all PPT $\Adv$:
$$
\adv{\Adv} = 2\cdot \left | \Pr[\Adv(\Obf(C_0)) = 1]-\Pr[\Adv(\Obf(C_1)) = 1]\right| < \negl(\lambda)
$$
\end{enumerate}
\end{defi}



\section{Equivalent Definition of the No-Write Rule}
In this section we provide an alternative definition of the No-Write rule and we prove that this is equivalent to the No-Write rule of Definition~\ref{def:ACEnowrite} (Section~\ref{sec:definition}). The alternative definition is used in the proof of Theorem~\ref{thm:no-write}.
In Definition~\ref{def:ACEnowrite} the challenger chooses, encrypts and sanitizes a random message if $b=0$. In the following definition, we will let the adversary choose the message to be encrypted and sanitized in case $b=0$. Thus, we replace the randomly chosen message with an adversarial chosen message.

\begin{defi}[Alternative No-Write Rule] \label{def:alternativeACEnowrite}
Consider the following game between a challenger $C$ and a stateful adversary $A$:
\begin{center}
\begin{small}
    \begin{tabular}{| p{6.5cm} | p{5cm} |}
    \hline
	\multicolumn{2}{|c|}{\textbf{\emph{No-Write Rule}}} \\
	\hline
	\multicolumn{1}{|c|}{\textbf{\emph{Game Definition}}} & \multicolumn{1}{|c|}{\textbf{\emph{Oracle Definition}}}  \\
	\hline
	\

	1. $(pp,msk)\from \Setup(1^\kappa,P)$;

	2. $(c,i',m) \from A^{\mathcal{O}_{E}(\cdot),\mathcal{O}_{S}(\cdot)}(pp)$;

	3. $ek_{i'} \from \Gen(msk,i',\sen)$;

	4. $rk\from \Gen(msk,n+1,\san)$;

	5. $b\from \zo$, 
	\begin{itemize}
		\item If $b=0$, $c'\from \San(rk,\Enc(ek_{i'},m))$; 
		\item If $b=1$, $c'\from \San(rk,c)$;
	\end{itemize}
	6. $b' \from A^{\mathcal{O}_{E}(\cdot),\mathcal{O}_{R}(\cdot)}(c')$;

	&
	\
	
	$\mathcal{O}_S(j,t)$: \newline 1. Output $k\from \Gen(msk,j,t)$;

	\ 

	$\mathcal{O}_R(j,t)$: \newline 1. Output $k\from \Gen(msk,j,t)$;

	\

	$\mathcal{O}_E(i,m)$: 

	1. $ek_i\from  \Gen(msk,i,\sen)$;

	2. $c\from \Enc( ek_i , m)$;
	
	3. Output $c'\from \San(rk,c)$;

	\\
	\hline
	\end{tabular}
\end{small}
\end{center}
Let $Q_S$ (resp. $Q$) be the set of all queries $q=(j,t)$ that $A$ issues to $\mathcal{O}_S$ (resp. both $\mathcal{O}_S$ and $\mathcal{O}_R$). Let $I_S$ be the set of all $i\in[n]$ such that $(i,\sen)\in Q_S$ and let $J$ be the set of all $j\in [n]$ such that $(j,\rec)\in Q$. Then we say that $A$ \emph{wins the No-Write game} if $b'=b$ and all of the following hold:
\begin{enumerate}
	\item $(n+1,\san)\not\in Q$;
	\item $i'\in I_S \cup \{0\}$;
	\item $\forall i\in I_S,j\in J$,  $P(i,j)=0$;
\end{enumerate}
We say an ACE scheme satisfies the \emph{No-Write rule} if for all PPT $A$
$$
\adv{A} = 2\cdot \left|\Pr[A \mbox{ wins the No-Write game}]-\frac{1}{2}\right|\leq \negl(\kappa)
$$ 
\end{defi}

\newcommand{\ACEscheme}{\mathsf{ACE}}
\begin{lem} \label{lem:nowrite_equiv}
Let $\ACEscheme$ be a correct ACE scheme that satisfies the \emph{No-Read Rule}. Then $\ACEscheme$ satisfies the alternative \emph{No-Write Rule} from Def.~\ref{def:alternativeACEnowrite} iff it satisfies the \emph{No-Write Rule} from Def.~\ref{def:ACEnowrite}.
\end{lem}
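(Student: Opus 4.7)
The plan is to prove both directions of the equivalence separately, with only the harder direction relying on the No-Read Rule. For the direction ``alternative implies original,'' I would give a direct reduction: given an adversary $A$ for the original No-Write game, I would construct $A'$ for the alternative game that samples a uniformly random $m \from \M$ internally and augments $A$'s output $(c, i')$ to $(c, i', m)$, forwarding everything else transparently. Since a uniformly random message sampled by the challenger in the original game is identically distributed to one sampled by $A'$ in the alternative game, $A$'s view is perfectly simulated and the two advantages coincide.

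The direction ``original plus No-Read implies alternative'' I would prove via a hybrid argument over three candidate distributions of the challenge ciphertext $c'$:
\[
D_0^{\text{alt}} : \San(rk, \Enc(ek_{i'}, m)), \qquad D_{\text{hyb}} : \San(rk, \Enc(ek_{i'}, r)), \qquad D_1 : \San(rk, c),
\]
where $m$ is the message supplied by $A$ and $r$ is uniformly random in $\M$. The pair $(D_{\text{hyb}}, D_1)$ is exactly the original No-Write game, with identical winning conditions on $i'$ and on the query sets $Q_S$ and $Q$, so any distinguisher between them directly yields an adversary against the original No-Write game. Hence, by the triangle inequality, it suffices to bound the distinguishing advantage between $D_0^{\text{alt}}$ and $D_{\text{hyb}}$ using the No-Read Rule.

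For this reduction I would build a No-Read adversary $A'$ that, upon receiving $pp$, first queries $\mathcal{O}_G(n+1, \san)$ to obtain $rk$ for its own use. It then simulates $A$ in the alternative No-Write game by forwarding $A$'s sender- and receiver-key queries to $\mathcal{O}_G$, and by answering $A$'s sanitized-encryption oracle queries by querying $\mathcal{O}_E(i, m)$ to obtain $c$ and returning $\San(rk, c)$. When $A$ outputs $(c^*, i', m^*)$, $A'$ samples $r \from \M$ and submits the No-Read challenge $(m_0, m_1, i_0, i_1) = (m^*, r, i', i')$ under the Payload Privacy clause; it then sanitizes the challenge ciphertext received from the challenger with $rk$, hands the result to $A$, continues the simulation of $A$'s post-challenge queries, and finally outputs $A$'s guess. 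In the two cases $b = 0, 1$ of the No-Read game the view $A$ sees is exactly $D_0^{\text{alt}}$ or $D_{\text{hyb}}$, respectively.

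The main obstacle is verifying that $A'$ respects the No-Read winning constraints, which reduces to checking Payload Privacy: for every receiver query $j \in J$ made by $A$ and hence forwarded by $A'$, one needs $P(i', j) = 0$. This follows from the alternative No-Write winning conditions, since $i' \in I_S \cup \{0\}$ and condition~3 of that game gives $P(i, j) = 0$ for all $i \in I_S$, $j \in J$, while $P(0, j) = 0$ by the convention on identity $0$. Moreover, since $A$ is forbidden from asking for the sanitizer key in the alternative No-Write game, $A'$'s private use of $rk$ does not collide with $A$'s view, so the simulation is faithful throughout. Beyond this careful bookkeeping of oracles and constraints, no essentially new idea is required.
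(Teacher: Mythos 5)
Your proposal is correct and follows essentially the same route as the paper's proof: the ``alternative implies original'' direction by having the reduction sample the random message itself, and the converse via a hybrid through the sanitized encryption of a random message, bounding the first hop by a reduction to the No-Read Rule (Payload Privacy with $i_0=i_1=i'$) and the second by the original No-Write Rule. Your explicit bookkeeping of the sanitizer-key query and the $P(i',j)=0$ constraint matches what the paper leaves implicit, so no gap remains.
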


\setcounter{claimcounter}{0}
\begin{proof}
The lemma is proven by splitting the bi-implication in two cases:

\ \newline
\noindent\textbf{Def.~\ref{def:alternativeACEnowrite} implies Def.~\ref{def:ACEnowrite}.}
Assume that $\ACEscheme$ satisfies the alternative no-write rule, and assume for contradiction that there is an adversary $A$ that wins the ACE no-write game (from Def.~\ref{def:ACEnowrite}), then we can construct an adversary $B$ that wins the alternative ACE no-write game.

Adversary $B$ receives the public parameters from the challenger (from Def.~\ref{def:alternativeACEnowrite}) and forwards them to the adversary $A$. Then $A$ performs some oracle queries, which adversary $B$ forwards to the challenger. After the oracle queries $B$ receives the challenge $(c,i')$ from adversary $A$. Next, $B$ picks a message $r \from \M$ and sends $(c,i',r)$ to the challenger. 
The challenger respond with a sanitized ciphertext $c'$, where $c' = \San(rk,\Enc(ek_{i'},r))$ if $b=0$, and $c' = \San(rk,c)$ if $b=1$.
Adversary $B$ forward $c'$ to $A$. This is followed by a new round of oracle queries. Note that $c'$ corresponds to the challenge of the no-write game of Def.~\ref{def:ACEnowrite}.
Thus, we can conclude that if $A$ wins the ACE no-write game (from Def.~\ref{def:ACEnowrite}), then $B$ wins the alternative ACE no-write game (from Def.~\ref{def:alternativeACEnowrite}).

\ \newline
\noindent\textbf{No-Read Rule and Def.~\ref{def:ACEnowrite} implies Def.~\ref{def:alternativeACEnowrite}.}
This is proven by presenting a series of hybrid games.

\paragraph{Hybrid 0.} The alternative no-write game (Def.~\ref{def:alternativeACEnowrite}) for $b=0$.

\paragraph{Hybrid 1.} As Hybrid 0, except the challenger ignores the message $m$ send by the adversary and draws its own message $r$ to encrypt.

\paragraph{Hybrid 2.} As Hybrid 1, except $b=1$.

\paragraph{Hybrid 3.} The alternative no-write game (Def.~\ref{def:alternativeACEnowrite}) for $b=1$.

\begin{claim}
Assume that the ACE scheme $\ACEscheme$ satisfies the no-read rule, then Hybrid 0 and Hybrid 1 are indistinguishable.
\end{claim}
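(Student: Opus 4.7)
The plan is to reduce distinguishing Hybrid 0 from Hybrid 1 directly to the no-read rule, exploiting the fact that both hybrids produce $c' = \San(rk, \Enc(ek_{i'}, \cdot))$ under the same identity $i'$, differing only in whether the plaintext is the adversarially chosen $m$ or a freshly sampled random $r$. Since the no-write constraints guarantee that every receiver key $(j,\rec)$ corrupted by $A$ satisfies $P(i',j) = 0$, the no-read game (in its payload-privacy mode) is exactly the tool that forbids a PPT adversary from noticing the swap.

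Concretely, I would build a no-read adversary $B$ that runs $A$ internally. After receiving $pp$ from its challenger, $B$ queries $\mathcal{O}_G(n+1,\san)$ once to obtain $rk$ and then forwards $pp$ to $A$. $B$ answers $A$'s key-generation queries by forwarding them to $\mathcal{O}_G$, and answers each encryption-oracle query $(i,m)$ by calling its own $\mathcal{O}_E(i,m)$ to obtain an unsanitized $c$ and returning $\San(rk, c)$, matching the distribution of the no-write encryption oracle. When $A$ outputs the challenge $(c, i', m)$, $B$ samples $r \gets \M$ and submits $(m_0, m_1, i_0, i_1) = (m, r, i', i')$ to its challenger. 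Upon receiving $c^* = \Enc(ek_{i'}, m_b)$, $B$ computes $c' = \San(rk, c^*)$, passes $c'$ to $A$, continues simulating oracle queries as before, and finally outputs $A$'s guess $b'$.

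The main points to verify are the validity of $B$'s game play and the tightness of the simulation. Since $i_0 = i_1 = i'$, the payload-privacy condition reduces to requiring $P(i', j) = 0$ for every receiver query $(j,\rec)$ that $A$ makes. This is guaranteed by the no-write hypothesis: if $i' \in I_S$, condition~3 of Definition~\ref{def:alternativeACEnowrite} yields $P(i', j) = 0$ for every $j \in J$, and if $i' = 0$ the special-identity convention gives $P(0, j) = 0$ for all $j$. The auxiliary query $(n+1,\san)$ used to obtain $rk$ is not a receiver query and hence does not violate the payload-privacy constraint, and $|m_0| = |m_1|$ by picking $r$ of the appropriate length. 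When the hidden bit in the no-read experiment is $b = 0$ the view handed to $A$ is distributed exactly as Hybrid 0, and when it is $b = 1$ the view is distributed exactly as Hybrid 1.

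The only delicate point is the mismatch between the no-write encryption oracle (which sanitizes) and the no-read encryption oracle (which does not); having already extracted $rk$, $B$ can sanitize on its own, so this step is immediate. Consequently $B$'s advantage in the no-read game equals $A$'s advantage in distinguishing the two hybrids, which must therefore be negligible.
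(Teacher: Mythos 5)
Your proposal is correct and follows essentially the same reduction as the paper: run the hybrid distinguisher, submit $(m,r,i',i')$ as the no-read challenge, obtain the sanitizer key via the key-generation oracle, sanitize the challenge ciphertext yourself, and observe that the payload-privacy winning condition is met because the no-write constraints force $P(i',j)=0$ for all corrupted receiver identities. Your explicit handling of the sanitized-vs-unsanitized encryption-oracle mismatch is a slightly more careful rendering of the same argument.
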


\begin{proof}
Assume for contradiction that there is an adversary $A$ that can distinguish between the hybrids, then we can construct an adversary $B$ that wins the no-read game from Def.~\ref{def:ACEnoread}.

Adversary $B$ forwards the public parameter from the challenger (of the no-read game) to adversary $A$, and forwards the oracle queries made by $A$ to the challenger. Note that $A$ plays the alternative no-write game, where the allowed oracle queries are a strict subset of the allowed oracle queries in the no-read game. 
Next, $A$ sends the challenge $(c,i',m)$ to $B$, who draws a random message $r \from \M$ and sends $(m,r,i',i')$ to the challenger. 
The challenger draws a random bit $b^*$ and responds with $c^* \from \Enc(ek_{i'},m)$ if $b^*=0$, and $c^* \from \Enc(ek_{i'},r)$ if $b^*=1$. 
Adversary $B$ queries the challenger for the sanitizer key $rk$, sanitizes the challenge ciphertext $c' = \San(rk,c^*)$ and sends $c'$ to adversary $A$, who performs a new set of oracle queries. 
Note, if $b^*=0$ then we are in Hybrid 0 and if $b^*=1$ then we are in Hybrid 1, which means that if adversary $A$ can distinguish between the hybrids, then adversary $B$ breaks the no-read rule. Thus, Hybrid 0 and Hybrid 1 are indistinguishable.
\end{proof}

\begin{claim}
Assume that the ACE scheme $\ACEscheme$ satisfies the no-write rule from Def.~\ref{def:ACEnowrite}, then Hybrid 1 and Hybrid 2 are indistinguishable.
\end{claim}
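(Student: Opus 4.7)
The plan is to observe that Hybrid 1 and Hybrid 2 are essentially the two cases of the No-Write game of Def.~\ref{def:ACEnowrite}, so a distinguisher directly yields an adversary against that definition. In Hybrid 1 the challenger computes $c' \from \San(rk, \Enc(ek_{i'}, r))$ for a freshly drawn random $r \from \M$, and in Hybrid 2 it computes $c' \from \San(rk, c)$ where $c$ is the ciphertext supplied by the adversary — this matches exactly the $b=0$ and $b=1$ branches of the game in Def.~\ref{def:ACEnowrite}.

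First I would build a reduction $B$ which, on input the public parameters $pp$ from the No-Write challenger of Def.~\ref{def:ACEnowrite}, forwards $pp$ to $A$ and relays all oracle queries of $A$ to $\mathcal{O}_E$, $\mathcal{O}_S$ (and, after the challenge, $\mathcal{O}_R$) unchanged. Since $A$ is playing the alternative No-Write game of Def.~\ref{def:alternativeACEnowrite}, its allowed oracle queries are exactly the queries allowed in Def.~\ref{def:ACEnowrite}, so the relayed queries respect all the winning conditions (in particular that $(n+1,\san)\notin Q$ and $\forall i\in I_S,j\in J: P(i,j)=0$).

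When $A$ outputs its challenge $(c, i', m)$, adversary $B$ simply discards $m$ (it plays no role in either Hybrid 1 or Hybrid 2) and forwards $(c, i')$ to its own challenger. The challenger samples $b \from \zo$ and returns $c'$ which, by construction of the game, equals $\San(rk, \Enc(ek_{i'}, r))$ for a uniformly random $r \from \M$ when $b=0$ (this is exactly Hybrid 1) and equals $\San(rk, c)$ when $b=1$ (this is exactly Hybrid 2). $B$ hands $c'$ to $A$, continues relaying oracle queries, and finally outputs $A$'s guess $b'$.

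The view of $A$ in $B$'s simulation is identical to Hybrid 1 when $b=0$ and to Hybrid 2 when $b=1$, and $B$ respects the winning conditions of Def.~\ref{def:ACEnowrite} whenever $A$ respects those of Def.~\ref{def:alternativeACEnowrite} (the conditions are literally the same set). Hence the distinguishing advantage of $A$ between Hybrid 1 and Hybrid 2 equals the No-Write advantage of $B$, which is negligible by assumption. The main (and only) subtlety worth checking is that discarding $m$ is legitimate: since Hybrid 1 itself ignores the adversary's message, there is no distribution to match on that coordinate, so no loss is incurred and the reduction is tight.
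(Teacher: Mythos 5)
Your proof is correct and follows essentially the same reduction as the paper: forward $pp$ and the oracle queries, discard the adversary's message $m$, submit $(c,i')$ to the No-Write challenger of Def.~\ref{def:ACEnowrite}, and observe that its $b=0$ and $b=1$ branches reproduce Hybrid~1 and Hybrid~2 exactly. Your remark that dropping $m$ is harmless because Hybrid~1 already ignores it is precisely the point the hybrid structure is designed to exploit.
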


\begin{proof}
Assume for contradiction that there is an adversary $A$ that can distinguish between the hybrids, then we can construct an adversary $B$ that wins the no-write game (from Def.~\ref{def:ACEnowrite}).

Adversary $B$ forwards the public parameter from the challenger (from Def.~\ref{def:ACEnowrite}) to adversary $A$, and forwards the oracle queries made by $A$ to the challenger. 
Next, $B$ receives the challenge $(c,i',m)$ from adversary $A$ and forward $(c,i')$ to the challenger.
The challenger draws a random bit $b^*$ and responds with $c' = \San(rk,\Enc(ek_{i'},r))$ for random $r \from \M$ if $b^*=0$, and $c' = \San(rk,c)$ if $b^*=1$. 
Adversary $B$ forwards $c'$ to $A$, which is followed by a new round of oracle queries. 
Note, if $b^*=0$ then we are in hybrid 1, and if $b^*=1$ then we are in hybrid 2, which means that if adversary $A$ can distinguish between the hybrids, then adversary $B$ breaks the no-read rule. Thus, Hybrid 1 and Hybrid 2 are indistinguishable.
\end{proof}

\begin{claim}
Hybrid 2 and Hybrid 3 are identical.
\end{claim}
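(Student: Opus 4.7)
The plan is to unpack the two hybrids and observe that the challenger's response is literally the same random variable in both.

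First I would rewrite Hybrid 2 explicitly: unfolding ``as Hybrid 1, except $b=1$'' and recalling that Hybrid 1 was ``as Hybrid 0, except the challenger ignores $m$ and draws its own random $r$'', we see that in Hybrid 2 the challenger, upon receiving $(c,i',m)$, samples $r \from \M$ but then (because $b=1$) returns $c' \from \San(rk, c)$. The freshly sampled $r$ is discarded and never appears in the transcript given to the adversary. Next I would rewrite Hybrid 3: by definition of the alternative no-write game with $b=1$, the challenger on input $(c,i',m)$ simply returns $c' \from \San(rk, c)$; the adversary-supplied $m$ is not used either.

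Thus in both hybrids the challenge is distributed as $\San(rk, c)$, where $c$ is the ciphertext supplied by the adversary and the only fresh randomness is that of $\San$. Since the oracles $\mathcal{O}_E, \mathcal{O}_S, \mathcal{O}_R$ and the setup are identical across the two hybrids, the joint distribution of the entire view of $A$ (public parameters, oracle answers, challenge ciphertext) is the same in Hybrid 2 and Hybrid 3. Hence the two hybrids are perfectly identical, not merely indistinguishable, so $A$'s winning probability is exactly the same in both.

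There is essentially no obstacle here: the claim is a syntactic observation about which variables are read by the challenger in the $b=1$ branch. The only thing to be careful about is to confirm that the sampling of $r$ in Hybrid 2 (inherited from Hybrid 1) is genuinely dead code when $b=1$, which follows directly from inspecting the ``if $b=1$'' clause of the game definition.
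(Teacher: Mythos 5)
Your proposal is correct and follows the same reasoning as the paper: the paper's proof simply observes that in both hybrids the challenger answers the challenge $(c,i',m)$ with $c' \from \San(rk,c)$, which is exactly your observation that the sampled $r$ (and the supplied $m$) are never used when $b=1$. Your write-up just makes the dead-code argument explicit; no difference in substance.
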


\begin{proof}
In both hybrids the adversary sends the challenge $(c,i',m)$ and the challenger will in both cases respond with $c' = \San(rk,c)$.
\end{proof}

Thus, we can conclude that the alternative definition of the no-write rule presented in Def.~\ref{def:alternativeACEnowrite} is equivalent to the no-write rule presented in Def.~\ref{def:ACEnowrite} in Section~\ref{sec:definition}.
\end{proof}


\section{Linear ACE}

\subsection{ACE for a Single Identity from Pailler}\label{app:onepailler}

Our second instantiation of \oACE is based on Pailler's cryptosystem~\cite{DBLP:conf/eurocrypt/Paillier99,DBLP:conf/pkc/DamgardJ01}. The scheme uses the same high level idea as the DDH-based instantiation.

\begin{construction} \label{con:onePailler} Let $\mathsf{PACE} = (\Setup, \Gen, \Enc, \San, \Dec)$ be a \oACE scheme defined by the following algorithms:
\end{construction}
\begin{description}
\item[Setup and Key Generation:] The public parameters $pp$ contain the modulus $N$ and the master secret key $msk$ is the factorization of $N$. The encryption key for the only identity in the system is $ek=\alpha$ for a random $\alpha \from \mathbb{Z}_N$ and the sanitizer key is $rk=-\alpha$. Finally the decryption key $dk$ is the master secret key $msk$.
Furthermore, the message space is $\M=\mathbb{Z}_N$ and the ciphertext spaces are $\C = \mathbb{Z}^*_{N^2}\times\mathbb{Z}^*_{N^2}$ and $\C' = \mathbb{Z}^*_{N^2}$.

\item[Encryption:] To encrypt a message $m\in \M$ with identity $1$ first sample $$(r_0,r_1)\from\mathbb{Z}^*_N \times \mathbb{Z}^*_N$$ and then output:

$$
(c_0,c_1)=( (1+ekN) r_0^N, (1+mN) r_1^N)
$$
(and encryptions for the identity $0$ are random $(c_0,c_1)\from \C$.)

\item[Sanitizer:] On input $(c_0,c_1)\in \C$ and the randomization key $rk$ sample $(\beta,s) \in\mathbb{Z}_N \times  \mathbb{Z}^*_N $ and then output 
$$
c'= c_1\cdot (c_0 \cdot (1+rkN) )^\beta \cdot s^N
$$

\item[Decryption:] On input $c'\in \C'$ and the decryption key run the decryption of Pailler cryptosystem to get $m' \in \M$ from $c'$. 
\end{description}

\begin{lem} Construction~\ref{con:onePailler} is a \emph{correct} \oACE scheme satisfying the \emph{No-Read} and the \emph{No-Write Rule} assuming the Pailler's assumption holds.
\end{lem}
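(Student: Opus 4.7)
The plan is to mirror the structure of the proof of Lemma~\ref{lem:elgamal}, replacing the DDH-based manipulations by the additive/multiplicative structure of Paillier's cryptosystem and the DCR (Paillier) assumption. The key algebraic identity that drives the entire argument is $(1+\alpha N)(1-\alpha N) = 1-\alpha^2N^2 \equiv 1 \pmod{N^2}$, together with $(1+xN)^k \equiv 1+kxN \pmod{N^2}$.

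For \emph{correctness}, given an honestly generated ciphertext $(c_0,c_1) = ((1+\alpha N)r_0^N,\,(1+mN)r_1^N)$ and $rk=-\alpha$, first compute $c_0\cdot(1+rkN) = (1+\alpha N)(1-\alpha N)r_0^N \equiv r_0^N \pmod{N^2}$, so
\[
 c' \;=\; c_1\cdot r_0^{N\beta}\cdot s^N \;=\; (1+mN)\,(r_1\,r_0^{\beta}\,s)^N,
\]
which is a valid Paillier encryption of $m$, recovered by the standard Paillier decryption using $dk=msk$.

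For the \emph{No-Read rule} I would use the same case split as in Lemma~\ref{lem:elgamal}: when $(i_0,i_1)=(0,0)$ the challenge is uniform in $\C$ regardless of $b$; when $(i_0,i_1)=(1,1)$ and the adversary asks for $dk$, correctness of the scheme forces $m_0=m_1$; the remaining cases reduce, by a standard hybrid, to showing that an honest encryption $\Enc(ek_1,m)$ is computationally indistinguishable from a uniformly random element of $\mathbb{Z}^*_{N^2}\times\mathbb{Z}^*_{N^2}$. Both coordinates are standard Paillier encryptions (of $\alpha$ and of $m$ respectively), so this follows directly from Paillier's (DCR) assumption, and the oracles $\mathcal{O}_G,\mathcal{O}_E$ do not leak $\alpha$ because the only object released depending on $\alpha$ is the ciphertext itself (sanitized oracle outputs are fresh Paillier encryptions of the requested plaintexts).

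For the \emph{No-Write rule} there are again two cases depending on which key the adversary holds before producing the challenge $(c,i')$. When the adversary holds $ek=\alpha$ but no decryption key, the fresh randomizer $s^N$ together with Paillier's assumption makes the output $c'=c_1\cdot(c_0(1+rkN))^\beta\cdot s^N$ indistinguishable from a random element of $\mathbb{Z}^*_{N^2}$, regardless of the input $c$. When instead the adversary holds $dk$ (and queries $ek$ only after the challenge), so that $i'=0$, I would argue unconditionally. Writing any adversarial $(c_0,c_1)\in\mathbb{Z}^*_{N^2}\times\mathbb{Z}^*_{N^2}$ in the unique Paillier normal form $c_0=(1+a_0N)u_0^N$ and $c_1=(1+a_1N)u_1^N$, one obtains
\[
 c_0(1+rkN) \;=\; (1+\gamma N)\,u_0^N, \qquad \gamma := a_0-\alpha \bmod N,
\]
and hence, using $(1+\gamma N)^\beta\equiv 1+\beta\gamma N\pmod{N^2}$,
\[
 c' \;=\; (1+(a_1+\beta\gamma)N)\,\bigl(u_1\,u_0^{\beta}\,s\bigr)^N.
\]
Since $\alpha$ is uniform in $\mathbb{Z}_N$ and, in this case, information-theoretically hidden from the adversary (no object so far depended on $\alpha$ in a way decryptable with $dk$), $\gamma$ is uniform in $\mathbb{Z}_N$; conditioned on $\gcd(\gamma,N)=1$ (which holds except with probability $1/p+1/q-1/N$, negligible for a proper Paillier modulus), $\beta\gamma \bmod N$ is uniform in $\mathbb{Z}_N$ as $\beta$ varies, and the fresh $s^N$ independently randomizes the $N$-th residue coordinate, so $c'$ is statistically close to uniform on $\mathbb{Z}^*_{N^2}$. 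The same distribution is obtained by sanitizing $\Enc(ek_0,r)$ for a random $r$, which gives the indistinguishability required by (the equivalent formulation of) Definition~\ref{def:ACEnowrite}.

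The main obstacle I expect is the statistical part of the No-Write argument: one has to argue cleanly that, even under the chosen-ciphertext-style access granted by $\mathcal{O}_E$ and $\mathcal{O}_S$ in the pre-challenge phase, $\alpha$ remains uniform in the adversary's view (so that $\gamma$ is uniform), and to correctly bound the bad event $\gcd(\gamma,N)\neq 1$. Oracle calls do not leak $\alpha$: sanitized honest encryptions collapse to $(1+mN)\cdot(\cdot)^N$ with the $\alpha$-dependent factor canceling, and queries to $\mathcal{O}_G$ cannot ask for $ek$ or $rk$ without violating the winning conditions of the case being analysed. The remaining subtleties (hybrid between the DCR-based cases, and handling the encryption oracle) follow the same template as in Lemma~\ref{lem:elgamal}.
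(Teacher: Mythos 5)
Your proposal matches the paper's own proof essentially step for step: correctness by the homomorphic identity $m'=m+(\alpha-\alpha)\beta$, the No-Read rule reduced (as in Lemma~\ref{lem:elgamal}) to honest encryptions being indistinguishable from random elements of $\C$ under Paillier's assumption, and the No-Write rule split into the computational case (fresh $s^N$ plus Paillier) and the unconditional case, where writing the adversarial ciphertext in Paillier normal form shows the sanitized output is uniform. Your only deviation is a refinement: you correctly flag the bad event as $\gcd(\delta_0-\alpha,N)\neq 1$ rather than merely $\delta_0=\alpha$ as the paper states, which is a slightly more careful (and still negligible-probability) formulation of the same argument.
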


\begin{proof} \emph{Correctness:} Correctness follows from inspection: thanks to the homomorphic properties of Pailler $m' = m + (\alpha-\alpha)\beta=m$. \\

\noindent \emph{No-Read Rule:} Exactly as in Lemma~\ref{lem:elgamal} we only need to prove that honest encryptions are indistinguishable from a random element in $(\mathbb{Z}_{N^2}\times\mathbb{Z}_{N^2})$, which follows in a straightfoward way from the Pailler assumption: both $c_0$ and $c_1$ are fresh Pailler encryptions using independent random values $r_0,r_1$, and the assumption says that $r^N$ with $r\from \mathbb{Z}^*_N$ is indistinguishable from a random element in $\mathbb{Z}_{N^2}$. \\

\noindent \emph{No-Write Rule:} As in Lemma~\ref{lem:elgamal} there are only two cases, depending on which keys the adversary asks for \emph{before} producing the challenge ciphertext $c$ and identity $i'$: 1) the adversary asks for $ek$ before issuing his challenge $(c,i')$ with $i'\in\zo$ (and receives no more keys during the distinguishing phase) and 2) the adversary asks for $dk$ before issuing his challenge $(c,0)$ and then asks for $ek$ during the distinguishing phase. Case 1) follows directly from the security of Pailler cryptosystem: without access to the decryption key the output of the sanitizer is indistinguishable from a random ciphertext thanks to the choice of the random $s$. In case 2) instead the adversary has the decryption key, and we therefore need to argue that the distribution of $\San(rk,(c_0,c_1))$ is independent of its input unconditionally. Given any $(c_0,c_1)\in  \C$ we can write:
$$
(c_0,c_1)=((1+\delta_0 N)t_0^N, (1+\delta_1 N)t_1^N)
$$
Then the output $c'\from \San(rk,(c_0,c_1))$ is
\begin{align*}
	c' &= c_1\cdot (c_0 \cdot (1+rkN) )^\beta \cdot s^N  \\
  	   &= (1+(\delta_1+\beta(\delta_0-\alpha)) N)(t_1t_0^\beta s)^N 
\end{align*}
Which is distributed exactly as a uniformly random ciphertext $(1+\gamma N)u^N$ with $(\gamma,u)\in  \mathbb{Z}_{N}\times\mathbb{Z}^*_{N}$ since for all $(\gamma,u)$ there exists $(\beta,s)$ such that 
$$
\gamma = \delta_1 + \beta (\delta_0-\alpha) \mbox{ and } u=t_1t_0^\beta s
$$
Which is guaranteed unless $\delta_0=\alpha$ which happens only with negligible probability thanks to the principle of deferred decisions.
\end{proof}


\subsection{The Repetition Scheme - Proofs}\label{app:repetitionproofs}

\begin{lem}
Assume that $\oneACE$ is a correct \oACE scheme.
Then the ACE scheme $\ACE$ from Construction~\ref{con:ACE} enjoys correctness.
\end{lem}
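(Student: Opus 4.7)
The plan is to reduce correctness of the repetition scheme directly to correctness of the underlying \oACE scheme, coordinate by coordinate, focusing only on the $j$-th coordinate where $dk_j$ operates.

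First I would fix arbitrary $m \in \M$ and $i,j \in [n]$ with $P(i,j)=1$, together with $(pp,msk)\from \Setup(1^\kappa,P)$, $ek_i \from \Gen(msk,i,\sen)$, $dk_j \from \Gen(msk,j,\rec)$, and $rk \from \Gen(msk,n+1,\san)$. Unpacking Construction~\ref{con:ACE}, $dk_j = dk_j^\oneACEpow$ is a single \oACE decryption key, while $rk$ contains the full family $\{rk_k^\oneACEpow\}_{k\in[n]}$ and $ek_i$ contains $\{ek_k^\oneACEpow\}_{k\in S}$ with $S=\{k : P(i,k)=1\}$. The hypothesis $P(i,j)=1$ guarantees $j\in S$, so $ek_j^\oneACEpow \in ek_i$.

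Next I would trace through what $\Dec(dk_j, \San(rk, \Enc(ek_i,m)))$ actually computes on the $j$-th coordinate. By construction, since $j\in S$, the $j$-th component of $c=\Enc(ek_i,m)$ is $c_j^\oneACEpow \from \oneACE.\Enc(ek_j^\oneACEpow, m)$. Then $\San$ acts coordinatewise, so the $j$-th component of the sanitized ciphertext is ${c'_j}^\oneACEpow = \oneACE.\San(rk_j^\oneACEpow, c_j^\oneACEpow)$. Finally $\Dec(dk_j, \cdot)$ simply runs $\oneACE.\Dec(dk_j^\oneACEpow, {c'_j}^\oneACEpow)$, ignoring the other $n-1$ coordinates entirely.

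Hence the event $\Dec(dk_j, \San(rk, \Enc(ek_i,m))) \neq m$ is exactly the event
\[
\oneACE.\Dec\bigl(dk_j^\oneACEpow, \oneACE.\San(rk_j^\oneACEpow, \oneACE.\Enc(ek_j^\oneACEpow, m))\bigr) \neq m,
\]
where $(ek_j^\oneACEpow, dk_j^\oneACEpow, rk_j^\oneACEpow)$ are generated honestly from the independent master secret $msk_j^\oneACEpow$ of the $j$-th \oACE instance. Applying Definition~\ref{def:one-correct} to the \oACE scheme bounds this probability by $\negl(\kappa)$, which is exactly what we need. There is no real obstacle here: the only thing to be careful about is that the random ciphertexts placed in positions $k \notin S$ and the independent \oACE instances in those other coordinates play no role whatsoever in the $j$-th coordinate's decryption, so no union bound or hybrid argument is required.
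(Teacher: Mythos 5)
Your proposal is correct and follows essentially the same argument as the paper's proof: both isolate the $j$-th coordinate, observe that $ek_j^\oneACEpow \in ek_i$ because $P(i,j)=1$, and note that the failure event coincides with the correctness failure of the $j$-th \oACE instance, which is negligible by the \oACE correctness definition. Your explicit remark that the other coordinates and the random ciphertexts in positions outside $S$ are irrelevant is a nice clarification but does not change the argument.
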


\begin{proof}
Given any $i,j \in [n]$ such that $P(i,j) = 1$.
Let $c = \left( c_1^\oneACEpow, \dots, c_n^\oneACEpow  \right)$ be an honest encryption of $m$ under encryption key $ek_i$, and let $c' := ({c'_1}^\oneACEpow,\dots,{c'_n}^\oneACEpow)$ be a honest sanitized version of $c$.
Thus, the ciphertext $c_j^\oneACEpow$ is a honest created \oACE encryption (since $ek_j^\oneACEpow \in ek_i$), and ${c'_j}^\oneACEpow$ is a sanitized version of $c_j^\oneACEpow$. 
Observe, given the decryption key $dk_j = dk_j^\oneACEpow$, the decryption algorithm $\ACE.\Dec$ decrypts the $j$'th $\oneACE$ ciphertext.
This means that
\begin{align*}
	\Pr \left[ \ACE.\Dec(dk_j,c') \neq m \right] &= \Pr \left[ \oneACE.\Dec(dk_j^\oneACEpow,{c'_j}^\oneACEpow) \neq m \right]
		< \negl(\secparam)
\end{align*}

The last inequality comes from the correctness of the \oACE scheme.
Thus, we can conclude that $\ACE$ enjoys correctness.
\end{proof}

\begin{thm} \label{thm:no-read}
Assume that $\oneACE$ is a \oACE scheme that satisfies the \emph{No-Read Rule}, and let $\ACE$ be the ACE scheme from Construction~\ref{con:ACE} using $\oneACE$ as the underlying \oACE scheme. Then $\ACE$ satisfies the \emph{No-Read Rule}.
\end{thm}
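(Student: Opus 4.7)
The plan is a standard hybrid argument over the $n$ positions of the ciphertext. Let $H_k$ denote the variant of the No-Read game in which the challenge ciphertext has components $1,\dots,k$ generated from $(m_1,i_1)$ and components $k+1,\dots,n$ generated from $(m_0,i_0)$. Then $H_0$ and $H_n$ are exactly the $b=0$ and $b=1$ sides of the ACE No-Read game, so it suffices to establish $H_{k-1}\approx H_k$ for each $k\in[n]$.

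For each such transition only the $k$-th \oACE component changes, so the reduction $B$ would embed its \oACE No-Read challenge into slot $k$: $B$ generates $(pp_l^\oneACEpow,msk_l^\oneACEpow)$ itself for every $l\neq k$ and receives $pp_k^\oneACEpow$ from its own \oACE challenger. All of $A$'s $\mathcal{O}_G$ and $\mathcal{O}_E$ queries are simulated position by position; for slot $k$, $B$ obtains $ek_k^\oneACEpow$ and, if $A$ ever asks for the sanitizer key, $rk_k^\oneACEpow$ from its own $\mathcal{O}_G^\oneACEpow$ once (both of these queries are always allowed in the \oACE game), and it relays slot-$k$ decryption queries to its own oracle only in cases where the \oACE rules allow it.

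The heart of the argument is a case split on $(P(i_0,k),P(i_1,k))\in\zo^2$ for each slot $k$. If both values are $0$, slot $k$ is an independent uniform ciphertext in both hybrids and no reduction is needed. If both are $1$, $B$ chooses \oACE challenge identities $(1,1)$ with messages $(m_0,m_1)$: under ACE payload privacy, $A$ cannot ask for $dk_k$ (because $P(i_0,k)=1$), so $B$ never needs $dk_k^\oneACEpow$ and invokes \oACE payload privacy; under ACE sender anonymity, $m_0=m_1$ and one invokes \oACE sender anonymity directly. In the mixed cases $(1,0)$ and $(0,1)$, $B$ uses \oACE identities $(1,0)$ respectively $(0,1)$, so that the \oACE challenger's identity-$0$ ciphertext matches the uniform ciphertext the ACE encryption algorithm uses in a slot for which the sender has no key; both flavours of ACE No-Read (payload privacy because some $P(i_b,k)=1$, sender anonymity because $P(i_0,k)\neq P(i_1,k)$) forbid $A$ from asking $dk_k$, so $B$'s queries remain valid under \oACE payload privacy.

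The only delicate point is the bookkeeping that verifies, in every slot and every case, that the ACE-level constraints on $A$'s queries translate into legitimate \oACE-level constraints on $B$'s queries; I expect this to be the main obstacle of the proof, though it is entirely mechanical. Summing over the $n$ hybrid steps gives $\adv{\ACE,A}\leq n\cdot\adv{\oneACE,B}$, which is negligible by assumption on $\oneACE$.
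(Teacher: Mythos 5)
Your proposal is correct and follows essentially the same route as the paper: a hybrid argument over the $n$ ciphertext positions, with the same case analysis on $\bigl(P(i_0,k),P(i_1,k)\bigr)$ for the changing slot and a reduction that embeds the \oACE No-Read challenge in position $k$ while simulating the remaining instances and answering the allowed $ek_k$/$rk_k$ queries via its own oracle. The only cosmetic difference is that in the $(1,1)$ payload-privacy case you reduce directly with \oACE challenge identities $(1,1)$, whereas the paper goes through an intermediate uniformly random ciphertext in that slot; both arguments are sound.
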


\begin{proof}
Let $Q_G$ be the set of all queries $q = (j,t)$ that the adversary issues to the oracle $\mathcal{O}_G$, and let $J$ be the set of all $j\in[n]$ such that $q = (j,\rec) \in Q_G$. 

The no-read rule is shown by presenting a series of hybrid games such that no adversary can distinguish between two successive hybrid games with non-negligible probability.
In the hybrid games we  replace the encryption algorithm with the algorithm: $\Enc_k^*(ek_{i_0},ek_{i_1},m_0,m_1)$, which starts by encrypting the two messages $m_0$ and $m_1$ under the keys $ek_{i_0}$ and $ek_{i_1}$ to get 
\begin{align*}
c^0 &= (c_1^0,\dots,c_n^0) \gets \Enc(ek_{i_0},m_0) \\
c^1 &= (c_1^1,\dots,c_n^1) \gets \Enc(ek_{i_1},m_1)
\end{align*}
Next, the algorithm outputs the following ciphertext $(c_1^1,\dots,c_k^1,c_{k+1}^0,\dots,c_n^0)$, where the first $k$ positions are the first $k$ \oACE ciphertexts from $c^1$, and the last $n-k$ positions are the last $n-k$ \oACE ciphertexts from $c^0$.

\paragraph{Game 0.} The no-read game with $b=0$;
\paragraph{Hybrid $k$ for $k=0,\dots,n$.} Like Game 0 but the encryption algorithm is replaced by the $\Enc_k^*$ algorithm;
\paragraph{Game 1.} The no-read game with $b=1$;

\ \newline
\noindent\textbf{Payload Privacy.} 
The conditions for the payload privacy property states that for all $j \in J$ it holds that $P(i_0,j) = P(i_1,j) = 0$.
This means that for all $j \in J$ we have that $ek_j^\oneACEpow \notin ek_{i_s}$ for $s \in \{0,1 \}$.

\paragraph{Game 0 $\approx$ Hybrid 0.} 
In  Game 0 the ciphertext is generated by running $c \gets \Enc(ek_{i_0},m_0)$, while in Hybrid 0 the ciphertext is generated by running $$c' = (c_1^0,\dots,c_n^0) \gets \Enc_0^*(ek_{i_0},ek_{i_1},m_0,m_1)$$
From the description of the algorithm $\Enc_0^*$ we can conclude that $c'$ is an encryption of $m_0$ generated by the $\Enc$ algorithm using key $ek_{i_0}$.
Thus, the two ciphertexts are identically distributed, which means that Game 0 and Hybrid 0 are identical and therefore indistinguishable to any adversary.

\paragraph{Hybrid $k-1$ $\approx$ Hybrid $k$.} 
The only difference between the two hybrids is the following: in Hybrid $k-1$ the $k$'th \oACE ciphertext is the $k$'th ciphertext from $\Enc(ek_{i_0},m_0)$, while in Hybrid $k$ the $k$'th \oACE ciphertext is the $k$'th ciphertext from $\Enc(ek_{i_1},m_1)$.

Observe that in $\Enc(ek_{i_s},m_s)$ for $s \in \{0,1 \}$ the $k$'th ciphertext is a \oACE encryption of message $m_s$ under key $ek_k^\oneACEpow$ if $ek_k^\oneACEpow \in ek_{i_s}$, otherwise the ciphertext is taken uniformly random from the \oACE ciphertext space $\C_k^\oneACEpow$.
Thus, we look at four cases
\begin{enumerate}
\item $c_k^0 \gets \C_k^\oneACEpow$ and $c_k^1 \gets \C_k^\oneACEpow$
\item $c_k^0 \gets \C_k^\oneACEpow$ and $c_k^1 \gets \oneACE.\Enc(ek_k^\oneACEpow,m_1)$
\item $c_k^0 \gets \oneACE.\Enc(ek_k^\oneACEpow,m_0)$ and $c_k^1 \gets \C_k^\oneACEpow$
\item $c_k^0 \gets \oneACE.\Enc(ek_k^\oneACEpow,m_0)$ and $c_k^1 \gets \oneACE.\Enc(ek_k^\oneACEpow,m_1)$
\end{enumerate}

Notice, from the condition of the the payload privacy we have that the adversary is only allowed to ask for the decryption key $dk_k$ if $ek_k^\oneACEpow \notin ek_{i_s}$ for $s \in \zo$. 
Thus, if the adversary gets $dk_k$, then we are in Case 1.

In Case 1 the two ciphertexts are clearly indistinguishable, since they are chosen uniformly random from the same ciphertext space. 

In Case 2 and 3, the adversary does not have the decryption key $dk_k$. If the adversary $A$ is able to distinguish between Hybrid $k-1$ and Hybrid $k$, then we can construct a new adversary $B$ that breaks the no-read rule of the \oACE scheme. 
Intuitively this is done as follow: when $B$ receives the challenge from adversary $A$ he forwards the messages together with identities 1 and 0, and receives back a \oACE ciphertext $c'$ from the \oACE challenger. Then $B$ creates the challenge ciphertext as in Hybrid $k$, replaces the $k$'th \oACE ciphertext by $c'$, and sends the ciphertext to the adversary. 
$B$ will answer adversary $A$'s queries as follows
\begin{itemize}
\item $A$ queries $(i,\sen)$ where $P(i,k) = 1$, then $B$ queries the challenger for encryption key $ek_k^\oneACEpow$, and construct the rest of the encryption key $ek_i$ honestly.
\item $A$ queries $(n+1,\san)$, then $B$ queries the challenger for the sanitizer key $rk_k^\oneACEpow$, and construct the rest of the sanitizer key $rk$ honestly.
\item $A$ queries $(i,m)$ where $P(i,k) = b$ for $b\in\zo$, then $B$ sends the query $(b,m)$ to the challenger, receives back a ciphertext $c_k^\oneACEpow$, and construct the rest of the response honestly. 
\item For all other queries $B$ will answer using the algorithms of the scheme.
\end{itemize}
Notice, $A$ will never query $(k,\rec)$ since then we would not be in case 2 or 3. Thus, $B$ never has to query the challenger for $dk_k^\oneACEpow$.
If $A$ wins the no-read game with non-negligible probability, then so does adversary $B$. This contradicts our assumption that the \oACE scheme satisfies the no-read rule. Thus, adversary $A$ cannot distinguish between the two hybrids.



In Case 4, the adversary does not have the decryption key $dk_k$. To argue that the two ciphertexts are indistinguishable, we use an intermediate step. Following the same arguments as above, we argue that $c_k^0$ is indistinguishable from a random ciphertext $c^*$, and then we argue that $c_k^1$ is indistinguishable from $c^*$.
Thus, the two ciphertexts are indistinguishable.

\paragraph{Hybrid $n$ $\approx$ Game 1.} 
In  Game 1 the ciphertext is generated by running $c \gets \Enc(ek_{i_1},m_1)$, while in Hybrid $n$ the ciphertext is generated by running $$c' = (c_1^1,\dots,c_n^1) \gets \Enc_n^*(ek_{i_0},ek_{i_1},m_0,m_1)$$
From the description of the algorithm $\Enc_n^*$ we can conclude that $c'$ is actually an encryption of $m_1$ generated by the $\Enc$ algorithm using key $ek_{i_1}$.
Thus, the two ciphertexts are identically distributed, which means that Game 1 and Hybrid $n$ are identical and therefore indistinguishable to any adversary.

\ \newline
\noindent\textbf{Sender Anonymity.} 
The conditions for the sender anonymity property states that for all $j \in J$ it holds that $P(i_0,j) = P(i_1,j)$ and $m_0 = m_1$.
This means that for all $j \in J$ we have that $ek_j^\oneACEpow$ is either in both encryption keys $ek_{i_s}$ for $s \in \{0,1 \}$, or not in any of the two.

\paragraph{Game 0 $\approx$ Hybrid 0.} The same arguments as in the payload privacy case.

\paragraph{Hybrid $k-1$ $\approx$ Hybrid $k$.} 
The only difference between the two hybrids is the following: in Hybrid $k-1$ the $k$'th \oACE ciphertext is the $k$'th ciphertext from $\Enc(ek_{i_0},m_0)$, while in Hybrid $k$ the $k$'th \oACE ciphertext is the $k$'th ciphertext from $\Enc(ek_{i_1},m_1)$.
Again, we look at four cases depending on whether $ek_k^\oneACEpow \in ek_{i_s}$ for some $s \in \zo$
\begin{enumerate}
\item $c_k^0 \gets \C_k^\oneACEpow$ and $c_k^1 \gets \C_k^\oneACEpow$
\item $c_k^0 \gets \C_k^\oneACEpow$ and $c_k^1 \gets \oneACE.\Enc(ek_k^\oneACEpow,m_1)$
\item $c_k^0 \gets \oneACE.\Enc(ek_k^\oneACEpow,m_0)$ and $c_k^1 \gets \C_k^\oneACEpow$
\item $c_k^0 \gets \oneACE.\Enc(ek_k^\oneACEpow,m_0)$ and $c_k^1 \gets \oneACE.\Enc(ek_k^\oneACEpow,m_1)$
\end{enumerate}

Notice, from the condition of the sender anonymity property, we have that the adversary can ask for the decryption key $dk_k$, iff $P(i_0,k) = P(i_1,k)$ for all $k\in J$. Thus, if the adversary queries $dk_k$ then we are in Case 1 or 4. 

In Case 1, the two ciphertexts are indistinguishable, since they are chosen uniformly random from the same ciphertext space. 

In Case 2 and 3, the adversary does not have the decryption key $dk_k$. Thus, if the adversary can distinguish between the two hybrids, then he breaks the no-read rule of the \oACE scheme (see Case 2 and 3 for payload privacy). 

In Case 4, we have $P(i_0,k)=P(i_1,k)=1$, and the adversary is allowed to get the corresponding decryption key $dk_k$. If the adversary queries $dk_k$, then the condition states that $m_0 = m_1$. Thus, the two \oACE ciphertexts are encryptions of the same message under the same \oACE encryption key. Thus, the adversary cannot distinguish between the hybrids.
If the adversary does not ask for the decryption key $dk_k$, then we are in the same situation as in Case 4 for payload privacy.

\paragraph{Hybrid $n$ $\approx$ Game 1.} The same arguments as in the payload privacy case.

\end{proof}

\begin{thm} \label{thm:no-write}
Assume $\oneACE$ is a \oACE scheme that satisfies the \emph{No-Write Rule}, and let $\ACE$ be the ACE scheme from Construction~\ref{con:ACE} using $\oneACE$ as the underlying \oACE scheme. Then $\ACE$ satisfies the \emph{No-Write Rule}. 
\end{thm}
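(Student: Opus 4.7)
The plan is to reduce the No-Write rule of $\ACE$ to that of $\oneACE$ via a hybrid argument over the $n$ component instances. For notational convenience I will first invoke Lemma~\ref{lem:nowrite_equiv} and argue in the equivalent formulation of Definition~\ref{def:alternativeACEnowrite}, in which the adversary supplies the full challenge triple $(c,i',m)$ and the challenger returns either $\San(rk,c)$ or $\San(rk,\Enc(ek_{i'},m))$; the inner reductions below use the analogous alternative formulation for $\oneACE$.

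I define Hybrid $k$ for $k=0,1,\dots,n$ as the alternative No-Write game in which the $j$-th $\oneACE$ component of the challenge response equals the $j$-th component of $\San(rk,\Enc(ek_{i'},m))$ when $j\leq k$ and the $j$-th component of $\San(rk,c)$ when $j>k$. Hybrid $0$ is the $b=1$ branch and Hybrid $n$ is the $b=0$ branch, so it suffices to show that consecutive hybrids are computationally indistinguishable. They differ only at position $k$: in Hybrid $k-1$ the $k$-th component is $\oneACE.\San(rk_k^\oneACEpow,c_k^\oneACEpow)$, whereas in Hybrid $k$ it is $\oneACE.\San(rk_k^\oneACEpow,\oneACE.\Enc(ek_{i_k^*}^\oneACEpow,m_k^*))$, where $(i_k^*,m_k^*)=(1,m)$ if $P(i',k)=1$ and $(i_k^*,m_k^*)=(0,\bot)$ otherwise---exactly matching the repetition scheme's behavior at that position.

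From an adversary $A$ distinguishing Hybrids $k-1$ and $k$ I will build an $\oneACE$ No-Write adversary $B$ that embeds the $\oneACE$ challenger's instance at position $k$ and generates the remaining $n-1$ instances itself. Component-wise simulation serves all of $A$'s ACE oracle queries: calls to $\mathcal{O}_S(j',t)$ and $\mathcal{O}_R(j',t)$ are answered locally for positions $\neq k$ and forwarded to $\oneACE.\mathcal{O}_S(1,t)$ or $\oneACE.\mathcal{O}_R(1,t)$ whenever they touch position $k$, while $\mathcal{O}_E(i,m)$ is served by stitching together locally computed sanitized encryptions with a query to $\oneACE.\mathcal{O}_E$ at position $k$, using identity $1$ if $P(i,k)=1$ and identity $0$ otherwise. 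On receiving the ACE challenge $(c,i',m)$, $B$ computes $(i_k^*,m_k^*)$ as above, forwards $(c_k^\oneACEpow,i_k^*,m_k^*)$ to the $\oneACE$ challenger, and plugs the returned sanitized component into position $k$ of the response while filling the other positions as the current hybrid prescribes.

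The main obstacle is verifying that $B$ respects the three conditions of the $\oneACE$ No-Write game. Condition $1$ is automatic because $A$ is forbidden from requesting the ACE sanitizer key and hence $B$ never requests $\oneACE.\mathcal{O}_S(1,\san)$. For Condition $2$, if $i_k^*=1$ then $P(i',k)=1$ with $i'\in I_S^{\ACE}\cup\{0\}$, which forces $i'\in I_S^{\ACE}$, so $B$ has already invoked $\oneACE.\mathcal{O}_S(1,\sen)$ while simulating $A$'s query $\mathcal{O}_S(i',\sen)$, placing $1\in I_S^{\oneACE}$. The delicate Condition $3$ reduces to showing that $1\in I_S^{\oneACE}$ and $1\in J^{\oneACE}$ cannot hold simultaneously: the former requires some $i\in I_S^{\ACE}$ with $P(i,k)=1$ and the latter requires $k\in J^{\ACE}$, and coexistence would yield a pair $(i,k)\in I_S^{\ACE}\times J^{\ACE}$ with $P(i,k)=1$, contradicting $A$'s adherence to ACE Condition $3$. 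Summing the losses over the $n$ hybrid transitions gives $\adv{\ACE,A}\leq n\cdot\adv{\oneACE,B}$.
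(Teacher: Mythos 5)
Your proposal is correct and matches the paper's own proof in all essentials: a hybrid argument over the $n$ component instances, with the single-identity No-Write challenge (in its alternative form, Def.~\ref{def:alternativeACEnowrite}, justified via Lemma~\ref{lem:nowrite_equiv}) embedded at the changing position, encryption-oracle queries routed through the \oACE encryption oracle rather than by fetching $ek_k^{\oneACEpow}$, and the same bookkeeping showing that $A$'s compliance with the ACE winning conditions forces $B$'s compliance with the \oACE ones. The only differences are cosmetic --- you traverse the hybrids in the opposite direction and also cast the outer ACE game in the alternative formulation, whereas the paper keeps the outer game standard and uses the alternative game only inside the per-position reduction.
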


\begin{proof}
This theorem is shown by presenting a series of hybrid games such that no adversary can distinguish between two successive hybrids with non-negligible probability.

In the hybrid games we  replace the encryption/sanitization algorithm with a special challenge ciphertext generation algorithm: $\CCG_k$, which takes the encryption key $ek_{i'}$, the sanitizer key $rk$, a random element $r$ from the message space, and the ciphertext $c$ generated by the adversary. The algorithm then encrypts and sanitizes message $r$ to get 
$$(c_1^0,\dots,c_n^0) \from \ACE.\San(rk,\ACE.\Enc(ek_{i'},r))$$ 
and sanitizes the ciphertext $c$ to get 
$$(c_1^1,\dots,c_n^1) \from \ACE.\San(rk,c)$$ 
Finally, the algorithm outputs the following ciphertext $(c_1^1,\dots,c_k^1,c_{k+1}^0,\dots,c_n^0)$, where the first $k$ positions are sanitized encryptions of a random message, and the last $n-k$ are sanitizations of the adversary's ciphertext.

\paragraph{Game 0.} The no-write game with $b=0$
\paragraph{Hybrid $k$ for $k = 0,\dots,n$.} Like Game 0 but replace the encryption algorithm with the special challenge ciphertext generation algorithm $\CCG_k$.
\paragraph{Game 1.} The no-write game with $b=1$

\paragraph{Game 0 $\approx$ Hybrid 0.} 
In Hybrid 0 the ciphertext $c'$ is generated by running $\CCG_0$. From the description of this algorithm we can conclude that this is a ciphertext generated by $\ACE.\San(rk,\ACE.\Enc(ek_{i'},r))$. Thus, Game 0 and Hybrid 0 both generates the ciphertext $c'$ as a sanitized ACE encryption of a random message. Thus, Game 0 and Hybrid 0 are identical and therefore indistinguishable to any adversary.

\paragraph{Game $k-1$ $\approx$ Hybrid $k$.} The difference between the two hybrids is the following: in Hybrid $k-1$ the $k$'th \oACE ciphertext is the $k$'th ciphertext from $\ACE.\San(rk,\ACE.\Enc(ek_{i'},r))$, while in Hybrid $k$ the $k$'th \oACE ciphertext is the $k$'th ciphertext from $\ACE.\San(rk,c)$. 
This give rise to the following two cases depending on whether $ek_k^\oneACEpow \in ek_{i'}$
\begin{enumerate}
\item $c_k^0 \from \oneACE.\San(rk_k^\oneACEpow,\oneACE.\Enc(ek_{k}^\oneACEpow,r))$ \newline and $c_k^1 \from \oneACE.\San(rk_k^\oneACEpow,c_k^\oneACEpow)$
\item $c_k^0 \from {\C'_k}^\oneACEpow$ and $c_k^1 \from \oneACE.\San(rk_k^\oneACEpow,c_k^\oneACEpow)$
\end{enumerate}

Notice, from the condition stated by the no-write rule we have that for all $i \in I_S, j \in J$ it holds that $P(i,j) = 0$. This means that for all the encryption keys the adversary gets before the challenge and for all the decryption keys he gets during the game it must hold that the decryption keys cannot be used to decrypt anything encrypted using the encryption keys.
Specially, this means that if the adversary gets $dk_k$ then $ek_k^\oneACEpow \notin ek_{i'}$ (i.e.~Case 2).

In Case 1, the adversary does not get the decryption key $dk_k$. However, he has queried the encryption key $ek_{i'}$, which contains $ek_k^\oneACEpow$.
If the adversary $A$ is able to distinguish between Hybrid $k-1$ and Hybrid $k$, then he is able to distinguish between the case where the $k$'th \oACE ciphertext is created as a sanitized encryption of message $r$ or the sanitization of his ciphertext. This means that we can construct an adversary $B$ that wins the alternative No-Write game (Def.~\ref{def:alternativeACEnowrite}, equivalent to Def.~\ref{def:ACEnowrite} by Lemma~\ref{lem:nowrite_equiv}) for the \oACE scheme with non-negligible probability. (Note that here it is necessary to use the alternative No-Write game since the reduction must encrypt the same random value $r$ in all positions.) Intuitively, this is done by letting $B$ create the ciphertext $c'$ as in Hybrid $k$, forward the challenge $(c,i',r)$, and replacing the $k$'th \oACE ciphertext by the one he receives from the challenger. Finally, he sends the new ciphertext to the adversary $A$. 

If $A$ wins the game with non-negligible probability, then so does the adversary $B$. This gives us a contradiction with the assumption that the \oACE scheme satisfies the no-write rule. Thus, we can conclude that $A$ cannot distinguish between the two hybrids.
Furthermore, notice that the adversary $A$ can query encryption keys $ek_i$, where $ek_k^\oneACEpow \in ek_i$. In this case the adversary $B$  queries his challenger for the encryption key $ek_k^\oneACEpow$. This  never conflicts the no-write game for the \oACE scheme, since the adversary $A$ never queries the decryption key $dk_k$. 

In Case 2, the adversary can freely query the decryption and encryption keys as long as he respect the conditions of the no-write game: 1) He can query $dk_k$ iff $ek_k^\oneACEpow \notin ek_i$ for all $i \in I_S$, 2) He can query any encryption key after the challenge.
If the adversary $A$ can distinguish between the two hybrids, we can create an adversary $B$ that wins the alternative no-write game (Def.~\ref{def:alternativeACEnowrite}) for the \oACE scheme. 
Intuitively, this is done by letting $B$ create the ciphertext $c'$ as in Hybrid $k$, forward the challenge together with message $r$, and replacing the $k$'th \oACE ciphertext by the one he receives from the challenger.
Notice that if adversary $A$ queries the decryption key $dk_k$, then he cannot query any encryption key $ek_i$ for $i \in I_S$, which contains $ek_k^\oneACEpow$. 
Thus, adversary $B$ can query to get the decryption key $dk_k^\oneACEpow$ but does not query for the encryption key $ek_k^\oneACEpow$ until after the challenge.
This means that the \oACE ciphertext ${c'_k}^\oneACEpow$ send by the challenger is either a random \oACE ciphertext or a sanitization of $c_k^\oneACEpow$.
If $A$ wins the game with non-negligible probability, then so does the adversary $B$. This gives us a contradiction with the assumption that the \oACE scheme satisfies the no-write rule. Thus, we can conclude that $A$ cannot distinguish between the two hybrids.

If the adversary does not query the decryption key $dk_k$, then he can either query an encryption key $ek_i$, where $ek_k^\oneACEpow \in ek_i$ before the challenge (i.e.~Case 1), or after the challenge (equivalent to Case 2). 
Furthermore, notice that in both cases all encryption queries $(i,m)$ are answered by sending the message to the challenger to get the $k$'th \oACE sanitized ciphertext for the response. 

\paragraph{Hybrid $n$ $\approx$ Game 1.} 
In Hybrid $n$ the ciphertext $c'$ is generated by running $\CCG_n$. From the description of this algorithm we can conclude that this is actually a ciphertext generated by $\ACE.\San(rk,c)$. Thus, Game 1 and Hybrid $n$ both generates the ciphertext $c'$ as a sanitized version of the ciphertext $c$. Thus, Game 1 and Hybrid $n$ are identical and therefore indistinguishable to any adversary.

\end{proof}



\section{Sanitizable Functional Encryption Scheme - Proofs}

\subsection{Proof of Lemma~\ref{lem:IND-CPA_sFE}}\label{app:IND-CPA_sFE}
\setcounter{claimcounter}{0}

In this appendix we show the full adaptive proof of the IND-CPA security for Construction~\ref{con:sanFE}. The proof follows closely the selective IND-CPA security proof of the functional encryption scheme presented by Garg et. al.~\cite{DBLP:conf/focs/GargGH0SW13}. 
The main differences between the proof are the following: we enhance their proof by making it adaptive and quantifying the advantage of the adversary. Furthermore, we make a small change in the proof of the valid iO instance, which make the proof work for our sanitizable version of the functional encryption scheme.

\paragraph{Game 0.} The IND-CPA security game where $b=0$;

\paragraph{Game 1.} The IND-CPA security game where $b=1$; \\

Before proving that Game 0 and Game 1 are indistinguishable, we prove that the following sequence of hybrids are indistinguishable. 

\paragraph{Hybrid 0.} The challenger choose uniformly random two messages $m'_0,m'_1 \in \M$. Then he proceeds as in Game 0 with the exception that when he receives $m_0$ and $m_1$ from the adversary, then he checks that $m_0 = m'_0$ and $m_1 = m'_1$. If this is not the case, then the challenger aborts the game, otherwise he continues as in Game 0.

\paragraph{Hybrid 1.} The same as Hybrid 0, except that after choosing the messages $m'_0$ and $m'_1$, the challenger encrypts message $m'_0$ twice: $c_i^* = \sanPKE.\Enc(pk_i,m'_0;r_i)$ for $i=1,2$. Then the challenger simulates the common reference string $crs_E$ and NIZK proof $\pi_E^*$ as follows
	\begin{align*}
	(crs_E,\tau) \from \NIZK.\Sim_1(1^\kappa,x_E), \quad \pi_E^* \from \NIZK.\Sim_2(crs_E, \tau, x_E)
	\end{align*}
	where $x_E = (c_1^*,c_2^*)$ is the statement we want to prove (see definition of $R_E$ in Construction~\ref{con:sanFE}). Note: the challenger still checks that $m_0 = m'_0$ and $m_1 = m'_1$.

\paragraph{Hybrid 2.} The same as Hybrid 1, except that we change the message of the second PKE ciphertext. Thus, the ciphertexts are computed as follows
	\begin{align*}
	c_1^* = \sanPKE.\Enc(pk_1,m'_0;r_1), \quad c_2^* = \sanPKE.\Enc(pk_2,m'_1;r_2)
	\end{align*}
	Note: $crs_E$ and $\pi_E^*$ are still simulated.

\paragraph{Hybrid 3,$i$ for $i =0,\dots,q$.} The same as Hybrid 2, except that in the first $j \leq i$ secret key queries the secret key is generated as an obfuscation of program $P_2$, while in the last $j > i$ secret key queries the secret key is generated as an obfuscation of program $P_1$.
Observe, Hybrid 2 and Hybrid 3,0 are identical.

\begin{center}
\begin{small}
    \begin{tabular}{| p{6cm} | p{6cm} |}
	\hline
	\multicolumn{1}{|c|}{\textbf{\emph{Program} $P_1$}} & \multicolumn{1}{|c|}{\textbf{\emph{Program} $P_2$}}\\
	\hline 
	\
	
	Input: $c_1,c_2,\pi_S$; 
	
	Const: $crs_S,f,sk_1$;
	
	\

	1. If $\NIZK.\Verify(crs_S,(c_1,c_2),\pi_S) = 1$;
	\begin{itemize}
		\item[] output $f(\sanPKE.\Dec(sk_1,c_1))$;
	\end{itemize}

	2. else output fail;

	&
	\

	Input: $c_1,c_2,\pi_S$; 
	
	Const: $crs_S,f,sk_2$;
	
	\

	1. If $\NIZK.\Verify(crs_S,(c_1,c_2),\pi_S) = 1$;
	\begin{itemize}
		\item[] output $f(\sanPKE.\Dec(sk_2,c_1))$;
	\end{itemize}

	2. else output fail;

	\\
	\hline
	\end{tabular}
\end{small}
\end{center}

\paragraph{Hybrid 4.} The same as Hybrid $3,q$, except that we change the message of the first PKE ciphertext. Thus, the ciphertexts are computed as follows
	\begin{align*}
	c_i^* = \sanPKE.\Enc(pk_i,m'_1;r_i) \quad \text{for } i=1,2
	\end{align*}

\paragraph{Hybrid 5,$i$ for $i =0,\dots,q$.} The same as Hybrid 4, except that in the first $j \leq i$ secret key queries the secret key is generated as an obfuscation of program $P_1$, while in the last $j > i$ secret key queries the secret key is generated as an obfuscation of program $P_2$.
Observe, Hybrid 4 and Hybrid 5,0 is identical.

\paragraph{Hybrid 6.} The same as Hybrid $5,q$, except that $crs_E$ and $\pi_E^*$ is generated honestly, and the PKE ciphertexts are generated as in  Hybrid $5,q$, but after we see the challenge, and have checked that $m_0 = m'_0$ and $m_1 = m'_1$. \\

We  now show that each sequential pair of the hybrids are indistinguishable.

\begin{claim} \label{claim:sFE_IND-CPA_ZK_1}
For any adversary $A$ that can distinguish Hybrid 0 and Hybrid 1, there exists an adversary $B$ for the computational zero-knowledge property of the SSS-NIZK scheme such that the advantage of $A$ is
	\begin{align*}
	\adv{A} \leq \adv{\NIZK,B}
	\end{align*}
\end{claim}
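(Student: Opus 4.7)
The plan is a direct reduction to the computational zero-knowledge property of $\NIZK$. The key observation is that Hybrid 0 and Hybrid 1 differ only in how $crs_E$ and $\pi_E^*$ are generated: in Hybrid 0 both come from $\NIZK.\Setup$ and $\NIZK.\Prove$, while in Hybrid 1 they come from $\NIZK.\Sim_1$ and $\NIZK.\Sim_2$. Everything else---including the two PKE ciphertexts, which in both hybrids encrypt $m'_0$ under fresh randomness---is identically distributed.

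I would construct a ZK distinguisher $B$ as follows. First $B$ samples $m'_0, m'_1 \from \M$ and generates $(pk_1,sk_1),(pk_2,sk_2) \from \sanPKE.\Setup(1^\kappa)$ together with $crs_S \from \NIZK.\Setup(1^\kappa,R_S)$. It then computes $c_1^* \from \sanPKE.\Enc(pk_1,m'_0;r_1)$ and $c_2^* \from \sanPKE.\Enc(pk_2,m'_0;r_2)$ using fresh randomness $r_1,r_2$ of its own choice. $B$ forwards the statement $x_E = (c_1^*,c_2^*)$ together with the witness $w_E = (m'_0,r_1,r_2)$ to its ZK challenger and receives back a pair $(crs_E,\pi_E^*)$ that is either honestly generated or produced by the simulator. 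Setting $pp = (crs_E,crs_S,pk_1,pk_2)$ and $msk = sk_1$, $B$ runs $A$ on $pp$ and answers every key-generation oracle query honestly using $msk$.

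When $A$ outputs its challenge messages $(m_0,m_1)$, $B$ aborts unless $(m_0,m_1) = (m'_0,m'_1)$; otherwise it returns $c^* = (c_1^*,c_2^*,\pi_E^*)$ as the challenge ciphertext, continues answering post-challenge queries honestly, and ultimately outputs $A$'s guess bit. If the ZK challenger provided a real transcript, then $B$'s view for $A$ is distributed exactly as Hybrid 0; if it provided a simulated transcript, exactly as Hybrid 1. Since the abort event depends only on the guessed pair $(m'_0,m'_1)$, which is sampled identically in both hybrids, it occurs with the same probability in either case, so conditional and unconditional advantages coincide and we get $\adv{A} \leq \adv{\NIZK,B}$.

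The reduction is essentially mechanical and I do not anticipate any serious obstacle. The only subtle point worth flagging is that $B$ must supply a valid witness for $x_E \in R_E$ \emph{before} invoking the ZK challenger; this is possible precisely because $B$ chooses $m'_0$ itself and performs the two encryptions using randomness it controls, so $(m'_0,r_1,r_2)$ is a legitimate witness. The same reduction template (with an analogous one-time guess of the challenge messages) will presumably reappear in the later claims where the CRS/proof are switched back to honestly generated, explaining the uniform shape of the computational-ZK loss throughout the hybrid sequence.
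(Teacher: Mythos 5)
Your reduction is correct and is essentially the same as the paper's: guess $(m'_0,m'_1)$ up front, generate the PKE keys and $crs_S$ yourself, encrypt $m'_0$ twice with known randomness, hand the statement $(c_1^*,c_2^*)$ and witness $(m'_0,r_1,r_2)$ to the ZK challenger, embed the returned CRS and proof into $pp$ and the challenge ciphertext, answer key queries with $msk=sk_1$, and forward $A$'s bit. Your extra remark that the abort event has identical probability in both hybrids (since the guess is already part of the hybrids' definition) is a fine, slightly more explicit justification of the advantage bound than the paper gives.
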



\begin{proof}
Assume that any adversary breaks the computational zero-knowledge property with advantage at most $\epsilon$.
Assume for contradiction that there exists an adversary $A$ that distinguishes the hybrids with advantage greater than $\epsilon$, then we can construct a poly-time adversary $B$ that breaks the computational zero-knowledge property with advantage greater than $\epsilon$. 

$B$ begins by choosing $m'_0,m'_1 \in \M$ uniformly random. Then he computes the PKE public and secret keys honestly, and runs one copy of the NIZK setup algorithm: $crs_S \from \NIZK.\Setup(1^\kappa)$. Next, $B$ encrypts message $m'_0$ under both public keys: $c_i^* \from \sanPKE.\Enc(pk_i,m'_0;r_i)$ for $i=1,2$, and sends $x_E = (c_1^*,c_2^*)$ and $w_E = (m'_0,r_1,r_2)$ to the zero-knowledge challenger. In return $B$ receives $(crs',\pi')$, and he uses $crs_E = crs'$ and $\pi_E^* = \pi'$. Thus, $B$ lets the public parameters $pp = (crs_E,crs_S,pk_1,pk_2)$, and the challenge response $c^* = (c_1^*,c_2^*,\pi_E^*)$. 
The rest of the game follow the structure of the sFE IND-CPA game, and concludes with $B$ forwarding $A$'s response $b'$ to the challenger. 

Observe, if the challenger generates $crs'$ and $\pi'$ honestly, then we are in Hybrid 0, and if the challenger simulates the proof, then we are in Hybrid 1. Thus, if adversary $A$ can distinguish the hybrids with advantage greater than $\epsilon$, then adversary $B$ can break the computational zero-knowledge property of the SSS-NIZK scheme with advantage greater than $\epsilon$. Thus, we reach a contradiction.
\end{proof}

\begin{claim} \label{claim:sFE_IND-CPA_PKE_1}
For any adversary $A$ that can distinguish Hybrid 1 and Hybrid 2, there exists an adversary $C$ for the IND-CPA security game of the sanitizable PKE scheme such that the advantage of $A$ is
	\begin{align*}
	\adv{A} \leq \adv{\sanPKE,C}
	\end{align*}
\end{claim}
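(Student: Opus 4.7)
The plan is to construct a reduction $C$ against the IND-CPA security of $\sanPKE$ using $A$ as a subroutine. The crucial structural observation is that Hybrid 1 and Hybrid 2 differ only in whether $c_2^*$ encrypts $m_0'$ or $m_1'$ under $pk_2$: in both hybrids the NIZK proof $\pi_E^*$ is already simulated (so no witness is needed), and the obfuscated decryption programs returned on secret key queries are the original program $P$ using $sk_1$ (so $sk_2$ is never consulted). This is exactly the setup that lets the reduction avoid ever needing $sk_2$, and it is what makes the claim essentially mechanical.

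Concretely, $C$ would first sample $m_0', m_1' \gets \M$ (of equal length), receive a public key $pk_2$ from its own challenger, generate $(pk_1, sk_1) \gets \sanPKE.\Setup(1^\kappa)$ and $crs_S \gets \NIZK.\Setup(1^\kappa, R_S)$ honestly, and compute $c_1^* \gets \sanPKE.\Enc(pk_1, m_0'; r_1)$. Then $C$ submits $(m_0', m_1')$ to the PKE challenger and receives back $c_2^*$, an encryption of $m_\beta'$ under $pk_2$ for the challenger's bit $\beta$. With $c_2^*$ in hand, $C$ can now form $x_E = (c_1^*, c_2^*)$ and simulate $(crs_E, \tau) \gets \NIZK.\Sim_1(1^\kappa, x_E)$ and $\pi_E^* \gets \NIZK.\Sim_2(crs_E, \tau, x_E)$.

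Next, $C$ hands $pp = (crs_E, crs_S, pk_1, pk_2)$ to $A$ and answers every secret key query $f_i$ by running the honest $\Gen$ (which only needs $sk_1$), exactly as both hybrids do. When $A$ submits its challenge $(m_0, m_1)$, $C$ aborts unless $(m_0, m_1) = (m_0', m_1')$; otherwise it returns $c^* = (c_1^*, c_2^*, \pi_E^*)$, continues to answer post-challenge queries with $sk_1$, and finally forwards $A$'s output $b'$ to the PKE challenger. By inspection, $\beta = 0$ reproduces the distribution of Hybrid 1 exactly, and $\beta = 1$ reproduces the distribution of Hybrid 2 exactly, which yields $\adv{A} \leq \adv{\sanPKE, C}$.

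The only subtle point I anticipate is sequencing: the paper's NIZK simulator $\Sim_1$ takes the statement as input, so $C$ cannot produce $crs_E$ — and therefore cannot publish $pp$ — before it has obtained $c_2^*$. This is resolved by reordering $C$'s steps so that the IND-CPA query is made during set-up rather than at $A$'s challenge phase. That reordering is legitimate precisely because the messages $(m_0', m_1')$ are pre-committed at the start of the hybrid and the abort clause — which appears identically in both Hybrid 1 and Hybrid 2 — ensures $A$ only matters when its challenge coincides with the pre-committed pair; the tightness stated in the claim is what falls out from this simulation, with the $|\M|$-style leveraging loss already absorbed into the outer invocation of the hybrid.
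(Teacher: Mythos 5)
Your reduction is correct and follows essentially the same route as the paper's own proof: sample $(m_0',m_1')$ up front, generate $(pk_1,sk_1)$ and $crs_S$ honestly, take $pk_2$ and the challenge ciphertext $c_2^*$ from the PKE challenger, then simulate $crs_E$ and $\pi_E^*$ for the statement $(c_1^*,c_2^*)$, answer key queries using only $sk_1$, abort on a wrong message guess, and forward $A$'s bit. Your observation about the sequencing forced by $\Sim_1$ taking the statement as input is exactly how the paper orders the steps, and the tight bound $\adv{A}\leq\adv{\sanPKE,C}$ with the $2|\M|$ leveraging loss kept outside the claim matches the paper.
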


\begin{proof}
Assume that any adversary wins the IND-CPA security game with advantage at most $\epsilon$.
Assume for contradiction that there exists an adversary $A$ that distinguishes the hybrids with advantage greater than $\epsilon$, then we can construct a poly-time adversary $C$ that breaks the PKE IND-CPA security with advantage greater than $\epsilon$. 

$C$ begins by choosing $m'_0,m'_1 \in \M$ uniformly random. Then he computes $(pk_1,sk_1) \from \sanPKE.\Setup(1^\kappa)$ and $c_1^* \from \sanPKE.\Enc(pk_1,m'_0;r_1)$, and he runs one copy of the NIZK setup algorithm: $crs_S \from \NIZK.\Setup(1^\kappa)$. He then receives a second public key $pk'$ from the challenger, and sets $pk_2 = pk'$. Next, he sends $m'_0,m'_1$ to the challenger at receives back $c'$, and sets $c_2^* = c'$. Then $C$ uses the SSS-NIZK simulator to generate $crs_E$ and $\pi_E^*$ for the statement $x_E = (c_1^*,c_2^*)$. Then, $C$ sets the public parameters $pp = (crs_E,crs_S,pk_1,pk_2)$, and the challenge response $c^* = (c_1^*,c_2^*,\pi_E^*)$. 
The rest of the game follow the structure of the sFE IND-CPA game, and concludes with $C$ forwarding $A$'s response $b'$ to the challenger. 

Observe, if the challenger encrypts $m'_0$ then we are in Hybrid 1, and if the challenger encrypts $m'_1$ then we are in Hybrid 2. Thus, if adversary $A$ can distinguish the hybrids with advantage greater than $\epsilon$, then adversary $C$ can break the IND-CPA security of the PKE scheme with advantage greater than $\epsilon$. Thus, we reach a contradiction.
\end{proof}

\begin{claim} \label{claim:sFE_IND-CPA_iO_1}
For any adversary $A$ that can distinguish Hybrid $3,i$ and Hybrid $3,i+1$, there exists an adversary $D$ for iO such that the advantage of $A$ is
	\begin{align*}
	\adv{A} \leq \adv{iO,C} (1 - 2\sssprob)
	\end{align*}
	where $\sssprob$ is the negligible probability of the statistical simulation-soundness for the SSS-NIZK scheme.
\end{claim}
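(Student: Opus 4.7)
The plan is to construct an iO distinguisher $D$ that plays the role of the sFE challenger against $A$ while embedding its iO challenge into the $(i+1)$-th secret-key query. Concretely, $D$ samples the two sanitizable PKE key pairs $(pk_1,sk_1),(pk_2,sk_2)$, generates $crs_S$ honestly and $crs_E$ via $\NIZK.\Sim_1$, picks $m'_0,m'_1\in\M$ at random, forms $c_1^*=\sanPKE.\Enc(pk_1,m'_0;r_1)$, $c_2^*=\sanPKE.\Enc(pk_2,m'_1;r_2)$, and $\pi_E^*\from\NIZK.\Sim_2(crs_E,\tau,(c_1^*,c_2^*))$, aborting later if the adversary's challenge messages differ from $(m'_0,m'_1)$. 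The first $i$ key queries are answered by obfuscating $P_2$ and the queries after the $(i+1)$-th by obfuscating $P_1$; for the $(i+1)$-th query, once $A$ reveals the function $f$, $D$ builds $P_1$ and $P_2$ (padded to equal size), submits them to the iO challenger, and uses the returned obfuscation as $SK_f$. Finally $D$ outputs whatever $A$ outputs.

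The main step is to argue that $P_1$ and $P_2$ are functionally equivalent on every input, so that iO security applies. Fix any input $(c'_1,c'_2,\pi_S)$ on which $\NIZK.\Verify(crs_S,(c'_1,c'_2),\pi_S)=1$. Because $crs_S$ is honestly generated, statistical soundness of the $R_S$ proof system yields, except with probability $\sssprob$, a witness $(c_1,c_2,s_1,s_2,\pi_E)$ with $c'_1=\sanPKE.\San(pk_1,c_1;s_1)$, $c'_2=\sanPKE.\San(pk_2,c_2;s_2)$, and $\NIZK.\Verify(crs_E,(c_1,c_2),\pi_E)=1$. Since $crs_E$ is the simulated CRS for the statement $(c_1^*,c_2^*)$, statistical simulation-soundness of $R_E$ implies, again except with probability $\sssprob$, that either $(c_1,c_2)=(c_1^*,c_2^*)$ or $(c_1,c_2)\in L_E$ (so $c_1,c_2$ encrypt the same plaintext under $pk_1,pk_2$ respectively). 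In the second case, correctness of PKE sanitization gives $\sanPKE.\Dec(sk_1,c'_1)=\sanPKE.\Dec(sk_2,c'_2)$, so $P_1$ and $P_2$ agree. In the first case, perfect PKE sanitization ensures $\sanPKE.\Dec(sk_1,c'_1)=m'_0$ and $\sanPKE.\Dec(sk_2,c'_2)=m'_1$; the IND-CPA game rules force $f(m_0)=f(m_1)$ on every queried $f$, and conditioned on not aborting we have $m_0=m'_0,m_1=m'_1$, so again $P_1$ and $P_2$ agree on this input.

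The hard part is exactly the first case above: the challenge ciphertext has $c_1^*$ and $c_2^*$ encrypting \emph{different} messages, so a priori an input derived from the challenge (via the public $\San$ algorithm) could separate the two decryption strategies. Closing this gap requires combining (a) the structural guarantee that any accepted $\pi_S$ pins down a witness $(c_1,c_2)$ verifiable under $crs_E$, (b) SSS of $R_E$ to exclude any false statement other than the simulated one, and (c) the IND-CPA admissibility condition $f(m_0)=f(m_1)$ to neutralise the only remaining bad case. Given functional equivalence (which holds except with probability at most $2\sssprob$ by a union bound over the two soundness events), the distributions seen by $A$ in Hybrid $3,i$ and Hybrid $3,i+1$ differ only in the obfuscated program handed out as $SK_f$, so any distinguishing advantage of $A$ translates into an equal advantage of $D$ against iO, yielding the claimed bound $\adv{A}\le \adv{iO,C}(1-2\sssprob)$.
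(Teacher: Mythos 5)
Your proposal is correct and follows essentially the same route as the paper: the same reduction that guesses $(m'_0,m'_1)$, simulates $crs_E$ and $\pi_E^*$, embeds the iO challenge in the $(i{+}1)$-th key query, and the same core argument that $P_1,P_2$ are functionally equivalent except with probability $2\sssprob$, using statistical soundness of the $R_S$ proofs, statistical simulation-soundness of the $R_E$ proofs, and the admissibility condition $f(m_0)=f(m_1)$ to handle the simulated statement $(c_1^*,c_2^*)$. Your reorganization (extracting an existential witness from an accepted $\pi_S$ rather than the paper's four-case input analysis) is only presentational; the substance matches.
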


\begin{proof}
Assume that any adversary wins the iO indistinguishable game with advantage at most $\epsilon$.
Assume for contradiction that there exists an adversary $A$ that distinguishes the hybrids with advantage greater than $\epsilon$, then we can construct a poly-time adversary $D$ that breaks the iO property with advantage greater than $\epsilon \cdot (1 - 2\sssprob)$. 

$D$ begins by choosing $m'_0,m'_1 \in \M$ uniformly random. Then he runs two copies of the PKE setup algorithm and encrypts: $c_1^* \from \sanPKE.\Enc(pk_1,m'_0;r_1)$ and $c_2^* \from \sanPKE.\Enc(pk_2,m'_1;r_2)$. Next, he runs the SSS-NIZK setup algorithm to get $crs_S$, and uses the SSS-NIZK simulators to get $crs_E$ and $\pi_E^*$ for statement $x_E=(c_1^*,c_2^*)$. Then, $D$ sets the public parameters $pp = (crs_E,crs_S,pk_1,pk_2)$, and the challenge response $c^* = (c_1^*,c_2^*,\pi_E^*)$. 
Next, the game follows the structure of the sFE IND-CPA game, except the secret key queries are generated as follows
\begin{itemize}
\item For $j \leq i$ the $j$'th secret key is generated as the obfuscation of program $P_2$
\item For $j > i+1$ the $j$'th secret key is generated as the obfuscation of program $P_1$
\item For $j=i+1$ the $j$'th secret key is generated as follows: $D$ sends $P_1$ and $P_2$ to the challenger, and receives back an obfuscated program $P'$, which $D$ sends to $A$ as the $i+1$'th secret key.
\end{itemize}
$D$ concludes the game by forwarding $A$'s response $b'$ to the challenger.  \\

Assume that the two programs $P_1$ and $P_2$ are a valid iO instance, then we observe: if the challenger obfuscates program $P_1$, then we are in Hybrid $3,i$, and if the challenger obfuscates program $P_2$, then we are in Hybrid $3,i+1$. Thus, if adversary $A$ can distinguish the hybrids with advantage greater than $\epsilon$, then adversary $D$ can break the iO property with advantage greater than $\epsilon$. 

If the two program are not a valid iO instance, then $D$ does not follow the rules of the iO game, and cannot win the game. Let $p$ be the probability that the two programs are a valid iO instance. Then the adversary $D$ can break the iO property with advantage greater than $\epsilon p$. \\

Next, we prove that $P_1$ and $P_2$ are a valid instance with probability $p \geq  1-2\sssprob$.
To prove this we look at the possible inputs to the program. Let $c'=(c'_1,c'_2,\pi_S)$ denote the sanitization of $c=(c_1,c_2,\pi_E)$.
\begin{enumerate}
\item $c$ is a correct encryption of some message (i.e., $c_1$ and $c_2$ are encryption of the same message, and $\pi_E$ is a proof of that), and $c'$ is a correct sanitization of ciphertext $c$ (i.e., $c'_i$ is a PKE sanitization of $c_i$ for $i=1,2$, and $\pi_S$ is a proof of that and a proof that $\pi_E$ is a correct proof).
\item $c_1$ and $c_2$ are PKE encryptions of different messages, but the proof $\pi_E$ verifies, and $c'$ is a correct sanitization of ciphertext $c$.
\item $c$ is an incorrect ciphertext such that the proof $\pi_E$ does not verify, and $c'$ is a correct sanitization of $c$.
\item $c'$ is an incorrect sanitized ciphertext such that the proof $\pi_S$ does not verify.
\end{enumerate}

Case 1: the proof $\pi_S$ passes the verification, and both programs  decrypt to the same message and compute the same function. 

Case 2: the statistical simulation-soundness of the SSS-NIZK used in the encryption gives us that with probability negligible close to one, this only happens if $c_1 = c_1^*$ and $c_2 = c_2^*$ (i.e. the ciphertext for which the proof was simulated). Since the simulated proof can be verified, we can construct a correct proof $\pi_S$ that also verifies. Thus, program $P_1$ decrypts $c_1$ to $m'_0$ and outputs $f_{i+1}(m'_0)$, while $P_2$ decrypts $c_2$ to $m'_1$ and outputs $f_{i+1}(m'_1)$. Thus, the two program have the same output, since $f_{i+1}(m'_0) = f_{i+1}(m'_1)$. 
This means that the two programs have different output on the same input with negligible probability $\sssprob$.

Case 3: the statistical soundness property of the SSS-NIZK used in the sanitization gives us that this is impossible with probability negligible close to one. Since the proof $\pi_E$ does not verify, we cannot construct a proof $\pi_S$ that verifies. 
Thus, the two programs have different output on the same input with negligible probability $\sssprob$.

Case 4: The verification of the proof $\pi_S$ does not pass in both programs. Thus, both programs outputs fail. \\

This means, that the combined probability that the two programs have different output on the same input can be upper bounded by  $2\sssprob$. Thus, the probability that the two programs are a valid iO instance is $p \geq 1-2\sssprob$.
\end{proof}

\begin{claim} \label{claim:sFE_IND-CPA_PKE_2}
For any adversary $A$ that can distinguish Hybrid $3,q$ and Hybrid 4, there exists an adversary $C$ for the IND-CPA security game of the sanitizable PKE scheme such that the advantage of $A$ is
	\begin{align*}
	\adv{A} \leq \adv{\sanPKE,C}
	\end{align*}
\end{claim}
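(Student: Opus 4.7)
The plan is to reduce the indistinguishability of Hybrid $3,q$ and Hybrid 4 to the IND-CPA security of the sanitizable PKE scheme, by exactly mirroring the reduction of Claim~\ref{claim:sFE_IND-CPA_PKE_1} but now targeting the \emph{first} PKE instance instead of the second. The crucial enabling observation is that in both of these hybrids every secret key handed to the adversary is an obfuscation of program $P_2$, which only uses $sk_2$ to decrypt; hence the reduction does not need to know $sk_1$ and can legitimately outsource the first PKE key to an external challenger.

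Concretely, the reduction $C$ will first sample $m'_0, m'_1 \from \M$ uniformly at random. It receives the challenge public key $pk'$ from the PKE IND-CPA challenger and sets $pk_1 := pk'$. It then runs $\sanPKE.\Setup$ itself to obtain $(pk_2, sk_2)$, and runs $\NIZK.\Setup(1^\kappa, R_S)$ honestly to obtain $crs_S$. Next, $C$ submits the message pair $(m'_0, m'_1)$ to the PKE challenger and receives back a challenge ciphertext $c'$; it sets $c_1^* := c'$ and computes $c_2^* \from \sanPKE.\Enc(pk_2, m'_1; r_2)$ itself using fresh randomness $r_2$. It then invokes the NIZK simulator to obtain $(crs_E, \tau) \from \NIZK.\Sim_1(1^\kappa, x_E)$ and $\pi_E^* \from \NIZK.\Sim_2(crs_E, \tau, x_E)$ on the statement $x_E = (c_1^*, c_2^*)$, exactly as in both hybrids. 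The public parameters $pp = (crs_E, crs_S, pk_1, pk_2)$ and the challenge ciphertext $c^* = (c_1^*, c_2^*, \pi_E^*)$ are then forwarded to $A$.

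To answer $A$'s secret key queries, $C$ produces $SK_f := iO(P_2)$ for each requested function $f$, which it can do since it knows $sk_2$. Following the abort convention already used throughout the hybrid sequence, $C$ aborts the experiment unless the adversary's declared challenge pair matches its initial guess $(m'_0, m'_1)$; this abort affects both hybrids identically and so does not alter the distinguishing advantage. When $A$ eventually outputs a guess $b'$, $C$ forwards it to the PKE challenger. If the PKE challenger encrypted $m'_0$ then the view of $A$ is distributed exactly as in Hybrid $3,q$, while if it encrypted $m'_1$ the view is exactly that of Hybrid 4. Any distinguishing advantage of $A$ therefore translates directly into an IND-CPA advantage for $C$, yielding $\adv{A} \leq \adv{\sanPKE, C}$.

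The main subtlety, and the reason this claim can be proved at all, is ensuring that neither the NIZK proof nor the issued decryption keys leak the plaintext under $pk_1$. Both obstacles have already been removed by the earlier hybrid steps: the proof $\pi_E^*$ is simulated (so the reduction needs no witness), and every secret key has been switched to the $P_2$ variant (so the reduction needs no $sk_1$). This is precisely why Hybrid 4 sits after the full block of program-switching hybrids $3,0,\ldots,3,q$, and the argument is essentially symmetric to Claim~\ref{claim:sFE_IND-CPA_PKE_1} with the roles of the two PKE instances swapped.
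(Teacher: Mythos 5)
Your proposal is correct and matches the paper's intent: the paper simply states that this claim follows the same structure as Claim~\ref{claim:sFE_IND-CPA_PKE_1}, and you have filled in exactly that symmetric reduction, with the roles of the two PKE instances swapped. In particular, you correctly identify the enabling facts --- the simulated proof $\pi_E^*$ and the fact that all issued keys are obfuscations of $P_2$, so the reduction never needs $sk_1$ --- which is precisely why Hybrid $4$ is placed after the block of hybrids $3{,}0,\ldots,3{,}q$.
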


The proof of the claim follow the same structure at the proof of Claim~\ref{claim:sFE_IND-CPA_PKE_1}.

\begin{claim} \label{claim:sFE_IND-CPA_iO_2}
For any adversary $A$ that can distinguish Hybrid $5,i$ and Hybrid $5,i+1$, there exists an adversary $D$ for iO such that the advantage of $A$ is
	\begin{align*}
	\adv{A} \leq \adv{iO,C} (1 - 2\sssprob)
	\end{align*}
	where $\sssprob$ is the negligible probability of the statistical simulation-soundness for the SSS-NIZK scheme. \\
\end{claim}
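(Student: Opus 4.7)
The plan is to mirror the proof of Claim~\ref{claim:sFE_IND-CPA_iO_1} almost verbatim, with the roles of $P_1$ and $P_2$ swapped and with both challenge ciphertext components now encrypting $m'_1$ (as dictated by Hybrid $4$, from which Hybrid $5,i$ is derived). Concretely, given an adversary $A$ distinguishing Hybrid $5,i$ from Hybrid $5,i+1$, I would build an iO adversary $D$ who samples $m'_0,m'_1 \from \M$, runs $(pk_1,sk_1),(pk_2,sk_2)\from \sanPKE.\Setup(1^\kappa)$, honestly generates $crs_S\from \NIZK.\Setup(1^\kappa,R_S)$, computes $c_i^*\from\sanPKE.\Enc(pk_i,m'_1;r_i)$ for $i=1,2$, and then obtains $(crs_E,\pi_E^*)$ via the NIZK simulator on statement $x_E=(c_1^*,c_2^*)$. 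It hands $A$ the public parameters $pp=(crs_E,crs_S,pk_1,pk_2)$ and eventually the challenge $c^*=(c_1^*,c_2^*,\pi_E^*)$, aborting unless $A$'s $(m_0,m_1)$ equals $(m'_0,m'_1)$.

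Secret-key queries are answered as follows: the first $i$ queries are served with $iO(P_1)$ (hardwiring $sk_1$ and the queried $f_j$), queries strictly after the $(i+1)$-th with $iO(P_2)$ (hardwiring $sk_2$), and for the $(i+1)$-th query $D$ submits the pair $(P_1,P_2)$ — both compiled with the queried $f_{i+1}$ — to its iO challenger and forwards the returned obfuscation to $A$. Finally $D$ outputs $A$'s guess. When $(P_1,P_2)$ is a valid iO instance, the case where the iO challenger obfuscates $P_1$ reproduces Hybrid $5,i+1$ and the case $P_2$ reproduces Hybrid $5,i$, so $D$ inherits $A$'s distinguishing advantage.

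The main obstacle, as in Claim~\ref{claim:sFE_IND-CPA_iO_1}, is establishing that $(P_1,P_2)$ is a valid iO instance, i.e.\ that they agree on every input $(c'_1,c'_2,\pi_S)$ except on an event of probability at most $2\sssprob$. I would proceed by the same four-case analysis on the input: if $\NIZK.\Verify(crs_S,(c'_1,c'_2),\pi_S)=0$ both programs return \texttt{fail}; if the verification passes but the witnessed $(c_1,c_2,\pi_E)$ has $\pi_E$ failing under $crs_E$, then by statistical soundness of $\NIZK$ on $crs_S$ this occurs with probability at most $\sssprob$; if both proofs verify and $(c_1,c_2)$ are encryptions of the same plaintext, then $\sanPKE.\Dec(sk_1,c_1)=\sanPKE.\Dec(sk_2,c_2)$ and the two programs output the same value on $f_{i+1}$; finally, if $(c_1,c_2)$ encrypt different messages yet $\pi_E$ verifies under the simulated $crs_E$, the statistical simulation-soundness of $\NIZK$ on $crs_E$ forces $(c_1,c_2)=(c_1^*,c_2^*)$ with probability at least $1-\sssprob$, in which case both programs decrypt to $m'_1$ and therefore agree on $f_{i+1}(m'_1)$.

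The key difference from Claim~\ref{claim:sFE_IND-CPA_iO_1} is that the ``bad'' Case $2$ is now trivial, since both $c_1^*$ and $c_2^*$ encrypt the \emph{same} message $m'_1$, so we do not even need the IND-CPA side condition $f_{i+1}(m_0)=f_{i+1}(m_1)$ here; the union bound again yields that $(P_1,P_2)$ is a valid iO instance with probability at least $1-2\sssprob$, giving the advertised bound $\adv{A}\leq \adv{iO,D}(1-2\sssprob)$.
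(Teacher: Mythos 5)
Your proposal is correct and matches the paper's intended argument, which is simply the proof of Claim~\ref{claim:sFE_IND-CPA_iO_1} transplanted to the Hybrid~$5$ sequence: the reduction embeds the iO challenge in the $(i+1)$-th key query and argues the two programs form a valid iO instance except with probability $2\sssprob$ via the same four-case analysis. Your added observation that Case~2 becomes trivial here (both challenge components encrypt $m'_1$, so the IND-CPA side condition $f_{i+1}(m_0)=f_{i+1}(m_1)$ is not even needed) is accurate and consistent with the paper.
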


The proof of the claim follow the same structure at the proof of Claim~\ref{claim:sFE_IND-CPA_iO_1}.

\begin{claim} \label{claim:sFE_IND-CPA_ZK_2}
For any adversary $A$ that can distinguish Hybrid $5,q$ and Hybrid 6, there exists an adversary $B$ for the computational zero-knowledge property of the SSS-NIZK scheme such that the advantage of $A$ is
	\begin{align*}
	\adv{A} \leq \adv{\NIZK,B}
	\end{align*}
\end{claim}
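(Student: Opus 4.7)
\textbf{Proof proposal for Claim~\ref{claim:sFE_IND-CPA_ZK_2}.} The plan is to observe that Hybrid $5,q$ and Hybrid $6$ are syntactically identical except in how the pair $(crs_E,\pi_E^*)$ is produced: in Hybrid $5,q$ the reference string and proof are generated via $\NIZK.\Sim_1,\NIZK.\Sim_2$ on the statement $x_E=(c_1^*,c_2^*)$, whereas in Hybrid $6$ they are generated honestly via $\NIZK.\Setup$ and $\NIZK.\Prove$ on the same statement using the witness $w_E = (m'_1,r_1,r_2)$. Crucially, in both hybrids the two PKE ciphertexts encrypt the \emph{same} message $m'_1$, so a valid witness exists and the honest $\Prove$ path is well-defined; moreover in both hybrids the secret keys returned to $A$ are obfuscations of $P_1$ (using $sk_1$), so the rest of the game can be simulated without knowledge of the NIZK trapdoor.

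The reduction $B$ plays the role of the sFE IND-CPA challenger for $A$ while interacting with the zero-knowledge challenger for $\NIZK$ as follows. First, $B$ picks $m'_0,m'_1 \from \M$ uniformly at random. Then $B$ runs $(pk_i,sk_i)\from \sanPKE.\Setup(1^\kappa)$ for $i=1,2$ and computes $c_i^*\from \sanPKE.\Enc(pk_i,m'_1;r_i)$ with fresh randomness $r_1,r_2$. Next $B$ runs $crs_S\from \NIZK.\Setup(1^\kappa,R_S)$ honestly, and sends the statement $x_E=(c_1^*,c_2^*)$ together with the witness $w_E=(m'_1,r_1,r_2)$ to the ZK challenger, receiving back a pair $(crs',\pi')$; it sets $crs_E:=crs'$ and $\pi_E^*:=\pi'$. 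It gives $A$ the public parameters $pp=(crs_E,crs_S,pk_1,pk_2)$.

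Throughout the game $B$ answers every secret-key query $f_j$ by returning $iO(P_1)$ with the hardcoded constants $(crs_S,f_j,sk_1)$ (using $sk_1$, which $B$ knows). When $A$ outputs its challenge pair $(m_0,m_1)$, $B$ aborts unless $(m_0,m_1)=(m'_0,m'_1)$; otherwise it hands $A$ the challenge ciphertext $c^*=(c_1^*,c_2^*,\pi_E^*)$, answers the remaining secret-key queries in the same way, and finally forwards $A$'s guess $b'$ to the ZK challenger. Conditioning on the guess being correct (an event that occurs with probability $1/|\M|^2$ and is independent of $A$'s view), the distribution that $B$ feeds to $A$ is exactly Hybrid $5,q$ when the ZK challenger is in the simulation mode and exactly Hybrid $6$ when it is in the honest mode, so the distinguishing advantages are related by $\adv{A}\le \adv{\NIZK,B}$ (with the guessing factor absorbed into the $|\M|$ factors already visible in the statement of Lemma~\ref{lem:IND-CPA_sFE}).

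The only subtle step is verifying that the reduction never needs the simulation trapdoor for $R_S$ nor the secret key $sk_2$: the former because the sanitizer-side NIZK proofs are never generated in this experiment (they appear only inside the secret-key decryption program), and the latter because we are on the branch where all secret keys use program $P_1$. I expect the main obstacle to be purely bookkeeping: making sure the guess-and-abort framework is handled consistently with the preceding hybrids so that the factor $|\M|$ lines up with the global bound in Lemma~\ref{lem:IND-CPA_sFE}, and that the fact $m_0=m_1'$ holds conditioned on non-abort is used correctly to ensure an honest witness exists for $x_E$ in Hybrid $6$.
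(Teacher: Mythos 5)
Your reduction is correct and is essentially the paper's own argument: the paper proves this claim by saying it follows the same structure as Claim~\ref{claim:sFE_IND-CPA_ZK_1}, and your write-up is exactly that adaptation (guess $m'_0,m'_1$, build $c_1^*,c_2^*$ as encryptions of $m'_1$, hand $x_E=(c_1^*,c_2^*)$ with witness $(m'_1,r_1,r_2)$ to the ZK challenger, answer all key queries with $iO(P_1)$ using $sk_1$, and forward $A$'s guess). The only cosmetic remark is that no conditioning or $|\M|$ factor is needed at this step, since the guess-and-abort is already built into both hybrids, so $\adv{A}\leq\adv{\NIZK,B}$ follows directly; the $|\M|$ loss appears only in the final passage from Game~0/1 to Hybrid~0/6.
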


The proof of the claim follow the same structure at the proof of Claim~\ref{claim:sFE_IND-CPA_ZK_1}. \\

From these claims we can conclude that for any adversary $A$ that can distinguish Hybrid 0 and Hybrid 6, there exists an adversary $B$ for the computational zero-knowledge property of the SSS-NIZK scheme, an adversary $C$ for the IND-CPA security game of the sanitizable PKE scheme, and an adversary $D$ for iO such that the advantage of $A$ is
	\begin{align*}
	\adv{A} \leq 2\cdot\adv{\NIZK,B} + 2\cdot\adv{\sanPKE,C} + 2q\cdot\adv{iO,C} (1 - 2\sssprob)
	\end{align*}
	where $q$ is the number of the secret key queries the adversary makes during the game.

We conclude the proof of Lemma~\ref{lem:IND-CPA_sFE} by proving that for any adversary $E$ that can distinguish Game 0 and Game 1, there exists an adversary $A$ that can distinguish Hybrid 0 and Hybrid 6 such that the advantage of $E$ is $$\adv{\sanFE,E} \leq 2|\M| \cdot\adv{A}$$

Assume that any adversary $E$ can distinguish Game 0 and Game 1 with advantage $\epsilon$. 
We start by changing the game such that the challenger guesses the two messages in the beginning of the games, and aborts if the guesses is wrong. Thus, the advantage of the adversary is now $\epsilon/2|\M|$.
Observe, the two new games are identical to Hybrid 0 and Hybrid 6. Thus, we get the inequality which concludes the proof.
%

\subsection{Proof of Lemma~\ref{lem:Sanitize_sFE}}\label{app:Sanitize_sFE}
\setcounter{claimcounter}{0}

In this appendix we show that Construction~\ref{con:sanFE} fulfils the sanitizable property from Definition~\ref{def:FEnowrite}.
Lets define the following games
\paragraph{Game 0.} The sanitization game where $b=0$;
\paragraph{Game 1.} The sanitization game where $b=1$; \\

Before proving that Game 0 and Game 1 are indistinguishable, we  prove that the following sequence of hybrids are indistinguishable. 

\paragraph{Hybrid 0.} The challenger chooses a uniformly random message $m \in \M$. Then he proceeds as in Game 0 with the exception that when he receives $c$ from the adversary, he first checks that $\MasterDec(msk,c) = m$. If this is not the case, then the challenger aborts the game, otherwise he continues as in Game 0.

\paragraph{Hybrid 1.} The same as Hybrid 0, except that the challenger encrypts the message $m$ twice and sanitizes the two encryptions
	\begin{align*}
	c_i &\from \sanPKE.\Enc(pk_i,m;r_i) \quad \text{for } i = 1,2 \\
	c^*_i &\from \sanPKE.\San(pk_i,c_i;s_i) \quad \text{for } i = 1,2 
	\end{align*}
	When the challenger receives the challenge $c$ he checks that $\MasterDec(msk,c) = m$, constructs the proof $\pi_S$ honestly, and sends the respond: $(c^*_1,c^*_2,\pi_S^*)$.

\paragraph{Hybrid 2.} The same as Hybrid 1, except that after choosing the messages $m$, the challenger generates the PKE keys $(pk_i,sk_i) \from \sanPKE.\Setup(1^\kappa)$ and generates $crs_E \from \NIZK.\Setup(1^\kappa)$. 
Next, the challenger encrypts the message $m$ twice and sanitizes the two encryptions
	\begin{align*}
	c_i &\from \sanPKE.\Enc(pk_i,m;r_i) \quad \text{for } i = 1,2 \\
	c^*_i &\from \sanPKE.\San(pk_i,c_i;s_i) \quad \text{for } i = 1,2 
	\end{align*}
Then the challenger simulates the common reference string $crs_S$ and NIZK proof $\pi_S^*$ as follows
	\begin{align*}
	(crs_S,\tau) \from \NIZK.\Sim_1(1^\kappa,x_S), \quad \pi_S^* \from \NIZK.\Sim_2(crs_S, \tau, x_S)
	\end{align*}
	where $x_S = (c_1^*,c_2^*)$ is the statement we want to prove (see definition of $R_S$ in Construction~\ref{con:sanFE}). 
Thus, the public parameters are $pp = (crs_E, crs_S, pk_1, pk_2)$, the master secret key is $msk = sk_1$, and the challenger response is $(c^*_1,c^*_2,\pi_S^*)$.
When receiving the challenge $c$ from the adversary, the challenger still checks that $\MasterDec(msk,c) = m$.

\paragraph{Hybrid 3.} The same as Hybrid 2, except that after choosing the messages $m$ we generate the system parameters honestly and send them to the adversary. 
Next and before seeing the challenge, the challenger generates two PKE encryptions of the message $m$ and sanitizes them as in Hybrid 1 to get $c^*_1$ and $c^*_2$. Then he generates the proofs $\pi_E$ and $\pi_S^*$ honestly. Thus, the challenge response is $c^* = (c^*_1,c^*_2,\pi_S^*)$.
After receiving the challenge $c$, the challenger still checks that $\MasterDec(msk,c) = m$ before sending $c^*$.

\paragraph{Hybrid 4.} The challenger chooses a uniformly random message $m \in \M$. Then he proceeds as in Game 1 with the exception that when he receives $c$ from the adversary, then he checks that $\MasterDec(msk,c) = m$. \\

We now show that each sequential pair of the hybrids are indistinguishable.

\begin{claim} \label{claim:sFE_san_PKE}
Hybrid 0 and Hybrid 1 are identical.
\end{claim}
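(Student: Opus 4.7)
The plan is to reduce the identity claim directly to the perfect PKE sanitization property of $\sanPKE$ (Definition~\ref{def:PKE_sanitation}). First I would observe that whenever the challenger does not abort we have $\MasterDec(msk,c)=m$, which forces the encryption-side proof $\pi_E$ inside $c=(c_1,c_2,\pi_E)$ to verify: otherwise the sanitizer embedded in $\MasterDec$ falls back to a fresh sanitized encryption of $\bot$, so its decryption cannot land in $\M$. Given that $\pi_E$ verifies, statistical soundness of the SSS-NIZK yields that $c_1$ and $c_2$ are valid $\sanPKE$ encryptions of the \emph{same} plaintext under $pk_1$ and $pk_2$; correctness of $\sanPKE$ together with $\MasterDec(msk,c)=m$ pins this plaintext down to $m$. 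So, conditional on non-abort, the adversary's $c_1,c_2$ are honest encryptions of $m$ under $pk_1,pk_2$, exactly like the challenger-generated $\tilde c_1,\tilde c_2$ in Hybrid~1.

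Next, I would apply Definition~\ref{def:PKE_sanitation} once for each coordinate $i\in\{1,2\}$. For any valid encryption $c_i$ of $m$ under $pk_i$ and any fresh encryption $\tilde c_i$ of $m$ under $pk_i$, the definition guarantees that for every sanitization randomness $s_i$ there is an $s_i'$ (and vice-versa) with $\sanPKE.\San(pk_i,c_i;s_i)=\sanPKE.\San(pk_i,\tilde c_i;s_i')$; the symmetric quantification makes the map $s_i\mapsto s_i'$ a bijection on $\R$, so a uniform choice of sanitization randomness in either hybrid induces exactly the same distribution on $c_i^*$. Hence $(c_1^*,c_2^*)$ is identically distributed in Hybrids~0 and~1.

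It remains to check that the accompanying $\pi_S$ contributes nothing extra. In both hybrids $\pi_S$ is generated by the honest NIZK prover, with fresh independent randomness, on the very same statement $(c_1^*,c_2^*)\in R_S$; using the perfect-sanitization bijection above the Hybrid~0 witness $(c_1,c_2,s_1,s_2,\pi_E)$ can be rewritten as a legal witness of the Hybrid~1 form with the matching randomness, so a valid witness exists in both worlds and the conditional distribution of $\pi_S$ given $(c_1^*,c_2^*)$ is the same. The delicate step will be this last witness-matching, because one must argue that the honest prover's output—jointly with the sanitization randomness already spent—really has the same law regardless of which of the two equivalent witnesses is presented. Since the reference strings and keys are generated identically in both games and only the ciphertext pipeline differs, this is exactly the content of perfect PKE sanitization, so the claim follows by inspection; any residual gap at the NIZK layer is in any case absorbed by Hybrid~2, which replaces $\pi_S$ by a simulated proof.
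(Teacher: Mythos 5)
Your proposal takes essentially the same route as the paper: the paper's proof of this claim is a one-liner invoking Definition~\ref{def:PKE_sanitation} to exhibit randomness $s'_1,s'_2$ under which the adversary's encryptions sanitize to the very same $c^*_1,c^*_2$, concluding identity and noting that this randomness need only exist, not be efficiently computable. Your additional steps---using the $\MasterDec(msk,c)=m$ check together with statistical (simulation-)soundness to pin the adversary's two ciphertexts down to encryptions of $m$, and flagging that the honest $\pi_S$ is produced from different witnesses in the two hybrids---spell out points the paper leaves implicit or passes over in silence, and the latter is a genuine subtlety worth noting. One caution: Definition~\ref{def:PKE_sanitation} only asserts existence of matching randomness in each direction, which yields equality of the \emph{supports} of the sanitized ciphertexts, not a bijection on $\R$ nor equality of the induced distributions, so your bijection claim (and likewise the asserted equality of the conditional distribution of the honestly generated $\pi_S$ under two different witnesses) is a distributional leap; since the paper's own one-line proof takes exactly the same leap, this is not a gap relative to the paper, but stating it as a consequence of the symmetric quantification is an overreach.
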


\begin{proof}
Let $e_1$ and $e_2$ be the PKE encryptions from the challenge $c$. The perfect sanitization property of the PKE scheme (see Definition~\ref{def:PKE_sanitation}) states that given $c^*_1$, $c^*_2$, $e_1$, and $e_2$ there exists some randomness $s'_1$ and $s'_2$ such that
	\begin{align*}
	c^*_i = \sanPKE.\San(pk_i,e_i;s'_i) \quad \text{for } i=1,2
	\end{align*}
Thus, we can conclude that two hybrids are identical (note that this randomness only need to exist for the argument to go through, not to be efficiently computable). 
\end{proof}

\begin{claim} \label{claim:sFE_san_ZK_1}
For any adversary $A$ that can distinguish Hybrid 1 and Hybrid 2, there exists an adversary $B$ for the computational zero-knowledge property of the SSS-NIZK scheme such that the advantage of $A$ is 
\begin{align*}
	\adv{\sanFE,A} \leq \adv{\NIZK,B} 
\end{align*}
\end{claim}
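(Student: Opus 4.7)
The plan is a direct one-shot reduction to the computational zero-knowledge property of the $\NIZK$ system for the relation $R_S$, since the two hybrids differ only in how the pair $(crs_S,\pi_S^*)$ is produced: honestly via $(\NIZK.\Setup,\NIZK.\Prove)$ in Hybrid 1, and via the simulator $(\NIZK.\Sim_1,\NIZK.\Sim_2)$ in Hybrid 2. Everything else, including the random choice of $m$, the generation of the PKE keys and of $crs_E$, the encryptions $c_i$, their sanitizations $c_i^*$, and the abort condition $\MasterDec(msk,c)=m$, is identical between the two games, so no internal hybrid over these components is needed.

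First, I would have the reduction $B$ sample $m\in\M$ uniformly, generate the two PKE key pairs $(pk_i,sk_i)\from \sanPKE.\Setup(1^\kappa)$ and the reference string $crs_E \from \NIZK.\Setup(1^\kappa,R_E)$ locally, then compute the PKE ciphertexts $c_i \from \sanPKE.\Enc(pk_i,m;r_i)$ and their sanitizations $c_i^* \from \sanPKE.\San(pk_i,c_i;s_i)$ for $i=1,2$ with explicit randomness, and finally run $\pi_E \from \NIZK.\Prove(crs_E,(c_1,c_2),(m,r_1,r_2))$. By construction, $w_S := (c_1,c_2,s_1,s_2,\pi_E)$ is a valid witness for the statement $x_S := (c_1^*,c_2^*)$ in $R_S$. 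Next, $B$ forwards $(x_S,w_S)$ to its ZK challenger, obtains some $(crs',\pi')$, sets $crs_S := crs'$ and $\pi_S^* := \pi'$, and hands $pp=(crs_E,crs_S,pk_1,pk_2)$ together with $msk=sk_1$ to $A$.

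When $A$ produces its ciphertext $c$, $B$ performs exactly the same abort check as the two hybrids and, on success, answers with $(c_1^*,c_2^*,\pi_S^*)$ before outputting whatever $A$ guesses. If the ZK challenger was in real mode, $B$'s simulation is distributed exactly as Hybrid 1; if it was in simulated mode, it is distributed exactly as Hybrid 2, giving the claimed bound $\adv{\sanFE,A}\leq \adv{\NIZK,B}$. The one point worth checking, and the closest thing to an obstacle, is that $B$ must fix $(x_S,w_S)$ before it receives $crs_S$ from its challenger; this is fine because the witness is manufactured entirely from $B$'s own randomness together with $crs_E$, which is sampled independently of the $crs_S$ eventually returned, so plugging that $crs_S$ into the already-fixed statement is consistent with the zero-knowledge game's input structure.
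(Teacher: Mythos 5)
Your proposal is correct and is essentially the paper's own proof: the same one-shot reduction that samples $m$, generates the PKE keys and $crs_E$ locally, builds $(x_S,w_S)=\bigl((c_1^*,c_2^*),(c_1,c_2,s_1,s_2,\pi_E)\bigr)$, sends it to the ZK challenger, embeds the returned $(crs',\pi')$ as $(crs_S,\pi_S^*)$, and answers $A$ after the $\MasterDec(msk,c)=m$ check. Your added remark that the statement and witness are fixed before $crs_S$ is received, and that this is consistent because they depend only on $B$'s own randomness and the independently sampled $crs_E$, is a point the paper leaves implicit but handles the same way.
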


\begin{proof} 
Assume that any adversary breaks the computational zero-knowledge property with advantage at most $\epsilon$. Assume for contradiction that there exists an adversary $A$ that distinguishes the hybrids with advantage greater than $\epsilon$, then we can construct an adversary $B$ that breaks the computational zero-knowledge property with advantage greater than $\epsilon$.

$B$ begins by choosing $m \in \M$ uniformly at random. Then he computes the PKE keys $pk_1,pk_2$ and $sk_1$ honestly and generates $crs_E \from \NIZK.\Setup(1^\kappa)$. Next, he generates an honest sFE encryption of $m$: $(c_1,c_2,\pi_E) \from \Enc(pp,m)$, and sanitizes the PKE ciphertexts: $c^*_i \from \sanPKE.\San(pk_i,c_i;s_i)$ for $i=1,2$. 
Then it sends the statement $x_S = (c^*_1,c^*_2)$ and the witness $w_S = (c_1,c_2,s_1,s_2,\pi_E)$ to the challenger, and receives back a common reference string $crs'$ and a proof $\pi'$. 
$B$ then sets $crs_S = crs'$ and $\pi_S^* = \pi'$. Thus, the public parameters are $pp = (crs_S,crs_E,pk_1,pk_2)$, the master secret key is $msk=sk_1$, and the challenger response is $c^* = (c^*_1,c^*_2,\pi_S^*)$. 
$B$ sends $pp$ and $msk$ to adversary $A$, receives a challenge $c$, checks that $\MasterDec(msk,c) = m$, and if so respond with $c^*$. 

If the challenger generates $crs'$ and $\pi'$ honestly, then we are in Hybrid 1, and if the challenger simulates the proof, then we are in Hybrid 2. 
Thus, if adversary $A$ can distinguish between the hybrids with advantage greater than $\epsilon$, then adversary $B$ can break the computational zero-knowledge property with advantage greater than $\epsilon$.
\end{proof}

\begin{claim} \label{claim:sFE_san_ZK_2}
For any adversary $A$ that can distinguish Hybrid 2 and Hybrid 3, there exists an adversary $B$ for the computational zero-knowledge property of the SSS-NIZK scheme such that the advantage of $A$ is 
\begin{align*}
	\adv{\sanFE,A} \leq \adv{\NIZK,B} 
\end{align*}
\end{claim}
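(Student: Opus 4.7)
The plan is to reduce to the computational zero-knowledge property of the SSS-NIZK scheme, mirroring the reasoning of the previous claim. Comparing Hybrid 2 and Hybrid 3 step-by-step, the only difference is how the pair $(crs_S, \pi_S^*)$ is produced: in Hybrid 2 both are output by the simulator $(\NIZK.\Sim_1, \NIZK.\Sim_2)$ on the statement $x_S = (c_1^*, c_2^*)$, while in Hybrid 3 both come from $\NIZK.\Setup$ and $\NIZK.\Prove$ applied to the genuine witness $w_S = (c_1, c_2, s_1, s_2, \pi_E)$. Everything else (the random choice of $m$, the honest generation of the two $\sanPKE$ key pairs and of $crs_E$, the twice-encryption-then-sanitization, and the check $\MasterDec(msk,c) = m$) is identical across the two hybrids.

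Given a distinguisher $A$ of advantage $\epsilon$, I would build a ZK adversary $B$ as follows. $B$ samples $m \from \M$, runs $\sanPKE.\Setup$ twice and $\NIZK.\Setup$ to obtain $(pk_1, sk_1)$, $(pk_2, sk_2)$, and $crs_E$, encrypts $m$ under both public keys to get $(c_1, c_2)$ with an honest proof $\pi_E \from \NIZK.\Prove(crs_E, (c_1,c_2), (m,r_1,r_2))$, and sanitizes to get $c_i^* \from \sanPKE.\San(pk_i, c_i; s_i)$ for $i=1,2$. $B$ then assembles the statement $x_S = (c_1^*, c_2^*)$ together with the valid witness $w_S = (c_1, c_2, s_1, s_2, \pi_E)$ and forwards $(x_S, w_S)$ to its own ZK challenger. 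The challenger returns a pair $(crs', \pi')$ which is either honestly generated or simulated; $B$ sets $crs_S := crs'$ and $\pi_S^* := \pi'$.

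$B$ then hands $pp = (crs_E, crs_S, pk_1, pk_2)$ and $msk = sk_1$ to $A$. When $A$ submits its challenge ciphertext $c$, $B$ uses its knowledge of $sk_1$ to compute $\MasterDec(msk, c)$ and aborts unless this equals $m$, exactly as both hybrids do. If the check passes, $B$ returns $c^* = (c_1^*, c_2^*, \pi_S^*)$ and finally forwards $A$'s output $b'$ as its own guess. By construction, $B$'s view is distributed exactly as Hybrid 3 when the ZK challenger answers honestly and exactly as Hybrid 2 when it simulates, so $B$ inherits $A$'s advantage $\epsilon$.

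There is essentially no obstacle here: $B$ can perform the $\MasterDec$ check because it chose $sk_1$ itself, and it can produce a valid witness for $x_S$ because it generated the underlying encryptions and sanitization randomness. The argument is a verbatim twin of the preceding claim (Hybrid 1 versus Hybrid 2) with the roles of ``honest'' and ``simulated'' swapped, and yields the same bound $\adv{\sanFE,A} \leq \adv{\NIZK,B}$.
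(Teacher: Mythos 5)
Your reduction is correct and is essentially the paper's own argument: the paper proves this claim by noting it has the same structure as the preceding claim (Hybrid 1 vs.\ Hybrid 2), i.e.\ a ZK adversary that generates the keys, $crs_E$, the two encryptions, the sanitizations and the honest witness $w_S$ itself, submits $(x_S,w_S)$ to the ZK challenger, embeds the returned $(crs',\pi')$ as $(crs_S,\pi_S^*)$, performs the $\MasterDec$ check with $sk_1$, and forwards $A$'s guess. Your observation that the honest $crs_S$ is statement-independent (so the reordering of steps does not change the view) is exactly why the embedding reproduces Hybrid 3 on the honest side and Hybrid 2 on the simulated side, yielding the stated bound.
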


The proof of this claim follows the same structure as the proof of Claim~\ref{claim:sFE_san_ZK_1}. 

\begin{claim} \label{claim:sFE_san_identical}
Hybrid 3 and Hybrid 4 are identical. 
\end{claim}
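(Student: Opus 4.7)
The plan is to argue that the two hybrids produce identical joint distributions of $(pp, msk, c, c^*)$, so no adversary can tell them apart.

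First, I would observe that both hybrids generate the master message $m$, the public parameters $pp = (crs_E, crs_S, pk_1, pk_2)$, and the master secret key $msk = sk_1$ honestly using the same procedures. Thus the adversary's view at the point of producing the challenge $c$ is identically distributed in the two hybrids, which in turn means that the marginal distribution of $c$ is the same, as is the probability that the check $\MasterDec(msk, c) = m$ passes. Games that abort before producing $c^*$ are therefore already identical.

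Next I would compare the distributions of $c^*$ conditioned on the check passing. In Hybrid 4 (which is Game~1 with the added check), by definition $c^* = \San(pp, \Enc(pp, \MasterDec(msk, c)))$; after conditioning on $\MasterDec(msk, c) = m$, this equals $\San(pp, \Enc(pp, m))$ computed with fresh randomness. Unrolling the definitions of $\Enc$ and $\San$ from Construction~\ref{con:sanFE} shows that this expression produces exactly: two fresh $\sanPKE$ encryptions of $m$, their $\sanPKE.\San$-sanitizations, and an honestly generated $\NIZK.\Prove$ for $R_S$ (the inner verification of $\pi_E$ always succeeds by NIZK completeness). In Hybrid 3, the challenger constructs $c^* = (c_1^*, c_2^*, \pi_S^*)$ by literally performing these same four steps with fresh randomness, independently of $c$. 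Hence the distribution of $c^*$ given $m$ is identical in both hybrids.

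Finally, I would assemble the two observations: in Hybrid 4, $c^*$ depends on $c$ only through $\MasterDec(msk, c)$, which on the event that the check passes is fixed to $m$; in Hybrid 3, $c^*$ is sampled independently of $c$ from the very same distribution. Therefore, conditioned on the check passing, $c^*$ is independent of $c$ in both hybrids and has the same marginal law, so the joint distributions of $(c, c^*)$ agree. The main subtlety, and the only thing that needs checking, is that the ordering of sampling (in Hybrid 3, $c^*$ is prepared before seeing $c$; in Hybrid 4, it is computed after) does not matter once we condition on the check, which follows because the adversary's strategy for producing $c$ depends only on $(pp, msk)$ and is therefore independent of the coins used to build $c^*$.
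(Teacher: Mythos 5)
Your proof is correct and follows essentially the same route as the paper's: both hybrids perform the same check $\MasterDec(msk,c)=m$, and conditioned on it passing, the response in Hybrid 4 is a fresh honest encryption and sanitization of $m$, which is exactly what Hybrid 3 prepares in advance with coins independent of the adversary's view. Your additional remarks on the irrelevance of the sampling order and on NIZK completeness for the inner proof $\pi_E$ just make explicit what the paper's terser argument leaves implicit.
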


\begin{proof} 
In both hybrids, we check that the received challenge $c$ is an encryption of the message $m$, we guessed in the beginning of the game. 
In Hybrid 3, we create an honest encryption and sanitization of the message $m$ before seeing the challenge $c$. 
In Hybrid 4, we decrypt the challenge $c$ to get the message $m$ (same as the one we guessed), then we create an honest encryption and sanitization of the message. 
Thus, the two hybrids are identical. 
\end{proof}

From these claims we can conclude that for any adversary $A$ that can distinguish Hybrid 0 and Hybrid 4, there exists an adversary $B$ for the computational zero-knowledge property of the SSS-NIZK scheme such that the advantage of $A$ is at most $2 \cdot\adv{\NIZK,B}$.

We conclude the proof by proving that for any adversary $A$ that can distinguish Game 0 and Game 1, there exists an adversary $B$ that can distinguish Hybrid 0 and Hybrid 4 such that the advantage of $A$ is
	\begin{align*}
	\adv{\sanFE,A} \leq 2|\M| \cdot\adv{\NIZK,B} 
	\end{align*}

Assume that any adversary $A$ can distinguish Game 0 and Game 1 with advantage $\epsilon$. 
We start by changing the game such that the challenger guesses the message in the beginning of the games, and aborts if the guesses is wrong. Thus, the advantage of the adversary is now $\epsilon/|\M|$.
Observe, the two new games are identical to Hybrid 0 and Hybrid 4. Thus, we get the above inequality.

\end{document}